\newtheorem{thm}{Theorem}
\newtheorem{prop}[thm]{Proposition}
\newtheorem{lemma}[thm]{Lemma}
\newtheorem{cor}[thm]{Corollary}
\newtheorem{rmk}[thm]{Remark}
\newcommand{\lbar}{{\underline{l}}}
\newcommand{\mbar}{{\bar{m}}}
\newcommand{\ubar}[1]{{\underline{#1}}}
\begin{document}
\title{A space-time characterization of the Kerr-Newman metric}
\author[W. W. Wong]{Willie W. Wong}
\address{408 Fine Hall\\ Princeton University\\ Princeton, NJ}
\email{wwong@math.princeton.edu}
\subjclass[2000]{Primary 83C15; Secondary 83C22}
\keywords{Kerr-Newman, Mars-Simon tensor}

\begin{abstract}
In the present paper, the characterization of the Kerr metric found by
Marc Mars is extended to the Kerr-Newman family. A simultaneous
alignment of the Maxwell field, the Ernst two-form of the
pseudo-stationary Killing vector field, and the Weyl curvature of the
metric is shown to imply that the
space-time is locally isometric to domains in the Kerr-Newman metric.
The paper also presents an extension of Ionescu and Klainerman's null
tetrad formalism to explicitly include Ricci curvature terms. 
\end{abstract}

\maketitle

\section{Introduction}
There are relatively few known exact solutions, which have metrics
that can be easily written down in closed form, to the Einstein
equations in the asymptotically flat case. Among the most well-known
of such solutions are the Kerr family \cite{Kerr63} of axially-symmetric, stationary
vacuum space-times, which represent the exterior space-time of a
spinning massive object, and the Kerr-Newman family \cite{Newman65} of
axially-symmetric, stationary, electrovac space-times, which represent
the exterior space-time of a spinning, electrically charged, massive
object. A natural question to ask about special solutions such as
these is whether they are stable or unique, where stability or
uniqueness is chosen among some
suitable class. While much progress had been made toward the
uniqueness problem, less can be said about the stability problem.  

It should not be surprising that the first results of this kind came
in the context of the special static, spherically symmetric members of
the Kerr and Kerr-Newman families: the Schwarzschild and
Reissner-Nordstr\"om solutions respectively. It has been known since
the 1920's \cite{JoRa05} that the Schwarzschild family completely
parametrizes the spherically symmetric solutions to Einstein's vacuum
equations; a similar result is later obtained for the
Reissner-Nordstr\"om family for spherically symmetric solutions of the
electrovac equations. These results now go under the name of Birkhoff's
theorem. In particular, Birkhoff's theorem essentially states that
spherical symmetry implies staticity and asymptotic flatness of the 
space-time. The next step forward came in the 1960's, when Werner Israel 
established \cite{Israel67,Israel68} what is, loosely speaking, the
converse of Birkhoff's theorem: a static, asymptotically flat
space-time that is regular on the event horizon must be spherically
symmetric. Brandon Carter's 1973 Les Houches report \cite{CarterLH}
finally sparked an attempt to similarly characterize the Kerr and
Kerr-Newman families: he showed that asymptotically flat, stationary,
and axially-symmetric
solutions to the vacuum (electrovac) equations form a two-parameter (three-)
family. Between D.~C.~Robinson \cite{Robinson75}, P.~O.~Mazur \cite{Mazur82}, and
G.~L.~Bunting \cite{Bunting83}, Carter's program was completed and the
Kerr and Kerr-Newman families are established as essentially the
unique solutions to the asymptotically flat, stationary,
axially-symmetric Einstein's equations. 

A different approach was taken by Walter Simon \cite{Simon84a,
Simon84b} to study the characterization of the Kerr and Kerr-Newman
families among stationary solutions. He constructed three-index
tensors that are, heuristically speaking, complexified versions of the
Cotton tensors on the stationary spatial slices (to be more precise,
the manifold of trajectories generated by the time-like Killing vector
field). By considering the
multiple moments of a stationary, asymptotically flat end, Simon
showed that the vanishing of the three-index tensor is equivalent to
the multiple moments being equal to those of the Kerr and Kerr-Newman
families. Simon's work was
later extended by Marc Mars \cite{Mars99} to the construction of the
so-called Mars-Simon tensor, which is a four-index tensor constructed
relative to space-time quantities, as opposed to Simon's original
construction relative to the induced metric on the spatial slices. As
was shown by Mars, the vanishing of the Mars-Simon tensor indicates
an alignment of the principal null directions of the Ernst two-form
(for definition see Section \ref{SectSetup}) with those of the Weyl
curvature tensor, with the particular proportionality factor allowing
one to write down the local form of the metric explicitly and verify
that the space-time is locally isometric to the Kerr space-time. 

The method employed by Marc Mars and the present paper bears much
similarity to the work of R.~Debever, N.~Kamran, and R.~G.~McLenaghan
\cite{DebeverKamranMcLenaghan83}, in which the authors assumed (i) the
space-time is of Petrov type $D$, (ii) the principal null directions of the
Maxwell tensor align (nonsingularly) with that of the Weyl tensor,
(iii) a technical hypothesis to allow the use of the generalized
Goldberg-Sachs theorem (see Chapter 7 in \cite{ExactSolutions} for
example and references), and integrated the Newman-Penrose variables to
arrive at explicit local forms of the metric in terms of several
constants that can be freely specified. In view of the work of Debever
et al., the assumptions taken in this paper merely guarantees that
their hypotheses (i) and (ii) hold, and that (iii) becomes ancillary to
a stronger condition derived herein that circumvents the
Goldberg-Sachs theorem as well as prescribes definite values for all
but three (mass, angular momentum, and charge) of the free constants.

In the current work, we extend the construction of Mars to define a
four-tensor analogous to the Mars-Simon tensor \emph{and}, in addition, a two-form such that
their simultaneous vanishing guarantees the simultaneous
alignment of the principal null directions of the Ernst two-form, the
Maxwell field, and the Weyl tensor, with proportionality factors that
allow us to write down the local form of the metric and demonstrate a
local isometry to Kerr-Newman space-time. It is worth mentioning the
work of Donato Bini et al.~\cite{Bini04} in which they keep the same
definition of the Mars-Simon tensor, while modifying the definition of
the Simon three-tensor with a source term that corresponds to the
stress-energy tensor associated to the electromagnetic field. They
were then able to show that the vanishing of the modified Simon tensor
implies also the alignment of principal null directions. In the
present work, we absorb the source term into the Mars-Simon tensor
itself using only space-time quantities by sacrificing a need for
an auxiliary two-form, thus we are able to argue in much of the same
way as Mars \cite{Mars99} an explicit computation for the metric expressed
in local co\"ordinates, thereby giving a characterization of the
Kerr-Newman space-time.
 
In a forthcoming paper we hope to use this characterization, combined
with the Carleman estimate techniques of Ionescu and Klainerman
\cite{IoKl07a, IoKl07b} to obtain an analogous uniqueness result for
smooth stationary charged black holes. 

We should note that the characterization found here is
essentially local, analogous to Theorem 1 in \cite{Mars00} (see Theorem
\ref{ThmMain} below). The global feature of the space-time, namely
asymptotic flatness, is only used to a priori prescribe the values of
certain constants using the mass and charge at infinity (compare
Corollary \ref{CorMain} below). 
The present paper is organized as follows: In Section \ref{SectSetup},
we first review the concept of complex anti-self-dual two-forms and
their properties, then we present the basic assumptions on the
space-time under consideration, followed by a quick review of Killing
vector fields, and conclude with the principal definitions and a
statement of the main theorem and corollary. In Section \ref{SectMainProof}, we
demonstrate the technique of the proof for the main theorem through explicit
construction of a local isometry, using the tools of the null tetrad
formalism of Ionescu and Klainerman \cite{IoKl07a}. In Section
\ref{SectGlobalRes} we prove the corollary. We also include
an appendix extending the framework established by Ionescu and
Klainerman to explicitly include terms coming from Ricci curvature
(terms which were not necessary in \cite{IoKl07a,IoKl07b} since they
consider vacuum Einstein metrics), and including a dictionary between
the coefficients in this formalism and those of the Newman-Penrose
system. 

The author would like to thank his thesis advisor, Sergiu Klainerman,
for pointing him to this problem; and Pin Yu, for many valuable
discussions. This manuscript also owes much to the detailed readings
and suggestions by the anonymous referee. The research for this work
was performed while the author was supported by an NSF Graduate Research
Fellowship. 

\section{Set-up and definitions}\label{SectSetup}

\subsection{Complex anti-self-dual two-forms}
On a four dimensional Lorentzian space-time $(\mathcal{M},g_{ab})$, 
the Hodge-star operator
$*:\Lambda^2T^*\mathcal{M}\to\Lambda^2T^*\mathcal{M}$ is a linear transformation on the
space of two-forms. In index notation, 
\[ {}^*X_{ab} = \frac{1}{2}\epsilon_{abcd}X^{cd} \]
where $\epsilon_{abcd}$ is the volume form and index-raising is done
relative to the metric $g$. Since we take the metric signature to be
$(-,+,+,+)$, we have that $** = - \mathop{Id}$, which introduces a
complex structure on the space $\Lambda^2T^*\mathcal{M}$. By complexifying and
extending the action of $*$ by linearity, we
can split $\Lambda^2T^*\mathcal{M}\otimes_\mathbb{R}\mathbb{C}$ into the
eigenspaces $\Lambda_{\pm}$ of $*$ with eigenvalues $\pm i$. We say that 
an element of 
$\Lambda^2T^*\mathcal{M}\otimes_\mathbb{R}\mathbb{C}$ is \emph{self-dual} if it
is an eigenvector of $*$ with eigenvalue $i$, and we say that it is
\emph{anti-self-dual} if it has eigenvalue $-i$. It is easy to check that
given a real-valued two-form $X_{ab}$, the two form
\begin{equation}\label{EqDefASDPart}
\mathcal{X}_{ab} := \frac{1}{2}(X_{ab} + i {}^*X_{ab})
\end{equation}
is anti-self-dual, while its complex conjugate
$\bar{\mathcal{X}}_{ab}$ is self-dual. 

In the sequel we shall, in general, write elements of
$\Lambda^2T^*\mathcal{M}$
with upper-case Roman letters, and their corresponding anti-self-dual
forms with upper-case calligraphic letters. The projection
\[ X_{ab} = \mathcal{X}_{ab} + \bar{\mathcal{X}}_{ab} \]
is a natural consequence of (\ref{EqDefASDPart}). 

Here we record some product properties \cite{Mars99} of two-forms:
\begin{subequations}
\begin{align} 
X_{ac}Y_b{}^c - {}^*X_{ac}{}^*Y_b{}^c &= \frac12 g_{ab} X_{cd}Y^{cd}\\
X_{ac}{}^*X_b{}^c &=\frac14 g_{ab}X_{cd}{}^*X^{cd} \\
\mathcal{X}_{ac}\mathcal{Y}_b{}^c + \mathcal{Y}_{ac}\mathcal{X}_b{}^c
& = \frac12g_{ab}\mathcal{X}_{cd}\mathcal{Y}^{cd} \\
\label{EqSelfDualSelfProduct} \mathcal{X}_{ac}\mathcal{X}_b{}^c &=
\frac14g_{ab}\mathcal{X}_{cd}\mathcal{X}^{cd} \\
\mathcal{X}_{ac}X_b{}^c - \mathcal{X}_{bc}X_a{}^c &=0\\
\mathcal{X}_{ab}Y^{ab} &= \mathcal{X}_{ab}\mathcal{Y}^{ab}\\
\mathcal{X}_{ab}\bar{\mathcal{Y}}^{ab} &= 0
\end{align}
\end{subequations}

Now, the projection operator
$\mathcal{P}_\pm:\Lambda^2T^*\mathcal{M}\otimes_\mathbb{R}\mathbb{C}\to \Lambda_{\pm}$  can be given in
index notation as 
\begin{align*}
(\mathcal{P}_+X)_{ab} & =
\bar{\mathcal{I}}_{abcd}X^{cd} \\
(\mathcal{P}_-X)_{ab} & = \mathcal{I}_{abcd}X^{cd}\\
\mbox{where}~ \mathcal{I}_{abcd} & = \frac14(g_{ac}g_{bd} -
g_{ad}g_{bc} + i \epsilon_{abcd})
\end{align*}
With the complex tensor $\mathcal{I}_{abcd}$, we can define with the
notation 
\begin{equation}
(\mathcal{X}\tilde\otimes\mathcal{Y})_{abcd} :=
\frac12\mathcal{X}_{ab}\mathcal{Y}_{cd} +
\frac12\mathcal{Y}_{ab}\mathcal{X}_{cd} -
\frac13\mathcal{I}_{abcd}\mathcal{X}_{ef}\mathcal{Y}^{ef}
\end{equation}
a symmetric bilinear product taking two anti-self-dual forms to a
complex $(0,4)$-tensor. It is simple to verify that such a tensor
automatically satisfies the algebraic symmetries of the Weyl conformal
tensor: i) it is antisymmetric in its first two, and last two, indices
$(\mathcal{X}\tilde\otimes\mathcal{Y})_{abcd} = -
(\mathcal{X}\tilde\otimes\mathcal{Y})_{bacd} = -
(\mathcal{X}\tilde\otimes\mathcal{Y})_{abdc}$
ii) it is symmetric swapping the first two and the last two sets of
indices $(\mathcal{X}\tilde\otimes\mathcal{Y})_{abcd} =
(\mathcal{X}\tilde\otimes\mathcal{Y})_{cdab}$ iii) it verifies the
first Bianchi identity $(\mathcal{X}\tilde\otimes\mathcal{Y})_{abcd} +
(\mathcal{X}\tilde\otimes\mathcal{Y})_{bcad} +
(\mathcal{X}\tilde\otimes\mathcal{Y})_{cabd} = 0$ and iv) it is
trace-free $(\mathcal{X}\tilde\otimes\mathcal{Y})_{abcd}g^{ac} = 0$.
For lack of a better name, this product will be referred to as a
\emph{symmetric spinor product}, using the fact that if we represent
in spinor co\"ordinates $\mathcal{X}_{ab} = f_{AB}\epsilon_{A'B'}$ and
$\mathcal{Y}_{ab} = h_{AB}\epsilon_{A'B'}$ (where $f_{AB} = f_{BA}$,
and similarly for $h_{AB}$), the product can be written
as
\[ (\mathcal{X}\tilde\otimes\mathcal{Y})_{abcd} =
f_{(AB}h_{CD)}\epsilon_{A'B'}\epsilon_{C'D'} \]
where $(\cdot)$ denotes complete symmetrization of the indices. 

\subsection{The basic assumptions on the space-time and some
notational definitions}

We consider a space-time $(\mathcal{M},g_{ab})$ and a Maxwell two-form
$H_{ab}$ on $\mathcal{M}$ satisfying the following basic assumptions
\begin{itemize}
\item[(A1)] $\mathcal{M}$ is a four-dimensional, orientable,
paracompact, simply-connected manifold.
\item[(A2)] $g_{ab}$ is a smooth Lorentzian metric on $\mathcal{M}$.
$H_{ab}$ is a smooth two-form.
\item[(A3)] The metric $g_{ab}$ and the
Maxwell form $H_{ab}$ satisfy the Einstein-Maxwell field equations.
Namely
\begin{align*}
R_{ab} &= T_{ab}\\
\nabla_{[c}H_{ab]} &= 0\\
\nabla^aH_{ab} & = 0
\end{align*} 
where $T_{ab} = 2 H_{ac}H_b{}^c - \frac12 g_{ab}H_{cd}H^{cd} =
4\mathcal{H}_{ac}\bar{\mathcal{H}}_b{}^c$ is the rescaled
stress-energy tensor, which is traceless and divergence free by
construction, and square brackets $[\cdot]$ around indices
means full anti-symmetrization. 
\item[(A4)]  $(\mathcal{M},g_{ab})$ admits a non-trivial smooth Killing
vector field $t^a$, 
and the Maxwell field  $H_{ab}$ inherits the
Killing symmetry, i.e.~its Lie derivative $\mathcal{L}_tH_{ab} = 0$.
\end{itemize}
In the sequel we will state a local and a global version of the result. For the
local theorem, we need to assume
\begin{itemize}
\item[(L)] the Killing vector field $t^a$ is time like somewhere on
the space-time $(\mathcal{M},g_{ab})$, and $H_{ab}$ is non-null on
$\mathcal{M}$. (In other words,
writing the anti-self-dual part $\mathcal{H}_{ab} = \frac12(H_{ab}+ i
{}^*H_{ab})$, we require $\mathcal{H}_{ab}\mathcal{H}^{ab} \neq 0$
everywhere on $\mathcal{M}$.)
\end{itemize}
And for the global result, we assume
\begin{itemize}
\item[(G)] that $(\mathcal{M},g_{ab})$ contains a stationary
asymptotically flat end $\mathcal{M}^\infty$ where $t^a$ tends to a time
translation at infinity, with the Komar mass $M$ of $t^a$ non-zero in
$\mathcal{M}^\infty$. We also assume the total charge $q = \sqrt{q_E^2 +
q_B^2}$ of the Maxwell field, where $q_E$ and $q_B$ denote the
electric and magnetic charges, is non-zero in $\mathcal{M}^\infty$.
\end{itemize}

\begin{rmk}We quickly recall the definition of stationary
asymptotically flat end: $\mathcal{M}^\infty$ is an open submanifold
of $\mathcal{M}$
diffeomorphic to $(t_0,t_1)\times (\mathbb{R}^3\setminus \bar{B}(R))$ with
the metric stationary in the $t$ variable, $\partial_t g_{ab} = 0$,
and satisfying the decay condition
\[ |g_{ab} - \eta_{ab}| + |r\partial g_{ab}| \leq C r^{-1} \]
for some constant $C$; $r$ is the radial co\"ordinate on
$\mathbb{R}^3$ and $\eta$ is the Minkowski metric. In addition, we
will also require a decay condition for the Maxwell field
\[ |H_{ab}| + |r\partial H_{ab}| \leq C' r^{-2} \]
for some constant $C'$. 
\end{rmk}

We record here some notational definitions: $R_{abcd}$ is the Riemann
curvature tensor, with the standard decomposition
\[ R_{abcd} = W_{abcd} + \frac12 (R_{ac}g_{bd} + R_{bd}g_{ac} -
R_{ad}g_{bc}- R_{bc}g_{ad}) - \frac16R(g_{ac}g_{bd} - g_{ad}g_{bc}) \]
where $W_{abcd}$ is the conformal (Weyl) curvature tensor, 
$R_{ac} = R_{abcd}g^{bd}$ the Ricci curvature tensor, and $R$
the scalar curvature. For the electro-vac system, this reduces to
\[ R_{abcd} = W_{abcd} + \frac12(T_{ac}g_{bd} + T_{bd}g_{ac} -
T_{ad}g_{bc} - T_{bc}g_{ad}) \]
For a $(0,4)$ tensor $K_{abcd}$ satisfying the algebraic symmetries of the
Riemann tensor, we define the left- and right-duals 
\begin{align*}
{}^*K_{abcd} & = \frac12\epsilon_{abef}K^{ef}{}_{cd} \\
K^*{}_{abcd} & = \frac12K_{ab}{}^{ef}\epsilon_{efcd}
\end{align*}
In general the left- and right-duals are \emph{not} equal. If, in
addition,
$K_{abcd}$ is also trace-free (i.e., is a Weyl field in the sense
defined in \cite{ChKl93}), a simple calculation shows
that the left- and right-duals are equal. Therefore we can define the
anti-self-dual complex Weyl curvature tensor
\begin{equation*}
\mathcal{C}_{abcd} = \frac12 (W_{abcd} + i{}^*W_{abcd})
\end{equation*}
It may be of independent interest to note that in the electro-vac case
\[ \mathcal{C}_{abcd} = (\mathcal{P}_-R\mathcal{P}_-)_{abcd} =
\mathcal{I}_{abef}R^{efgh}\mathcal{I}_{ghcd} \]
(when the scalar curvature $R\neq 0$, it also presents a contribution
to this projection).
In other words, treating the Riemann curvature tensor as a map from
$\Lambda^2T^*\mathcal{M}\otimes_{\mathbb{R}}\mathbb{C}$ to itself, the Weyl curvature takes
$\Lambda_+\to\Lambda_+$ and $\Lambda_-\to\Lambda_-$, whereas the
Kulkani-Nomizu product of Ricci curvature with the metric induces a
intertwining map that takes $\Lambda_-\to\Lambda_+$ and vice versa. 

Lastly, we define the following notational shorthand for
Lorentzian ``norms'' of tensor fields. For an arbitrary
$(j,k)$-tensor $Z^{a_1a_2\ldots a_j}_{b_1b_2\ldots b_k}$, we write
\[ Z^2 = g_{a_1a_1'}g_{a_2a_2'}\cdots g_{a_ja_j'}g^{b_1b_1'}\cdots
g_{b_kb_k'} Z^{a_1a_2\ldots a_j}_{b_1b_2\ldots b_k} Z^{a_1'a_2'\ldots
a_j'}_{b_1'b_2'\ldots b_k'} \]
for the inner-product of $Z^{\cdots}_{\cdots}$ with itself. Note that in
the semi-Riemannian setting, $Z^2$ can take arbitrary sign.

\subsection{The Killing symmetry}

Given $(\mathcal{M},g_{ab})$ a smooth, four-dimensional Lorentzian
manifold, and assuming that it admits a smooth Killing vector field
$t^a$, we can define the Ernst two-form 
\begin{equation}
F_{ab} = \nabla_at_b - \nabla_bt_a = 2\nabla_at_b
\end{equation}
the second equality a consequence of the Killing equation. As is
well-known, the Ernst two-form satisfy
\begin{equation}
\nabla_cF_{ab} = 2 \nabla_c\nabla_at_b = 2R_{dcab}t^d
\end{equation}
This directly implies a divergence-curl system (in other words, a
Maxwell equation with source terms) satisfied by the
two-form
\begin{align*}
\nabla_{[c}F_{ab]} & = 0\\
\nabla^aF_{ab} &= -2 R_{db}t^d
\end{align*}
Here we encounter one of our primary differences from \cite{Mars99}:
a space-time satisfying the Einstein vacuum equations is Ricci-flat,
and the above implies that the Ernst two-form satisfies the
sourceless Maxwell equations. In particular, for the vacuum case, we
have
\[ \nabla_{[c}\mathcal{F}_{ab]} = 0\]
and a calculation then verifies that 
\[ \nabla_{[c}(\mathcal{F}_{a]b}t^b) = 0~. \]
Thus from simple-connectivity, an Ernst potential $\sigma$ is
constructed for
\[ \nabla_a\sigma = \mathcal{F}_{ab}t^b~. \]

In the non-vacuum case that this paper deals with, this construction
cannot be exactly carried through. However, the essence of the
construction above is the following fact disjoint from the
semi-Riemannian structure of our setup: consider a smooth manifold
$\mathcal{M}$, a smooth differential form $X$, and a smooth 
vector-field $v$. We have the defining relation
\[ \mathcal{L}_vX = i_v\circ dX + d\circ i_vX \]
where $\mathcal{L}_v$ stands for the Lie derivative relative to the
vector-field $v$, and $i_v$ is the interior derivative. Thus if $X$ 
is a closed form, and $v$ is a symmetry of $X$ (i.e. 
$\mathcal{L}_vX = 0$), we must have $i_vX$ is closed also. 

Applying to the Einstein-Maxwell equations, we take $X$ to be the
anti-self-dual Maxwell form $\mathcal{H}_{ab}$, which by Maxwell's
equations is closed. The vector-field $v$ is naturally the Killing
field $t^a$, so we conclude that the complex-valued one-form
$\mathcal{H}_{ab}t^a$ is closed, and since $\mathcal{M}$ is taken to
be simply connected, also exact. In the sequel we will use the
complex-valued function $\Xi$, which is defined by
\begin{equation}\label{EqDefXi}
\nabla_b\Xi = \mathcal{H}_{ab}t^a~.
\end{equation}
Notice that \emph{a priori} $\Xi$ is only defined up to the addition
of a constant. In the global case (making the assumption (G)), we can
use the asymptotic decay of the Maxwell field to require that
$\Xi\to 0$ at spatial infinity and fix $\Xi$ uniquely. The 
function $\Xi$ takes the 
place of the Ernst potential $\sigma$ used in \cite{Mars99}. 

Lastly, we record here two calculations used in the sequel: first we
write down explicitly the derivative of $\mathcal{F}_{ab}$
\begin{align}
\nonumber \nabla_c\mathcal{F}_{ab} &= (R_{dcab} + i R^*_{dcab})t^d\\
\label{EqDerASDErnst} & = 2\mathcal{C}_{dcab}t^d +
\frac12(T_{ad}g_{bc}+T_{bc}g_{ad}-T_{ac}g_{bd}-T_{bd}g_{ac})t^d \\
\nonumber & \qquad +
\frac{i}{2}(T_d{}^e\epsilon_{ecab} + T_c{}^f\epsilon_{dfab})t^d
\end{align}
we will also need the following fact about Killing vector fields.
Consider the product ${}^*F_{ab}{}^*F_{cd} = \frac14
\epsilon_{abef}\epsilon_{cdgh}F^{ef}F^{gh}$. We can expand the product
of the Levi-Civita symbol/volume form in terms of the metric:
\[ \epsilon_{ijkl}\epsilon^{qrst} = -24 g_i^{[q}g_j^rg_k^sg_l^{t]} \]
By explicit computation using this expansion, we arrive at the fact
\begin{align*}
{}^*F_{mx}t^x{}^*F_{ny}t^y & = \frac12
F_{ab}F^{ab} (t_mt_n - t_xt^x g_{mn}) + g_{mn} F_{xa}t^xF^{ya}t_y -
F_{nx}t^xF_{my}t^y \\
& \qquad + F^{bx}t_xt_mF_{nb} + F^{bx}t_xt_nF_{mb} + t_xt^x
F_{ma}F_n{}^a
\end{align*}
Writing $t^2 = t_at^a$, we use the fact $\nabla_bt^2 = t^aF_{ba}$ and
obtain equation (13) from \cite{Mars99}:
\begin{align}
\label{EqFStarFStar} {}^*F_{mx}t^x{}^*F_{ny}t^y & = \frac12
F_{ab}F^{ab} ( t_mt_n - g_{mn}t^2 ) + g_{mn}\nabla_at^2\nabla^at^2 -
\nabla_mt^2\nabla_nt^2 \\
\nonumber & \qquad + t_mF_{nb}\nabla^bt^2 + t_nF_{mb}\nabla^bt^2 +
t^2F_{ma}F_n{}^a
\end{align}

\subsection{The Mars-Simon tensor for Kerr-Newman space-time;
statement of the main theorems}

We first state the main result of this paper, which establishes a
purely local characterization of the Kerr-Newman metric. This
formulation is comparable to that of Theorem 1 in \cite{Mars00}. The
conditions given below on the constants $C_2$ and $C_4$ are analogous
to the conditions for the constants $l$ and $c$ in the aforementioned
theorem.
\begin{thm}[Main Local Theorem]\label{ThmMain}
Assuming (A1)-(A4) and (L), and assuming that there exists a complex
scalar $P$, a normalization for $\Xi$, and a complex constant $C_1$
such that
\begin{enumerate}
\item $P^{-4} = - C_1^2 \mathcal{H}_{ab}\mathcal{H}^{ab}$
\item $\mathcal{F}_{ab} = 4\bar{\Xi}\mathcal{H}_{ab}$
\item $\mathcal{C}_{abcd} =
3P(\mathcal{F}\tilde\otimes\mathcal{H})_{abcd}$
\end{enumerate}
then we can conclude
\begin{enumerate}
\item there exists a complex constant $C_2$ such that $P^{-1} - 2\Xi =
C_2$;
\item there exists a real constant $C_4$ such that $t_at^a + 4|\Xi|^2
= C_4$.
\end{enumerate}
If $C_2$ further satisfies that $C_1\bar{C}_2$ is real, and that $C_4$
is such that $|C_2|^2 - C_4 = 1$, then we also have
\begin{enumerate}
\setcounter{enumi}{2}
\item $\mathfrak{A} = |C_1|^2 P\bar{P}
(\Im C_1\nabla P)^2 + (\Im C_1P)^2$ is a positive real constant on
the manifold\footnote{$\Im$
will
be used to denote the imaginary part of an expression. Notice
that $\mathfrak{A}$ is well defined even though $C_1$ can be replaced by $-C_1$.
One should observe the freedom to replace $C_1$ by $-C_1$ also
manifests in the remainder of this paper; it shall not be further
remarked upon.},
\item and 
$(\mathcal{M},g_{ab})$ is locally isometric to a Kerr-Newman
space-time of total charge $|C_1|$, angular momentum
$\sqrt{\mathfrak{A}}C_1\bar{C}_2$, and mass $C_1\bar{C}_2$.
\end{enumerate}
\end{thm}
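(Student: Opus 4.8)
\medskip

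The plan is to follow the structure of Mars's argument in \cite{Mars99}, working throughout with the null tetrad formalism of the appendix but carrying the Ricci (equivalently, stress-energy) terms explicitly. The first task is to extract the two conservation laws (1) and (2). Hypothesis (iii), $\mathcal{C}_{abcd} = 3P(\mathcal{F}\tilde\otimes\mathcal{H})_{abcd}$, together with hypothesis (ii), $\mathcal{F}_{ab} = 4\bar\Xi\mathcal{H}_{ab}$, collapses the Weyl tensor into the form $12 P\bar\Xi\,(\mathcal{H}\tilde\otimes\mathcal{H})_{abcd}$, so the space-time is algebraically special (Petrov type $D$ or more degenerate) with the repeated principal null directions aligned with those of $\mathcal{H}_{ab}$; hypothesis (i) fixes the proportionality factor in terms of $\mathcal{H}_{ab}\mathcal{H}^{ab}$. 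I would then feed $\mathcal{C}_{abcd}$ in this form into the second Bianchi identity (the divergence of the Weyl tensor equals a Ricci/Maxwell term by (A3)), and use the divergence-curl system for $\mathcal{F}_{ab}$ from $\nabla_c\mathcal{F}_{ab} = 2\mathcal{C}_{dcab}t^d + (\text{stress-energy terms})$ in \eqref{EqDerASDErnst}, contracting suitably against $t^a$ and against $\mathcal{H}_{ab}$. The definitions $\nabla_b\Xi = \mathcal{H}_{ab}t^a$ and $P^{-4} = -C_1^2\mathcal{H}_{ab}\mathcal{H}^{ab}$ are exactly what is needed to convert these contracted identities into statements that $d(P^{-1} - 2\Xi) = 0$ and $d(t_at^a + 4|\Xi|^2) = 0$; the constants $C_2$ and $C_4$ are the resulting integration constants. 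This is the computational heart and the main obstacle: one must check that every Ricci-curvature correction term — absent in the vacuum case — either cancels against a term produced by the source in $\nabla^a\mathcal{F}_{ab} = -2R_{db}t^d$ or reassembles into a total derivative, and this bookkeeping is where the absorption of the Bini et al.\ source term into the Mars-Simon tensor has to be verified concretely.

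\medskip

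With (1) and (2) in hand, the strategy is to construct local coordinates explicitly. Using $C_1\bar C_2\in\mathbb{R}$ and $|C_2|^2 - C_4 = 1$, I would introduce $P$ (equivalently $\Xi = \tfrac12(P^{-1} - C_2)$) as a complex coordinate function; from $\nabla_a\Xi = \mathcal{H}_{ab}t^a$ and the algebraic structure above, $t^a$ is determined and one reads off the behaviour of $\nabla P$. The quantity $\mathfrak{A} = |C_1|^2 P\bar P\,(\Im C_1\nabla P)^2 + (\Im C_1 P)^2$ should be shown constant by differentiating and repeatedly using (1), (2), \eqref{EqFStarFStar}, and the identity $\nabla_b t^2 = t^aF_{ba}$; this is the analogue of the "Ernst-potential" rigidity identity in Mars's paper and in practice follows from combining the gradient of $t^2 = C_4 - 4|\Xi|^2$ with the expression for $\mathcal{F}_{ab}$. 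Once $\mathfrak{A}$ is known to be a positive constant, it plays the role of $a^2$ (the squared angular-momentum parameter) in the final metric.

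\medskip

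The final step is the identification with Kerr-Newman. Writing $P^{-1} = \tfrac12(r - i a\cos\theta)^{-1}$-type expressions — more precisely, using $\Re(C_1 P^{-1})$ and $\Im(C_1 P^{-1})$ as the coordinates $r$ and $a\cos\theta$ after the normalizations $C_1\bar C_2\in\mathbb{R}$, $|C_2|^2 - C_4 = 1$ have been imposed — I would show that $t^a$, its norm $t^2 = C_4 - 4|\Xi|^2$, the twist (encoded in $\mathcal{F}_{ab}$), and the Maxwell field $\mathcal{H}_{ab}$ all take exactly the Kerr-Newman form with mass $m = C_1\bar C_2$, charge $|C_1|$, and $a^2 = \mathfrak{A}$, so that the metric reconstructed from these data (via the 1+1+2 or Weyl-Lewis-Papapetrou-type decomposition adapted to $t^a$ and the principal null directions) is locally the Kerr-Newman metric. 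The consistency of the whole system — in particular that the metric so constructed is genuinely a solution, which by uniqueness forces it to be Kerr-Newman with these parameters — is guaranteed because (A3), (1), (2), and the constancy of $\mathfrak{A}$ exhaust the integrability conditions; comparison with the explicit local forms of Debever–Kamran–McLenaghan \cite{DebeverKamranMcLenaghan83} then pins down that no additional free constants survive beyond mass, charge and angular momentum. I expect the angular-momentum normalization (matching $\sqrt{\mathfrak{A}}\,C_1\bar C_2$ to the Komar/ADM angular momentum) to require a small separate computation but no conceptual difficulty.
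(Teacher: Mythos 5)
Your outline follows the same broad arc as the paper (derive the two conservation laws, show $\mathfrak{A}$ is constant, build coordinates), but two of the load-bearing steps are glossed in a way that hides genuine gaps. First, the constancy of $\mathfrak{A}$ cannot be obtained by ``differentiating and repeatedly using (1), (2), (\ref{EqFStarFStar})'': those are first-order identities in $P$ and $t^a$, whereas $\mathfrak{A}$ involves $(\nabla z)^2$ and its constancy requires second-order (connection-level) information. The paper's route (Lemma \ref{LemmaNormGradyz}) is to pass to the null tetrad adapted to the principal null directions of $\mathcal{H}_{ab}$, use the Bianchi identities \emph{with the explicit proportionality factors} to show $3\Psi\pm\Phi\neq 0$ on open sets and hence $\xi=\vartheta=\ubar\xi=\ubar\vartheta=0$ (the generalized Goldberg--Sachs/Kundt--Thompson theorem only gives $\xi\vartheta=0$ in the electrovac setting, so this shear-free property is exactly the point where knowledge of the factor $P$, and not mere alignment, is indispensable), and only then compute $DA=0$, $\delta(A+z^2)=0$ for $A=|C_1P|^2(\nabla z)^2$ from the null structure equations. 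This step also leans on Lemma \ref{LemmaNonNullTF} and Corollary \ref{CorPropertiesyz} ($t^a$ non-null densely, $\nabla y$ and $\nabla z$ never parallel), which your proposal never establishes; the same non-degeneracy is needed already in your first paragraph, where the derivation of $P^{-1}-2\Xi=\mathrm{const}$ requires dividing by $\bar C_3-2\bar\Xi$, i.e.\ knowing $\Xi$ is not locally constant.

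Second, the identification with Kerr--Newman is asserted rather than proved: appealing to the Debever--Kamran--McLenaghan local forms plus ``uniqueness'' does not by itself kill the extra parameters (NUT charge, acceleration, cosmological-type constants) that survive in their classification, nor does it produce the claimed values of mass, charge and angular momentum. The paper instead constructs the isometry explicitly: it introduces $n^a$ and $b^a=\nabla^a z/(\nabla z)^2$, verifies by tetrad computation that $\{t,\lbar,b,n\}$ commute and are independent on $\mathcal{M}_\lbar\cap\mathcal{M}_\mathfrak{A}$, computes the full matrix of inner products, and matches it to the Kerr--Newman metric in Kerr coordinates with $y=r$, $z=a\cos\theta$ (note: the real and imaginary parts of $C_1P$, not of $C_1P^{-1}$ as you write). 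Your proposal also ignores the degenerate configurations that the paper must treat separately: the hypersurface-orthogonal case $\mathfrak{A}=0$ (Proposition \ref{PropLocalIsometryRN}, Reissner--Nordstr\"om), and the exceptional sets where $t_al^a=t_a\lbar^a=0$ or $z^2=\mathfrak{A}$, across which the isometry is extended via Lemma \ref{LemmaKillingVanishing} (these are fixed-point sets of the Killing field $n^a$). Without these pieces the argument establishes at best a local isometry on an unspecified open subset, not conclusion (4) of the theorem.
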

The local theorem yields, via a simple argument, the following
characterization of the Kerr-Newman metric among stationary
asymptotically flat solutions to the Einstein-Maxwell system.
\begin{cor}[Main Global Result]\label{CorMain}
We assume (A1)-(A4) and (G), and let $q_E$, $q_B$, and $M$ be the
electric charge, magnetic charge, and Komar mass of the space-time at
one asymptotic end. We choose the normalization for $\Xi$ such that
it vanishes at spatial infinity. If we assume there exists a complex
function $P$ defined wherever $\mathcal{H}^2\neq 0$ such that
\begin{enumerate}
\item $P^{-4} = - (q_E + i q_B)^2 \mathcal{H}_{ab}\mathcal{H}^{ab}$
when $\mathcal{H}^2\neq 0$
\item $\mathcal{F}_{ab} = (4\bar{\Xi} - \frac{2M}{q_E + i
q_B})\mathcal{H}_{ab}$ everywhere
\item $\mathcal{C}_{abcd} =
3P(\mathcal{F}\tilde\otimes\mathcal{H})_{abcd}$ when $P$ is defined
\end{enumerate} 
then we can conclude that
\begin{enumerate}
\item $\mathcal{H}^2$ is non-vanishing globally,
\item $\mathfrak{A} = (q_E^2 + q_B^2) P\bar{P} (\Im (q_E + i q_B)\nabla P)^2 +
(\Im (q_E + i q_B) P)^2$ is a real-valued positive constant on the manifold,
\item and $(\mathcal{M},g_{ab})$ is everywhere locally isometric to a
Kerr-Newman space-time of total charge $q = \sqrt{q_E^2 + q_B^2}$,
angular momentum $\sqrt{\mathfrak{A}}M$, and mass $M$. 
\end{enumerate}
\end{cor}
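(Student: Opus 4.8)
\emph{Proof strategy.} Corollary~\ref{CorMain} follows quickly from Theorem~\ref{ThmMain} once the theorem's free constants are fixed by asymptotics, so the plan is first to prove the theorem and then to carry out that reduction. For the theorem, conclusion~(2) is essentially free, using only hypothesis~(2): writing $F_{ab}=\mathcal{F}_{ab}+\bar{\mathcal{F}}_{ab}=4\bar{\Xi}\mathcal{H}_{ab}+4\Xi\bar{\mathcal{H}}_{ab}$ and combining $\nabla_b t^2 = t^a F_{ba}$ with $t^a\mathcal{H}_{ab}=\nabla_b\Xi$ (from~(\ref{EqDefXi})) and its conjugate, one finds $\nabla_b t^2 = -4\nabla_b|\Xi|^2$, so $t_a t^a + 4|\Xi|^2$ is locally constant, which is $C_4$. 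Conclusion~(1) is the main algebraic step. I would start from the identity~(\ref{EqDerASDErnst}) for $\nabla_c\mathcal{F}_{ab}$, substitute $\mathcal{F}_{ab}=4\bar{\Xi}\mathcal{H}_{ab}$ on the left (using $\nabla_c\bar{\Xi}=\bar{\mathcal{H}}_{ac}t^a$), and on the right use hypothesis~(3) together with hypothesis~(2) to rewrite $\mathcal{C}_{dcab}t^d = 12P\bar{\Xi}\,(\mathcal{H}\tilde\otimes\mathcal{H})_{dcab}t^d$ entirely in terms of $\mathcal{H}$, $t^a$, $\Xi$ and---via hypothesis~(1)---powers of $P$, while the traceless stress-energy contributions $T_{ad}t^d$ likewise collapse, by the anti-self-dual product identities, to $\mathcal{H}_{ac}\nabla^c\bar{\Xi}$ and its Hodge dual. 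Dividing by $\bar{\Xi}$ then yields, on $\{\Xi\neq 0\}$, a closed algebraic formula for $\nabla_c\mathcal{H}_{ab}$ in terms of $\mathcal{H},t^a,\Xi,P$ and $g$, which extends by continuity. Contracting this with $\mathcal{H}^{ab}$---using $\mathcal{H}^{ab}g_{ab}=0$, $\mathcal{H}^{ab}\mathcal{H}_{ab}=\mathcal{H}^2$ and $\epsilon_{dabc}\mathcal{H}^{bc}=-2i\mathcal{H}_{da}$---and feeding the result into the differentiated hypothesis~(1), namely $\nabla_a P = \frac12 C_1^2 P^5\,\mathcal{H}^{bc}\nabla_a\mathcal{H}_{bc}$, everything collapses to $\nabla_a(P^{-1})=2\nabla_a\Xi$; hence $P^{-1}-2\Xi$ is a complex constant $C_2$ by simple-connectivity.

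For conclusions~(3) and~(4), under the further hypotheses $C_1\bar{C}_2\in\mathbb{R}$ and $|C_2|^2-C_4=1$, I would integrate the connection equations in the Ionescu--Klainerman null-tetrad formalism---extended in the appendix to carry the Ricci terms---in the spirit of Debever--Kamran--McLenaghan. Hypotheses~(2)--(3) force $\mathcal{C}_{abcd}$ to be proportional to $(\mathcal{H}\tilde\otimes\mathcal{H})_{abcd}$, so the space-time is Petrov type~$D$ with principal null directions the eigendirections of $\mathcal{H}_{ab}$, and I would work in a null tetrad $(l,n,m,\bar{m})$ adapted to them. The assumptions, the relation $\nabla_b\Xi=\mathcal{H}_{ab}t^a$, the formula for $\nabla_c\mathcal{H}_{ab}$ from the previous step, and conclusion~(1) then pin down every connection coefficient and every tetrad component of $t^a$ as an explicit function of $P$ (equivalently of $\Xi$) and the constants; in particular one checks that $\mathfrak{A}=|C_1|^2 P\bar{P}(\Im C_1\nabla P)^2 + (\Im C_1 P)^2$ is annihilated by all four tetrad directional derivatives---this is exactly where $C_1\bar{C}_2\in\mathbb{R}$ and $|C_2|^2-C_4=1$ enter---so $\mathfrak{A}$ is a constant, and it is nonnegative, being the squared rotation parameter $a^2$. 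Integrating the commutation relations then produces local coordinates $(\tau,r,\theta,\psi)$ with $t^a\partial_a=\partial_\tau$, with $r$ and $\theta$ Boyer--Lindquist-type coordinates built from $P$ and the constants, and with $\psi$ cyclic, in which $g_{ab}$ is recognised as the Kerr--Newman metric of mass $C_1\bar{C}_2$, charge $|C_1|$ and angular momentum $\sqrt{\mathfrak{A}}\,C_1\bar{C}_2$; matching $H_{ab}$ to the Kerr--Newman Maxwell field then follows from hypotheses~(1)--(2). I expect this last integration---producing the cyclic coordinate $\psi$, i.e.\ exhibiting the hidden axial Killing field, verifying all the metric components on the coordinate patch, and handling the loci where $t^a$ becomes null or $\Im(C_1 P)$ vanishes---to be the principal obstacle.

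Finally, Corollary~\ref{CorMain} follows from Theorem~\ref{ThmMain} by fixing the constants. I would take $C_1=q_E+iq_B$ and replace the vanishing-at-infinity potential $\Xi$ by $\Xi':=\Xi-M/(2\bar{C}_1)$, so that hypothesis~(2) of the corollary becomes hypothesis~(2) of the theorem for $\Xi'$, while hypotheses~(1) and~(3) are unchanged. Evaluating $C_2=P^{-1}-2\Xi'$ at spatial infinity, where $\Xi\to 0$ by the chosen normalization and $P^{-1}\to 0$ by the Maxwell decay together with hypothesis~(1), gives $C_2=M/\bar{C}_1$, hence $C_1\bar{C}_2=M\in\mathbb{R}$; and evaluating $C_4=t_a t^a + 4|\Xi'|^2$ at infinity, where $t_a t^a\to -1$ (time translation with Komar mass $M$) and $|\Xi'|^2\to M^2/(4|C_1|^2)$, gives $|C_2|^2-C_4=1$. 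Thus on every region where $\mathcal{H}^2\neq 0$ all hypotheses of Theorem~\ref{ThmMain} hold and give the claimed local isometry, with charge $|C_1|=\sqrt{q_E^2+q_B^2}$, mass $M$ and angular momentum $\sqrt{\mathfrak{A}}\,M$; and on such a region one also has $\mathcal{H}^2=-C_1^{-2}(C_2+2\Xi')^4$. Propagating outward from the asymptotic end---where $\mathcal{H}^2\neq 0$---and using connectedness of $\mathcal{M}$, one concludes $\mathcal{H}^2$ vanishes nowhere, so the local isometry holds on all of $\mathcal{M}$.
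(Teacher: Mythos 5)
Your reduction of the corollary to Theorem \ref{ThmMain} by fixing constants from the asymptotics is correct and matches what the paper does (taking $C_1=q_E+iq_B$, absorbing the shift so that $C_2=C_3=M/(q_E-iq_B)$, hence $C_1\bar C_2=M\in\mathbb{R}$, and $C_4=M^2/q^2-1$ so $|C_2|^2-C_4=1$; cf.\ the remarks after Proposition \ref{PropConditionalEqualityPXi} and Remark \ref{RmkPIdentities}). The long sketch of Theorem \ref{ThmMain} itself is broadly in line with the paper's route, but it is not what the corollary's proof turns on.

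The genuine gap is conclusion (1): the statement that $\mathcal{H}^2$ is nowhere vanishing. This is the entire content of the paper's proof of the corollary, and ``propagating outward from the asymptotic end and using connectedness of $\mathcal{M}$'' does not establish it. The identity $\mathcal{H}^2=-C_1^{-2}(C_2+2\Xi')^4$ only says that $\mathcal{H}^2$ vanishes exactly where $P^{-1}\to 0$, i.e.\ where $|P|\to\infty$; connectedness by itself does not prevent the open set $\mathcal{M}_{\mathcal{H}}\supset\mathcal{M}^\infty$ on which $\mathcal{H}^2\neq 0$ from having a nonempty boundary inside $\mathcal{M}$. What is needed is an argument excluding blow-up of $P$ at a finite boundary point $p_0\in\partial\mathcal{M}_{\mathcal{H}}$. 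The paper supplies this as follows: by the first identity of Proposition \ref{PropPIdentities}, $t^2\to -(|C_2|^2-C_4)=-1$ at $p_0$, so $t^a$ is uniformly timelike in a neighborhood of $p_0$ and $g$ induces a genuine Riemannian metric on the orthogonal complement of $t^a$; along a unit-speed curve in that complement reaching $p_0$, the boundedness $z^2\leq\mathfrak{A}$ from Lemma \ref{LemmaNormGradyz} forces $y=\Re(C_1P)$ to blow up, while the same lemma gives $(\nabla y)^2=(\mathfrak{A}+y^2+|C_1|^2-2C_1\bar C_2 y)/(y^2+z^2)$, which stays bounded for large $|y|$ (and $\nabla_a y$ is orthogonal to $t^a$ by $t$-invariance), so $y\circ\gamma$ has bounded derivative on a finite parameter interval --- a contradiction. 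Without this (or an equivalent) quantitative boundary argument, conclusions (2) and (3) are only obtained on the a priori smaller set $\mathcal{M}_{\mathcal{H}}$, not on all of $\mathcal{M}$, and the corollary is not proved.
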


For ease of notation, we write
the complex scalar $P$, the complex anti-self-dual form
$\mathcal{B}_{ab}$, and the complex anti-self-dual Weyl field
$\mathcal{Q}_{abcd}$ for the following expressions
\begin{subequations}
\begin{align}
\label{EqDefP} P^4 &:= - \frac{1}{C_1^2\mathcal{H}_{ab}\mathcal{H}^{ab}}\\
\label{EqDefB} \mathcal{B}_{ab} &:= \mathcal{F}_{ab} + (2\bar{C}_3
-4\bar{\Xi})\mathcal{H}_{ab} \\
\label{EqDefQ} \mathcal{Q}_{abcd} &:= \mathcal{C}_{abcd} -
3P(\mathcal{F}\tilde\otimes\mathcal{H})_{abcd}
\end{align}
\end{subequations}
By an abuse of language, in the sequel, the statement
``$\mathcal{B}_{ab} = 0$'' will be understood to mean the alignment
condition (2) in Theorem \ref{ThmMain} when we work under assumption
(L), or the alignment condition (2) in Corollary \ref{CorMain} when we
work under assumption (G), with suitably defined constants and
normalizations. Similarly, the statement ``$\mathcal{Q}_{abcd} = 0$''
will be taken to mean the existence of a suitable function $P$ such
that the appropriate alignment condition (3) is satisfied under
suitable conditions.

We end this section with a heuristic motivation of why the pair
$\mathcal{B}_{ab},\mathcal{Q}_{abcd}$ is a generalization of the
Mars-Simon tensor constructed in \cite{IoKl07a}. Assuming (G) and
suppose we have
$\mathcal{B}_{ab}$ and $\mathcal{Q}_{ab}$ both vanishing, and we take
the $q\to 0$ Kerr limit. Formally we define the quantity 
\[ \mathcal{G}_{ab} = -\frac{2M}{q_E + i q_B}\mathcal{H}_{ab} \]
when $q\neq 0$. The vanishing of $\mathcal{B}_{ab}$ becomes
\[ \mathcal{G}_{ab} = \mathcal{F}_{ab} / (1-\frac{2(q_E + i q_B)}{M}\bar{\Xi}) \]
and $P$ satisfies 
\[ P^4 = - \frac{4M^2}{(q_E + i q_B)^4\mathcal{G}_{ab}\mathcal{G}^{ab}} \]
Then we have
\[ 0 = \mathcal{Q}_{abcd} = \mathcal{C}_{abcd} +
\frac{3}{2M(-\frac{1}{4M^2}\mathcal{G}_{kl}\mathcal{G}^{kl})^{1/4}}(\mathcal{F}\tilde\otimes\mathcal{G})_{abcd}
\]
Now, formally taking $q\to 0$, we have that $\mathcal{B}_{ab} = 0 \to
\mathcal{G}_{ab} = \mathcal{F}_{ab}$, and 
\[ \mathcal{Q}_{abcd} = 0 \to \mathcal{C}_{abcd} = -
\frac{3}{(-4M^2\mathcal{F}_{kl}\mathcal{F}^{kl})^{1/4}}(\mathcal{F}\tilde\otimes\mathcal{F})_{abcd}
\]
which by inspection is the same vanishing condition imposed by the
Mars-Simon tensor in \cite{IoKl07a} or the vanishing condition in
Lemma 5 of \cite{Mars99} (the difference of a factor of 2 is due to a
factor of 2 difference in the definitions of anti-self-dual two-forms
and of the Ernst two-form). 

\section{Proof of the main local theorem}\label{SectMainProof}

Throughout this section we assume the statements (A1)-(A4) and (L). 
The arguments in this section, except for Lemma \ref{LemmaNonNullTF}
and Proposition
\ref{PropConditionalEqualityPXi}, closely mirrors the arguments given
in \cite{Mars99}, with several technical changes to allow the
application to electrovac space-times. Using the precise statement of
Theorem \ref{ThmMain}, $C_3$ should be taken to be 0 in this section.
We keep the notation $C_3$ to make explicit the applicability of the
computations in the global case.

We start first with some consequences of assumption (L)
\begin{lemma}\label{LemmaNonNullTF}
If $\mathcal{B}_{ab}$ vanishes identically on $\mathcal{M}$, then we
have that
\begin{enumerate}
\item $\mathcal{F}_{ab}\mathcal{F}^{ab}$ only vanishes on sets of
co-dimension $\geq 1$,
\item $\mathcal{F}_{ab}\mathcal{F}^{ab} = 0 \implies \mathcal{F}_{ab}
= 0$,
\item The Killing vector field $t^a$ is non-null on a dense subset of
$\mathcal{M}$. 
\end{enumerate}
\end{lemma}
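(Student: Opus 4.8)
The plan is to draw everything out of the alignment hypothesis $\mathcal B_{ab}=0$, which here (with $C_3=0$) reads $\mathcal F_{ab}=4\bar\Xi\,\mathcal H_{ab}$, using two structural tools. First, the non-null identity (\ref{EqSelfDualSelfProduct}) applied to $\mathcal H$: contracting $\nabla_b\Xi=\mathcal H_{ab}t^a$ with $\mathcal H^{b}{}_{d}$ and using $\mathcal H_{ac}\mathcal H_b{}^c=\tfrac14 g_{ab}\mathcal H^2$ gives, since $\mathcal H^2\neq 0$ by (L), the pointwise inversion $t^d=\tfrac{4}{\mathcal H^2}\,\mathcal H^{db}\nabla_b\Xi$ valid on all of $\mathcal M$. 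Second, the standard fact that a Killing field vanishing to first order on a nonempty open set vanishes identically, so any such vanishing contradicts (A4). Claim (2) is then immediate: $\mathcal F_{ab}\mathcal F^{ab}=16\bar\Xi^2\mathcal H_{ab}\mathcal H^{ab}$, so $\mathcal F^2=0\Rightarrow\Xi=0\Rightarrow\mathcal F_{ab}=0$. Claim (1): if $\mathcal F^2$ — equivalently $\Xi$ — vanished on an open set $U$, then $\nabla\Xi\equiv 0$ on $U$, hence $t^a\equiv 0$ on $U$ by the inversion formula, hence $t^a\equiv 0$ everywhere, contradicting (A4); so $\{\mathcal F^2=0\}=\{\Xi=0\}$ contains no open set.

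For claim (3) I would argue by contradiction: suppose $t^2:=t_at^a\equiv 0$ on a nonempty open set $V$. Then $\nabla_b t^2=0$ on $V$, and the identity $\nabla_b t^2=t^aF_{ba}$ gives $\iota_tF=0$ on $V$; moreover, writing $F=\mathcal F+\bar{\mathcal F}$ and using $\mathcal F=4\bar\Xi\mathcal H$ together with $\nabla\Xi=\iota_t\mathcal H$ one finds $F_{ab}t^b=-4\nabla_a|\Xi|^2$, so $|\Xi|$ is constant on $V$, and that constant is nonzero by claim (1), whence $\Xi\neq 0$ throughout $V$. Now specialise the auxiliary identity (\ref{EqFStarFStar}): with $t^2=0$ and $\nabla t^2=0$ it collapses on $V$ to $({}^*F_{ma}t^a)({}^*F_{nb}t^b)=\tfrac12 F^2\,t_mt_n$. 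Tracing, and then contracting with $t^m$, shows that the real covector $w_a:={}^*F_{ab}t^b$ is null and orthogonal to $t$; since $t$ is also real and null, Lorentzian signature forces $w_a=\mu\,t_a$ with $\mu$ real and $\mu^2=\tfrac12 F^2$. On the dense open subset $V'\subseteq V$ where $\mu\neq 0$ — where $\mu=0$ one gets $\iota_t{}^*F=0$, hence with $\iota_tF=0$ also $\mathcal F_{ab}t^b=0$, hence $\nabla\Xi=0$, hence $t=0$, which is impossible on an open set — combining $\iota_tF=0$ with $w=\mu t$ gives $\mathcal F_{ab}t^b=\tfrac{i\mu}{2}t_a$; on the other hand $\mathcal F_{ab}t^b=-4\bar\Xi\,\nabla_a\Xi$, so $\nabla_a\Xi=\gamma\,t_a$ with $\gamma:=-\tfrac{i\mu}{8\bar\Xi}$ nowhere zero on $V'$.

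The contradiction then comes from the fact that $d\Xi$ is exact, hence closed: from $d\Xi=\gamma\,t^\flat$ and $dt^\flat=F$ we get $0=d(\gamma\,t^\flat)=d\gamma\wedge t^\flat+\gamma F$, i.e.\ $\gamma F_{ab}=t_a\nabla_b\gamma-t_b\nabla_a\gamma$; squaring yields $\gamma^2 F^2=2t^2(\nabla\gamma)^2-2(t\cdot\nabla\gamma)^2$. But on $V'$ one has $t^2=0$, and $t\cdot\nabla\gamma=\mathcal L_t\gamma=0$ because $\bar\Xi$ and $F^2$ — hence $\mu$ where it is nonzero — are all Killing-invariant ($\mathcal L_t\Xi=t^a\nabla_a\Xi=\mathcal H_{ba}t^bt^a=0$, and $\mathcal L_tF=d\,\iota_tF=0$ since $F$ is closed and $\iota_tF=-d(t^2)$ is exact). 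Hence $F^2=0$ on $V'$, i.e.\ $\mu\equiv 0$ there — contradicting $\mu\neq 0$ on $V'$. Therefore $t^2$ vanishes on no open set, which is claim (3). I expect this last argument for (3) to be the main obstacle: one must convert $t^2\equiv 0$ into the assertion that $F$ would be ``doubly degenerate'' ($F^2=0$), which is incompatible with $\mathcal F$ inheriting non-nullity from $\mathcal H$ wherever $\Xi\neq 0$; the two ingredients that make this go through are the quadratic identity (\ref{EqFStarFStar}) and the closedness of $d\Xi$.
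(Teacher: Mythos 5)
Your proposal is correct. Claims (1) and (2) follow the paper's own route: squaring the alignment, and for (1) observing that $\nabla\Xi=0$ forces $t^a=0$ because $\mathcal{H}$ is non-null (your inversion formula $t^d=\frac{4}{\mathcal{H}^2}\mathcal{H}^{db}\nabla_b\Xi$ is a compact repackaging of the paper's eigenvector argument), followed by the standard fact that a Killing field vanishing on an open set vanishes identically. For claim (3) you take a genuinely different route. The paper also starts from (\ref{EqFStarFStar}) with $t^2=\nabla t^2=0$, but then uses the wave identity $0=\Box_g t^2=\frac12F^2-2R_{ab}t^at^b$, shows via the alignment that $T_{ab}t^at^b=0$ so the Ricci term drops by the Einstein equations, deduces $F^2=0$ and ${}^*F_{ab}t^b=0$, and a second pass through $\Box_g t^2$ gives $\mathcal{F}^2=0$ on an open set, contradicting (1). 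You instead read off from (\ref{EqFStarFStar}) that $w_a={}^*F_{ab}t^b$ is null and orthogonal to $t$, hence proportional to $t$ where $t\neq0$, so $d\Xi=\gamma\,t^\flat$ with $\gamma\neq0$ on a nonempty open subset, and the contradiction comes purely from $d^2\Xi=0$ together with $dt^\flat=F$: squaring $\gamma F_{ab}=t_a\nabla_b\gamma-t_b\nabla_a\gamma$ with $t^2=0$ and $t\cdot\nabla\gamma=0$ forces $F^2=0$, against $\mu\neq0$. This avoids the field equations altogether for this step (only the closedness of $\mathcal{H}$, already used to define $\Xi$, enters), which is a nice structural point; the paper's route is shorter once the $\Box_g t^2$ identity is available but leans on $R_{ab}=T_{ab}$. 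Two points worth stating explicitly: on your set $V'$ the locus $\{w=0\}$ coincides with $\{t=0\}$ and has empty interior, so $t\neq0$ there and $\gamma$ is smooth (being determined by $\nabla_a\Xi=\gamma t_a$ with $t$ nonvanishing), which is what licenses applying $d$ to $\gamma t^\flat$; and $t\cdot\nabla\gamma=0$ follows more directly by Lie-differentiating $\nabla_a\Xi=\gamma t_a$ along $t$, bypassing the discussion of the Killing invariance of $\mu$ and $F^2$.
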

\begin{proof}
Squaring the alignment condition implied by the vanishing of
$\mathcal{B}_{ab}$ gives
\[ \mathcal{F}^2 = (4\bar{\Xi} - 2 \bar{C}_3)^2\mathcal{H}^2~. \]
By assumption (L), if the left-hand side vanishes, then $4\Xi -
2 C_3 = 0$, and using the alignment condition again, we have
$\mathcal{F}_{ab} = 0$. This proves claim (2). 

Suppose $\mathcal{F}_{ab}$ vanishes on some small open set $\delta$,
then necessarily $\nabla_at_b = 0$ on $\delta$. Furthermore, we have
that $\Xi$ must be locally constant as shown above, and thus
$\nabla_a\Xi = \mathcal{H}_{ba}t^b = 0$. But
\[ \nabla_a\Xi\nabla^a\Xi = \mathcal{H}_{ba}\mathcal{H}^{ca}t_ct^b =
\frac14\mathcal{H}_{ab}\mathcal{H}^{ab}t^ct_c = 0 \]
and since the Maxwell field is non-null, we have that $t^a$ must be a
parallel null vector in $\delta$. If $t^a$ is not the zero vector,
however, we must have $t^a$ being an eigenvector, and hence a
principal null direction, of $\mathcal{H}_{ab}$, with eigenvalue zero:
this contradicts the fact that $\mathcal{H}_{ab}$ is non-null. If
$t^a= 0$ on a small neighborhood $\delta$, however, $t^a$ must vanish
everywhere on $\mathcal{M}$ since it is Killing, contradicting
assumption (A4). This proves assertion (1). 

Lastly, assume that $t^2 = 0$ on some small open set $\delta$, which
implies $\nabla_at^2 = 0$ and $\Box_g t^2 = 0$ on the neighborhood. Using
(\ref{EqFStarFStar}), we deduce 
\[ {}^*F_{mx}t^x{}^*F_{ny}t^y = \frac12 F^2 t_mt_n \]
Taking the trace in $m,n$, we have
\[ {}^*F_{mx}{}^*F^{my}t_yt^x = 0 \]
Using the fact that
\[ F_{mx}F^{my}t^xt_y = \nabla_mt^2\nabla^mt^2 = 0~,\qquad \mathcal{F}_{ac}\bar{\mathcal{F}}_b{}^c = \frac14(F_{ac}F_b{}^c +
{}^*F_{ac}{}^*F_b{}^c) \]
we have
\[ \mathcal{F}_{ac}\bar{\mathcal{F}}_b{}^ct^at^b = 0 \]
Now, since $\mathcal{B}_{ab} = 0$, this implies that
\[ |2 C_3 - 4\Xi|^2 T_{ab}t^at^b = 0 \]
on the open set $\delta$. If the first factor is identically zero in
an open subset $\delta'\subset\delta$, then $\Xi$ is locally constant
and arguing the same way as above we get a contradiction. Therefore
we can assume, without loss of generality, that $T_{ab}t^at^b = 0$.
Now consider the identity
\[ 0 = \Box_gt^2 = \nabla^b(t^aF_{ba}) = \frac12 F^{ba}F_{ba} - 2
R_{ab}t^at^b \]
The last term vanishes by the assumption, and implies that
$F^{ba}F_{ba} = 0$; thus ${}^*F_{mx}t^x = 0$. Therefore
\[ \nabla_at^2 = t^bF_{ab} = 2 t^b\mathcal{F}_{ab} \]
in $\delta$, and hence
\[ 0 = \Box_g t^2 = \mathcal{F}_{ab}\mathcal{F}^{ab} - 2 R_{ab}t^at^b
\]
and so $\mathcal{F}_{ab}\mathcal{F}^{ab} = 0$ identically on $\delta$,
which we have just shown is impossible. Assertion (3) then follows.
\end{proof}
We can then prove claim (1) in Theorem \ref{ThmMain}:
\begin{prop}\label{PropConditionalEqualityPXi}
If $\mathcal{B}_{ab}$ and $\mathcal{Q}_{abcd}$ both vanish on
$\mathcal{M}$, then $P^{-1} - 2\Xi$ is constant.
\end{prop}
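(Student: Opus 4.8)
The plan is to show that $\nabla_a(P^{-1} - 2\Xi)$ vanishes identically by computing $\nabla_a P^{-1}$ in terms of the quantities already under control, and matching it against $2\nabla_a\Xi = 2\mathcal{H}_{ba}t^b$. The starting point is the defining relation $P^{-4} = -C_1^2\mathcal{H}_{cd}\mathcal{H}^{cd}$, which on differentiating gives $-4P^{-5}\nabla_a P = -2C_1^2\mathcal{H}^{cd}\nabla_a\mathcal{H}_{cd}$, i.e. $\nabla_a P^{-1} = \tfrac14 P^{-1}\cdot\frac{\mathcal{H}^{cd}\nabla_a\mathcal{H}_{cd}}{\mathcal{H}^{ef}\mathcal{H}_{ef}}$ (valid off the codimension-$\geq 1$ set where $\mathcal{F}^2$, hence $\mathcal{H}^2$, vanishes — this is where Lemma \ref{LemmaNonNullTF} is used to extend the conclusion by continuity). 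So I need a usable expression for $\mathcal{H}^{cd}\nabla_a\mathcal{H}_{cd}$.

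The key computational input comes from the Bianchi identity for the Weyl tensor together with the alignment conditions. Since the Maxwell equations give $\nabla^a\mathcal{H}_{ab} = 0$ and $\nabla_{[c}\mathcal{H}_{ab]} = 0$, contracting the second (anti-self-dual) Bianchi identity $\nabla^a\mathcal{C}_{abcd} = (\text{Ricci/stress-energy terms})$ with $t^b$ and using $\mathcal{C}_{abcd} = 3P(\mathcal{F}\tilde\otimes\mathcal{H})_{abcd}$ should produce, after using the product identities for anti-self-dual forms (the subequations block), the derivative of $P$ contracted into $\mathcal{H}$ and $\mathcal{F}$. Concretely, I expect to feed in $\mathcal{F}_{ab} = 4\bar\Xi\mathcal{H}_{ab}$ so that everything is expressed through $\mathcal{H}$, $\bar\Xi$, $\nabla\bar\Xi = \bar{\mathcal{H}}_{ba}t^b$ (which is $0$ against anti-self-dual forms by the last product identity), and $\nabla P$; the trace-free and Bianchi symmetries of $\tilde\otimes$, plus \eqref{EqSelfDualSelfProduct}, should collapse the contraction to an equation of the schematic form $\mathcal{H}^2\nabla_a P = (\text{explicit multiple of } P)\,\mathcal{H}_{ba}t^b = (\text{explicit multiple of }P)\,\nabla_a\Xi$. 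Dividing by $\mathcal{H}^2$ and comparing with the expression for $\nabla_a P^{-1}$ above should yield $\nabla_a P^{-1} = 2\nabla_a\Xi$ exactly, once the numerical constant $C_1$ drops out (as it must, since $P^{-1}-2\Xi$ being constant is claimed without reference to $C_1$).

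The main obstacle I anticipate is bookkeeping in the contracted Bianchi identity: the right-hand side is no longer zero as in the vacuum case of \cite{Mars99}, but carries Kulkarni–Nomizu-type terms built from the stress-energy tensor $T_{ab} = 4\mathcal{H}_{ac}\bar{\mathcal{H}}_b{}^c$, and I will need \eqref{EqDerASDErnst}-style identities and the algebra of $\mathcal{I}_{abcd}$ to see that these extra terms either vanish upon the relevant contractions with $t^b$ (using $\bar{\mathcal{H}}_{ab}t^b \perp$ anti-self-dual space) or reassemble into precisely the correction needed. A secondary subtlety is justifying differentiation of the fourth root defining $P$ and handling sign/branch ambiguity — but the statement is insensitive to $P\mapsto -P$ only through squares, and the relation $P^{-4} = -C_1^2\mathcal{H}^2$ pins $\nabla_a P^{-1}$ unambiguously wherever $\mathcal{H}^2\neq 0$, so this is harmless. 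Once $\nabla_a(P^{-1}-2\Xi)=0$ holds on the dense open set where $\mathcal{H}^2\neq 0$, smoothness and connectedness of $\mathcal{M}$ finish the proof.
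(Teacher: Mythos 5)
Your plan leaves the decisive computation unexecuted, and the one concrete mechanism you do assert is wrong. You propose to discard the conjugate terms because ``$\nabla\bar{\Xi}=\bar{\mathcal{H}}_{ba}t^b$ is $0$ against anti-self-dual forms by the last product identity'' — but that identity, $\mathcal{X}_{ab}\bar{\mathcal{Y}}^{ab}=0$, only kills \emph{full} contractions of an anti-self-dual with a self-dual two-form. The contractions that actually arise are mixed, single-index ones such as $\mathcal{H}_{ac}\bar{\mathcal{H}}_b{}^c=\tfrac14 T_{ab}$, which are manifestly nonzero; indeed the entire source side of your contracted Bianchi identity is built from $\nabla T$ with $T=4\mathcal{H}\bar{\mathcal{H}}$, so applying your orthogonality claim there would erase exactly the terms that carry the content of the electrovac problem. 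In the paper's argument the $\nabla_c\bar{\Xi}$ terms do \emph{not} vanish: one differentiates the contracted alignment $\mathcal{H}^{ab}\mathcal{B}_{ab}$, uses the Killing identity (\ref{EqDerASDErnst}) (not the Bianchi identity) to replace $\nabla_c\mathcal{F}_{ab}$ by the aligned Weyl tensor plus explicit stress-energy terms, and trades $\nabla_c(\mathcal{H}^2)$ for $\nabla_c P$ through the definition (\ref{EqDefP}); the Ricci part then contributes $-\tfrac{4}{C_1^2P^4}\nabla_c\bar{\Xi}$, which cancels precisely against $-4\mathcal{H}^2\nabla_c\bar{\Xi}$ coming from differentiating the factor $\bar{\Xi}$ in the alignment. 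That cancellation — possible only because $P$ is tied to $\mathcal{H}^2$ by (\ref{EqDefP}) — is the heart of the proposition and is absent from your proposal; ``should collapse'' and ``should yield'' are doing all the work at exactly this point.

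Two further gaps. First, your divergence-of-Weyl route produces, besides $\mathcal{H}^{ef}\nabla_a\mathcal{H}_{ef}$, genuinely new objects like $\mathcal{H}_{ab}\nabla^a\mathcal{H}_{cd}$ (equivalently the connection coefficients of the principal null frame), which are not controlled by $\nabla^a\mathcal{H}_{ab}=0$ and do not obviously cancel; you give no argument that they do, whereas the paper's route never needs them. Second, after substituting $\mathcal{F}_{ab}=4\bar{\Xi}\mathcal{H}_{ab}$ the resulting scalar identity inevitably carries an overall factor of the form $\bar{C}_3-2\bar{\Xi}$ (in the paper, $0=\tfrac{4}{C_1^2P^3}(\bar{C}_3-2\bar{\Xi})(2\nabla_c\Xi-\nabla_c P^{-1})$), so before concluding you must know this factor is nonzero on a dense set; that is supplied by the argument of Lemma \ref{LemmaNonNullTF} that $\Xi$ is not locally constant, a step your proposal does not address (note also that under (L) it is $\mathcal{F}^2$, not $\mathcal{H}^2$, that can vanish on small sets — $P$ is globally defined).
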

\begin{proof}
We start by calculating $\mathcal{H}^{ab}\nabla_c\mathcal{B}_{ab}$.
Using (\ref{EqDerASDErnst}), 
\begin{align*}
\mathcal{H}^{ab}\nabla_c\mathcal{F}_{ab} & = 2[\mathcal{Q}_{dcab} +
3P(\mathcal{F}\tilde\otimes\mathcal{H})_{dcab}]t^d\mathcal{H}^{ab}
\\
& \qquad + \frac12(T_{ad}\mathcal{H}^a{}_c + T_{bc}\mathcal{H}_d{}^b -
T_{ac}\mathcal{H}^a{}_d - T_{bd}\mathcal{H}_c{}^b)t^d\\
& \qquad + i(T_d{}^e{}^*\mathcal{H}_{ec} +
T_c{}^f{}^*\mathcal{H}_{df})t^d\\
&= 2[ \mathcal{Q}_{dcab} +
3P(\mathcal{F}\tilde\otimes\mathcal{H})_{dcab}]t^d\mathcal{H}^{ab} +
2(T_{ad}\mathcal{H}^a{}_c + T_{bc}\mathcal{H}_d{}^b)t^d \\
& = 2\mathcal{Q}_{dcab}\mathcal{H}^{ab}t^d +
P(3\mathcal{F}_{dc}\mathcal{H}_{ab}\mathcal{H}^{ab} +
\mathcal{H}_{dc}\mathcal{F}_{ab}\mathcal{H}^{ab})t^d \\
& \qquad + 8(\mathcal{H}_{af}\bar{\mathcal{H}}_d{}^f\mathcal{H}^a{}_c
+ \mathcal{H}_{bf}\bar{\mathcal{H}}_c{}^f\mathcal{H}_d{}^b)t^d \\
& = 2\mathcal{Q}_{dcab}\mathcal{H}^{ab}t^d +
P(3[\mathcal{B}_{dc}- (2\bar{C}_3-4\bar{\Xi})\mathcal{H}_{dc}]\mathcal{H}_{ab}\mathcal{H}^{ab} \\
& \qquad + \mathcal{H}_{dc}[\mathcal{B}_{ab} - (2\bar{C}_3 -
4\bar{\Xi})\mathcal{H}_{ab}]\mathcal{H}^{ab})t^d 
+ 4\mathcal{H}_{ab}\mathcal{H}^{ab}\bar{\mathcal{H}}_{dc}t^d
\end{align*}
where we used (\ref{EqSelfDualSelfProduct}) and (\ref{EqDefB}) in the
last equality. Using (\ref{EqDefP}), we simplify to 
\begin{align*} \mathcal{H}^{ab}\nabla_c\mathcal{F}_{ab} &=
2\mathcal{Q}_{dcab}\mathcal{H}^{ab}t^d -
\frac{3}{C_1^2P^3}\mathcal{B}_{dc}t^d +
\frac{4}{C_1^2P^3}(2\bar{C}_3-4\bar{\Xi})\mathcal{H}_{dc}t^d \\
& \qquad + \mathcal{H}_{dc}\mathcal{B}_{ab}\mathcal{H}^{ab}t^d -
\frac{4}{C_1^2P^4}\bar{\mathcal{H}}_{dc}t^d\end{align*}
Applying the condition $\mathcal{Q}_{abcd} = 0$ and $\mathcal{B}_{ab}
= 0$ and (\ref{EqDefXi}), we have
\[ \mathcal{H}^{ab}\nabla_c\mathcal{F}_{ab} =
\frac{4}{C_1^2P^3}(2\bar{C}_3 - 4 \bar{\Xi}) \nabla_c\Xi -
\frac{4}{C_1^2P^4}\nabla_c\bar{\Xi} \]
On the other hand, we can calculate
\[
\mathcal{H}^{ab}\nabla_c\left[(2\bar{C}_3-4\bar{\Xi})\mathcal{H}_{ab}\right] =
-4 \mathcal{H}^{ab}\mathcal{H}_{ab}\nabla_c\bar{\Xi} +
\frac12(2\bar{C}_3-4\bar{\Xi})\nabla_c(\mathcal{H}_{ab}\mathcal{H}^{ab})
\]
So putting them altogether we have
\[ 0 = \mathcal{H}^{ab}\nabla_c\mathcal{B}_{ab} =
\frac{4}{C_1^2P^3}(\bar{C}_3 - 2 \bar{\Xi}) (2\nabla_c\Xi -
\nabla_c\frac{1}{P}) \]
By the arguments used in the proof of Lemma \ref{LemmaNonNullTF},
$\Xi$ is not locally constant and so 
$C_3 \neq 2\Xi$ densely. The above expression 
(and continuity) then shows that $2\Xi - \frac{1}{P}$ is constant.  
\end{proof}

In what follows I'll write $C_2 = P^{-1} - 2\Xi + C_3$.
\begin{rmk}
In the global case (where we assume (G) instead of (L)), the decay
condition given by asymptotic flatness shows that $2\Xi$ and $1/P$ 
both vanish at spatial infinity, and so $C_2 = C_3 = M/(q_E - i q_B)$ everywhere. 
\end{rmk}

The next proposition demonstrates assertion (2) in Theorem
\ref{ThmMain}.
\begin{prop}\label{PropPIdentities}
Assuming the vanishing of $\mathcal{B}_{ab}$ and $\mathcal{Q}_{abcd}$,
we have the following identities
\begin{subequations}
\begin{align}
t^2 &= -\left| \frac{1}{P} - C_2 \right|^2 + C_4\\
(\nabla P)^2 & = -\frac{t^2}{C^2_1}\\
C_1\Box_g P &= - \frac{2}{C_1\bar{C}_1P\bar{P}}\left( \bar{C}_1C_2 -
(|C_2|^2 - C_4)\bar{C}_1\bar{P}\right)
\end{align}
\end{subequations}
where $C_4$ is a real-valued constant.
\end{prop}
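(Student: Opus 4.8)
The plan is to extract all three identities from a single master computation: differentiating the Ernst potential relation and the defining relation for $\Xi$, then repeatedly substituting the alignment conditions $\mathcal{B}_{ab}=0$ and $\mathcal{Q}_{abcd}=0$ together with the formula \eqref{EqDerASDErnst} for $\nabla_c\mathcal{F}_{ab}$. Throughout I will use Proposition \ref{PropConditionalEqualityPXi}, which gives $P^{-1}-2\Xi = C_2-C_3$ constant, so that $\nabla_c(P^{-1}) = 2\nabla_c\Xi = 2\mathcal{H}_{ac}t^a$; this is the key simplification that collapses derivatives of $P$ into contractions of $\mathcal{H}$ with $t$.

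First I would establish the middle identity $(\nabla P)^2 = -t^2/C_1^2$. From $\nabla_c(1/P) = -P^{-2}\nabla_c P = 2\mathcal{H}_{ac}t^a$ we get $\nabla_c P = -2P^2\mathcal{H}_{ac}t^a$, hence $\nabla_cP\nabla^cP = 4P^4\mathcal{H}_{ac}\mathcal{H}^{bc}t^at_b$. Using the product identity \eqref{EqSelfDualSelfProduct} in the form $\mathcal{H}_{ac}\mathcal{H}_b{}^c = \tfrac14 g_{ab}\mathcal{H}_{ef}\mathcal{H}^{ef}$ this becomes $P^4 \mathcal{H}_{ef}\mathcal{H}^{ef} \cdot t^2$, and the definition \eqref{EqDefP} of $P^4 = -1/(C_1^2\mathcal{H}^2)$ finishes it. Next, for the first identity $t^2 = -|P^{-1}-C_2|^2 + C_4$: since $P^{-1} - 2\Xi + C_3 = C_2$ we have $P^{-1}-C_2 = 2\Xi - C_3$, so the claim is equivalent to showing $t^2 + 4|\Xi|^2 - 2C_3\bar\Xi - 2\bar C_3\Xi$ is constant, i.e. (since $C_3$ is constant and $C_2$ was already handled) that $t^2 + 4|\Xi|^2$ differs from a constant by the constant-multiple-of-$\Xi$ terms. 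I would compute $\nabla_c(t^2)$ via $\nabla_c t^2 = t^aF_{ca} = 2t^a\mathcal{F}_{ac} + 2t^a\bar{\mathcal{F}}_{ac}$; using $\mathcal{B}_{ab}=0$, i.e. $\mathcal{F}_{ab} = (4\bar\Xi-2\bar C_3)\mathcal{H}_{ab}$, the first piece is $(4\bar\Xi - 2\bar C_3)\cdot 2t^a\mathcal{H}_{ac} = (4\bar\Xi-2\bar C_3)\cdot 2\nabla_c\Xi$, and the conjugate piece is its complex conjugate involving $\nabla_c\bar\Xi$. Then $\nabla_c(t^2) = 2\nabla_c(2\bar\Xi\Xi - \bar C_3\Xi) + \overline{(\cdots)} = \nabla_c(4|\Xi|^2 - 2\bar C_3\Xi - 2C_3\bar\Xi)$, so $t^2 - 4|\Xi|^2 + 2\bar C_3\Xi + 2C_3\bar\Xi$ is constant; calling it $C_4$ and rewriting via $2\Xi - C_3 = P^{-1}-C_2$ gives $t^2 = -|P^{-1}-C_2|^2 + C_4$ after absorbing the $|C_3|^2$ constant into $C_4$. (One should check the bookkeeping of constants carefully here — this is where I expect the only real friction.)

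Finally, for the box identity, I would compute $\Box_g P = g^{cd}\nabla_c\nabla_d P = \nabla^c(-2P^2\mathcal{H}_{ac}t^a) = -4P(\nabla^cP)\mathcal{H}_{ac}t^a - 2P^2\nabla^c(\mathcal{H}_{ac}t^a)$. The first term, using $\nabla^cP = -2P^2\mathcal{H}_b{}^ct^b$ and \eqref{EqSelfDualSelfProduct} again, becomes $8P^3\mathcal{H}_{b}{}^c\mathcal{H}_{ac}t^bt^a = 2P^3\mathcal{H}^2\,t^2 = -\tfrac{2}{C_1^2P}t^2$, which by the first identity is $\tfrac{2}{C_1^2 P}(|P^{-1}-C_2|^2 - C_4)$. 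For the second term, $\nabla^c(\mathcal{H}_{ac}t^a) = t^a\nabla^c\mathcal{H}_{ac} + \mathcal{H}_{ac}\nabla^c t^a$; the divergence $\nabla^c\mathcal{H}_{ac}=0$ by Maxwell, and $\mathcal{H}_{ac}\nabla^ct^a = \tfrac12\mathcal{H}_{ac}F^{ca} = \mathcal{H}_{ac}(\mathcal{F}^{ca}+\bar{\mathcal{F}}^{ca}) = \mathcal{H}_{ac}\mathcal{F}^{ca}$ (the anti-self-dual/self-dual cross term vanishes), $= (4\bar\Xi-2\bar C_3)\mathcal{H}_{ac}\mathcal{H}^{ca} = -(4\bar\Xi-2\bar C_3)\mathcal{H}^2 = \tfrac{2(2\bar\Xi-\bar C_3)}{C_1^2P^4}$. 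Then $-2P^2\cdot\tfrac{2(2\bar\Xi-\bar C_3)}{C_1^2P^4} = -\tfrac{4(2\bar\Xi-\bar C_3)}{C_1^2P^2} = -\tfrac{4}{C_1^2P^2}(\bar P^{-1}-\bar C_2)$ using the conjugate of $2\Xi-C_3 = P^{-1}-C_2$. Assembling and multiplying through by $C_1$, I expect the two contributions to combine — after using $|P^{-1}-C_2|^2 = (P^{-1}-C_2)(\bar P^{-1}-\bar C_2)$ and expanding — into exactly $-\tfrac{2}{C_1\bar C_1 P\bar P}(\bar C_1 C_2 - (|C_2|^2-C_4)\bar C_1\bar P)$; the main obstacle is tracking the algebra of the complex constants and powers of $P,\bar P$ so that everything matches the stated right-hand side, but no new geometric input beyond \eqref{EqDerASDErnst}, \eqref{EqSelfDualSelfProduct}, the two vanishing conditions, and Proposition \ref{PropConditionalEqualityPXi} is needed.
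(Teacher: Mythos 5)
Your route is the paper's route: use Proposition \ref{PropConditionalEqualityPXi} to write $\nabla_aP=-2P^2\mathcal{H}_{ba}t^b$, square with \eqref{EqSelfDualSelfProduct} and \eqref{EqDefP} for the second identity, compute $\nabla_a t^2$ from the Killing equation and $\mathcal{B}_{ab}=0$ for the first, and take the divergence of $\nabla_aP$ (Maxwell kills $\nabla^a\mathcal{H}_{ba}$, the $\nabla t$ term becomes an $\mathcal{H}\cdot\mathcal{F}$ contraction) for the third. The middle identity is done correctly, and no input beyond what you cite is needed.

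However, there is a concrete bookkeeping error that, as written, makes the first and third derivations fail to produce the stated right-hand sides: the decomposition is $F_{ab}=\mathcal{F}_{ab}+\bar{\mathcal{F}}_{ab}$ with no extra factor, so $\nabla_ct^2=t^aF_{ca}=2\Re[t^a\mathcal{F}_{ca}]=-2\Re[(4\bar\Xi-2\bar C_3)\nabla_c\Xi]$, i.e.\ $\nabla_c t^2=-\nabla_c\lvert 2\Xi-C_3\rvert^2$; your line $t^aF_{ca}=2t^a\mathcal{F}_{ac}+2t^a\bar{\mathcal{F}}_{ac}$ is off by a factor $-2$, and taken literally it gives $t^2-2\lvert 2\Xi-C_3\rvert^2$ constant, not $t^2+\lvert P^{-1}-C_2\rvert^2$ constant (you silently restore the correct sign at the end). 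The same slip recurs in the third identity: $\mathcal{H}_{ac}\nabla^ct^a=\tfrac12\mathcal{H}_{ac}F^{ca}=\tfrac12\mathcal{H}_{ac}\mathcal{F}^{ca}=(\bar P^{-1}-\bar C_2)/(C_1^2P^4)$, half of what you wrote, so the second contribution to $\Box_gP$ should be $-2(\bar P^{-1}-\bar C_2)/(C_1^2P^2)$ rather than $-4(\bar P^{-1}-\bar C_2)/(C_1^2P^2)$. With your doubled term the assembly you anticipate does not close (uncancelled $P^{-2}\bar P^{-1}$ and $\bar C_2P^{-2}$ terms remain); with the correct factor one gets $\Box_gP=-\tfrac{2}{C_1^2P\bar P}\bigl(C_2-(|C_2|^2-C_4)\bar P\bigr)$, which is exactly the stated identity, matching the paper's computation. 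So the method is right and identical to the paper's, but the factor-of-two/sign treatment of the split into anti-self-dual and self-dual parts must be corrected for the proof to actually go through.
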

\begin{proof}
We can calculate
\[ \nabla_a t^2 = 2t^b\nabla_at_b = -F_{ba}t^b =
-2\Re[\mathcal{F}_{ba}t^b] \]
The vanishing of $\mathcal{B}_{ab}$ and Proposition
\ref{PropConditionalEqualityPXi} together imply
\[ \nabla_a t^2 = -4 \Re[ (2\bar\Xi - \bar{C}_3)\mathcal{H}_{ba}t^b]
= -2 \Re[ (\frac{1}{\bar P} - \bar{C}_2)\nabla_a\frac{1}{P}] = -\nabla_a
\left| \frac{1}{P} - C_2 \right|^2
\]
The first claim follows as $\mathcal{M}$ is simply connected. 
Next, from Proposition \ref{PropConditionalEqualityPXi} we get
\[ \nabla_a P = \nabla_a \frac{1}{2\Xi + C_2 - C_3} = -
\frac{2\nabla_a\Xi}{(2\Xi+ C_2 - C_3)^2} = -2P^2 \mathcal{H}_{ba}t^b \]
So
\[ \nabla_aP\nabla^aP = 4P^4 \mathcal{H}_{ba}t^b\mathcal{H}^{ca}t_c =
P^4 \mathcal{H}^2 t^2 = - \frac{t^2}{C_1^2} \]
where we used (\ref{EqSelfDualSelfProduct}) and the definition for
$P$. We can also calculate directly the D'Alembertian
\begin{align*}
\Box_g P & = - 2\nabla^a(P^2\mathcal{H}_{ba}t^b) \\
& = -2 \mathcal{H}_{ba} (2P\nabla^aP t^b + \frac12 P^2F^{ab}) \\
& = 2 \mathcal{H}_{ba}(4P^3\mathcal{H}^{ca}t_ct^b + \frac12
P^2\mathcal{F}^{ba}) \\
& = 2 P^3 \mathcal{H}^2 t^2 + 2 P^2 (\frac{1}{\bar{P}}-\bar{C}_2)\mathcal{H}^2 \\
& = 2P^2\mathcal{H}^2 \left[ P\left(-\left| \frac{1}{P} - C_2
\right|^2 + C_4\right) + \frac{1}{\bar{P}}-\bar{C}_2\right] \\
& = 2P^2\mathcal{H}^2 \left[ \left(\frac{1}{\bar{P}} -
\bar{C}_2\right)\left(1-P\left(\frac{1}{P}-C_2\right)\right) + C_4 P \right] \\
&= \frac{2}{C_1^2 P}\left( C_2(\frac{1}{\bar{P}} - \bar{C}_2) + C_4\right)
\end{align*}
from which the third identity follows by simple algebraic
manipulations.
\end{proof}

\begin{rmk}\label{RmkPIdentities} If we further impose the condition that $C_1\bar{C}_2$ is
real, then the imaginary part of the third identity becomes
\[ \Im ( \Box_g C_1 P) = \frac{ 2( | C_2|^2 - C_4) }{|C_1 P|^2} \Im
(C_1 P)  \]
which will be useful later.
In the global case, we can again match the data at spatial infinity to
see that $C_4 = |C_2|^2 - 1 = M^2/q^2 - 1$ (the condition relating
$C_2$ and $C_4$ in Theorem \ref{ThmMain} is directly satisfied); the 
third identity then 
reads:
\[ (q_E + i q_B)\Box_gP = - \frac{2}{q^2P\bar{P}}\left( M - (q_E - i q_B)
\bar{P}\right) \]
\end{rmk}
An immediate consequence of the above proposition is that $(\nabla C_1
P)^2$ is real. Writing the complex quantity $C_1 P=y+iz$, where $y$
and $z$ are real-valued, we see that this
implies 
\[\nabla^ay \nabla_az = 0\]
Furthermore, by Lemma \ref{LemmaNonNullTF}, we have that, with the possible
exception on sets of co-dimension $\geq 1$, $t^2\neq 0$. This leads to the 
useful
observation that, with the possible exception on those points, 
$(\nabla y)^2$ and $(\nabla z)^2$ cannot simultaneously vanish, and 
in particular $\nabla_a y$ and $\nabla_a z$ are not simultaneously 
null, and thus rule out the case where the two are aligned. We
summarize in the following
\begin{cor}\label{CorPropertiesyz}
Letting $C_1P=y+iz$, we know that on any open set
\begin{enumerate}
\item $P$ is not locally constant
\item $\nabla_a y$ and $\nabla_a z$ are mutually orthogonal
\item $\nabla_a y$ and $\nabla_a z$ cannot be both null
\item $\nabla_a y$ and $\nabla_a z$ cannot be parallel
\end{enumerate}
\end{cor}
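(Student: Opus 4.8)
The plan is to read off Corollary \ref{CorPropertiesyz} directly from the three identities of Proposition \ref{PropPIdentities} together with Lemma \ref{LemmaNonNullTF}, rather than to prove anything genuinely new. Since the statement is a package of four elementary consequences, I would organize the argument so that the single substantive input—the non-degeneracy of $t^2$—is invoked once and the rest is bookkeeping with the complex scalar $C_1P = y+iz$.

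First I would establish (1): if $P$ were locally constant on an open set, then $\nabla_a P = 0$ there, so by the second identity of Proposition \ref{PropPIdentities}, $t^2 = -C_1^2(\nabla P)^2 = 0$ on that set, contradicting assertion (3) of Lemma \ref{LemmaNonNullTF} (density of $\{t^2\neq 0\}$). Next, (2) and (3) come from the reality of $(\nabla C_1 P)^2$, which was noted immediately before the corollary as a consequence of the second identity (the right side $-t^2/C_1^2$ is manifestly real up to the fixed constant $C_1^2$; more precisely $(\nabla C_1 P)^2 = C_1^2(\nabla P)^2 = -t^2$ is real). Expanding $(\nabla(y+iz))^2 = (\nabla y)^2 - (\nabla z)^2 + 2i\,\nabla^a y\,\nabla_a z$, the vanishing of the imaginary part gives orthogonality $\nabla^a y\,\nabla_a z = 0$, which is (2). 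For (3) and (4), I would argue on the dense set where $t^2\neq 0$ and then pass to the closure: there $(\nabla y)^2 - (\nabla z)^2 = -t^2 \neq 0$, so $(\nabla y)^2$ and $(\nabla z)^2$ are not both zero, giving (3); and if $\nabla_a y$ and $\nabla_a z$ were parallel at a point (with $t^2\neq 0$), then together with orthogonality one of them would have to be null, again forcing $(\nabla y)^2 = (\nabla z)^2 = 0$ there, contradicting $t^2\neq 0$. Since $y,z$ are smooth, the open conditions "$\nabla y,\nabla z$ not both null" and "not parallel" that hold on a dense set—being phrased here as statements valid on \emph{any} open set (each open set meeting the dense set)—are exactly what is claimed.

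The only mild subtlety, and the step I would be most careful about, is the logical form of the conclusions: claims (3) and (4) are asserted to hold on "any open set," yet $t^2$ may vanish on a codimension-$\geq 1$ set, so the pointwise argument above does not literally apply at every point. The resolution is that any nonempty open set intersects the dense set $\{t^2\neq 0\}$, so on any open set there exist points where $\nabla_a y$ and $\nabla_a z$ are neither both null nor parallel; this is the intended reading, consistent with how the corollary is used later (to rule out degenerate alignment cases in the tetrad construction). I would state this interpretation explicitly in the proof to avoid ambiguity, and note that (4) in fact follows from (2) and (3) combined: two mutually orthogonal vectors that are parallel must be null, so non-nullity of at least one of them forbids parallelism. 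This makes the whole corollary a two-line deduction once Proposition \ref{PropPIdentities} and Lemma \ref{LemmaNonNullTF} are in hand, with no real obstacle beyond stating the density caveat cleanly.
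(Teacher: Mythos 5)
Your proposal is correct and follows essentially the same route as the paper: reality of $(\nabla C_1P)^2 = -t^2$ from the second identity of Proposition \ref{PropPIdentities} gives orthogonality, density of $\{t^2\neq 0\}$ from Lemma \ref{LemmaNonNullTF} rules out simultaneous nullity and (with orthogonality) parallelism, and the non-constancy of $P$ comes from the same circle of facts. Your explicit handling of the ``dense set of points in any open set'' caveat matches the paper's own qualification about possible exceptions on sets of co-dimension $\geq 1$.
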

Replacing $C_1P$ by $y+iz$, and imposing the condition $C_1\bar{C}_2$
is real, we can also rewrite
\[  t^2 = - \frac{C_1\bar{C}_1 - 2C_1\bar{C}_2 y}{y^2 +
z^2}- |C_2|^2  + C_4 \]

Since $\mathcal{H}_{ab}$ is an anti-self-dual two form with 
non-vanishing norm, it has two distinct principal null directions,
which we denote by $\lbar^a$ and $l^a$, with the normalization 
$g_{ab}\lbar^al^b = -1$. The alignment of $\mathcal{H}_{ab}$ with
$\mathcal{F}_{ab}$ (via vanishing of $\mathcal{B}_{ab}$) allows the
following expressions
\begin{align*}
\mathcal{H}_{ab} &= \frac{1}{2C_1P^2}(\lbar_al_b - l_a\lbar_b +
i\epsilon_{abcd}\lbar^cl^d)\\
\mathcal{F}_{ab} &=
\frac{\frac{1}{\bar{P}}-\bar{C}_2}{C_1P^2}(\lbar_al_b
- l_a\lbar_b + i\epsilon_{abcd}\lbar^cl^d)
\end{align*}

By the assumption $\mathcal{Q}_{abcd}=0$, the principal null
directions of $\mathcal{H}_{ab}$ are repeated null directions of the
anti-self-dual Weyl tensor, and thus the space-time is algebraically
special (Type D). On a local neighborhood, we can take $m,\bar{m}$ 
complex smooth vector fields to complete the null tetrad
$\{m,\bar{m},\lbar,l\}$ (see Appendix \ref{SectTetrad}), and in the
tetrad (spinor) formalism, the only non-zero Weyl scalar is 
\begin{subequations}\label{SubeqAlignment}
\begin{equation}
\Psi := \Psi_0 = W(\bar{m},\lbar,m,l) =
-\frac{1}{C_1^2P^3}\left(\frac{1}{\bar{P}} - \bar{C}_2\right) 
\end{equation}
the only non-zero component of the Maxwell scalars is
\begin{equation}
\Upsilon := \Upsilon_0 = \mathcal{H}_{ab}l^a\lbar^b = \frac{1}{2C_1P^2}
\end{equation}
and the only non-zero component of the Ricci scalars is
\begin{equation}
\Phi := \Phi_0 = T(\lbar, l) = T(m,\bar{m}) =
\frac{1}{C_1\bar{C}_1P^2\bar{P}^2}
\end{equation}
\end{subequations}
Notice the following symmetry relations
\begin{equation}\label{EqSymmetryScalarVars}
\bar\Psi = \ubar\Psi~,\quad \ubar{\bar\Psi} = \Psi~,\quad
\ubar{\bar\Upsilon} = -\Upsilon~,\quad \bar\Phi = \ubar\Phi = \Phi
\end{equation}

Now, from 
\[ 2C_1P^2\mathcal{H}_{ab}t^a = - C_1\nabla_b P \]
we can calculate
\begin{subequations}\label{SubeqDecompositionGradyz}
\begin{align}
\nabla_b y &= \lbar_b~l_at^a - l_b~\lbar_at^a & (\nabla
y)^2 &= 2l_a\lbar_bt^at^b \\
\nabla_b z &= \epsilon_{bacd}t^a\lbar^cl^d & (\nabla z)^2 &=
2\lbar_al_bt^at^b + t^2
\end{align}
\end{subequations}
So we need expressions for $g(t,\lbar),g(t,l)$. From the fact that
$\mathcal{L}_t\mathcal{H} = 0$, we have
\[ [t,\lbar]_al_b + \lbar_a[t,l]_b - [t,l]_a\lbar_b - l_a[t,\lbar]_b =
0 \]
which we can contract against $l$ and $\lbar$ (using the fact
that $[t,l]_al^a = \partial_t l^2 = 0$) to arrive at
\begin{subequations}
\begin{align}
[t,\lbar]_a &= \lbar_a[t,l]_b\lbar^b = K_t\lbar_a \\
[t, l]_a &= l_a[t,\lbar]_bl^b = -K_tl_a
\end{align}
\end{subequations}
where the function $K_t:= [t,l]_b\lbar^b$. Now
\[ \partial_t (t_b\lbar^b) = \mathcal{L}_t(t_b\lbar^b) = K_tt_b\lbar^b
\]
and similarly
\[ \partial_t (t_bl^b) = -K_tt_bl^b \]
Lastly, we compute an expression for $t$ by 
\[ -\frac{\mathcal{H}^{cb}\nabla_bP}{2P^2} =
\frac{1}{4}\mathcal{H}^2t^c = - \frac{t^c}{4C_1^2P^4} \]
Therefore, by a direct computation
\begin{equation}\label{EqDecompositiont} t_c = -(l_at^a)\lbar_c - (\lbar_at^a)l_c -
\epsilon_{cabd}(\nabla^az)\lbar^bl^d \end{equation}

Next is the main lemma of this section
\begin{lemma}\label{LemmaNormGradyz}
Assuming $\mathcal{B}_{ab}$ and $\mathcal{Q}_{abcd}$ vanish, 
$C_1\bar{C}_2$ is real, and $|C_2|^2 - C_4 = 1$, we have the norms
\begin{subequations}
\begin{align}
(\nabla z)^2 &= \frac{\mathfrak{A} - z^2}{y^2+z^2}\\
(\nabla y)^2 &= \frac{\mathfrak{A} + y^2 + |C_1|^2 -
2C_1\bar{C}_2y}{y^2+z^2}
\end{align}
\end{subequations}
where $\mathfrak{A}$ is a non-negative constant with $z^2 \leq
\mathfrak{A}$.
\end{lemma}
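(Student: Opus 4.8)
The plan is to reduce the lemma to the single assertion that $\mathfrak{A} := (y^2+z^2)(\nabla z)^2 + z^2$ is constant, and to establish that by computing the Hessian of $C_1P$. First, Proposition \ref{PropPIdentities} shows that $(\nabla C_1P)^2 = C_1^2(\nabla P)^2 = -t^2$ is real; separating real and imaginary parts (as already observed after Corollary \ref{CorPropertiesyz}) yields $\nabla_a y\,\nabla^a z = 0$ and $(\nabla y)^2 - (\nabla z)^2 = -t^2$. Substituting the value of $t^2$ — which, under $C_1\bar{C}_2\in\mathbb{R}$, reads $t^2 = -\frac{C_1\bar{C}_1 - 2C_1\bar{C}_2 y}{y^2+z^2} - |C_2|^2 + C_4$ — and using the hypothesis $|C_2|^2 - C_4 = 1$, the formula claimed for $(\nabla y)^2$ follows from the one for $(\nabla z)^2$ by pure algebra; and the latter is precisely the statement that $\mathfrak{A} = |C_1|^2P\bar{P}(\Im C_1\nabla P)^2 + (\Im C_1P)^2$ is constant. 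So everything reduces to showing $\mathfrak{A}$ is constant, $\mathfrak{A}\geq 0$, and $z^2\leq\mathfrak{A}$.

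For the constancy, the key computation is that of the full Hessian $\nabla_a\nabla_b(C_1P)$. Differentiating the relation $\nabla_b(C_1P) = -2C_1P^2\mathcal{H}_{ab}t^a$ established before the lemma calls for $\nabla_a\mathcal{H}_{cb}$, which is supplied by (\ref{EqDerASDErnst}) once one substitutes $\mathcal{B}_{ab}=0$ (writing $\mathcal{F}_{ab}$ as a scalar multiple of $\mathcal{H}_{ab}$), substitutes $\mathcal{Q}_{abcd}=0$ (replacing the complex Weyl tensor by $3P\,(\mathcal{F}\tilde\otimes\mathcal{H})$), and simplifies via the anti-self-dual product identities — notably (\ref{EqSelfDualSelfProduct}) — together with the definitions (\ref{EqDefXi}) and (\ref{EqDefP}). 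The result expresses $\nabla_c\mathcal{H}_{ab}$, and hence $\nabla_a\nabla_b(C_1P)$, in closed form in terms of $P$, $\nabla P$, $g_{ab}$, $t_a$, and the rescaled stress-energy $T_{ab}$, the last of which is itself algebraically determined by $\mathcal{H}_{ab}$ through (A3). Taking imaginary parts gives $\nabla_a\nabla_b z$; feeding this, together with $\Box_g y$ and $\Box_g z$ from Proposition \ref{PropPIdentities} and Remark \ref{RmkPIdentities} (where the hypotheses $C_1\bar{C}_2\in\mathbb{R}$ and $|C_2|^2-C_4=1$ enter), into $\nabla_a\mathfrak{A}$ should give zero. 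It helps to note first, from the commutator relations $[t,\lbar]=K_t\lbar$, $[t,l]=-K_tl$ and their consequences $\partial_t(t_b\lbar^b)=K_tt_b\lbar^b$, $\partial_t(t_bl^b)=-K_tt_bl^b$, that $(t_al^a)(t_b\lbar^b)$ is invariant under the flow of $t$, while $y$, $z$, and $t^2$ manifestly are (the first two because $t^a\nabla_ay=t^a\nabla_az=0$, by (\ref{SubeqDecompositionGradyz})); since $\mathfrak{A} = (y^2+z^2)(2\lbar_al_bt^at^b + t^2) + z^2$ by (\ref{SubeqDecompositionGradyz}), $\mathfrak{A}$ is $t$-invariant, so it is enough to check its directional derivatives along $l$, $\lbar$, and $m$.

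The positivity assertions are then immediate once $\mathfrak{A}$ is known to be constant. Assumption (L) gives $\mathcal{H}^2\neq 0$ everywhere, so $P$ is finite and nonzero and $C_1\neq 0$, whence $y^2+z^2 = |C_1P|^2 > 0$. And $(\nabla z)^2\geq 0$: the vector $\nabla^az = \epsilon_{bacd}t^b\lbar^cl^d$ is orthogonal to each of $t$, $\lbar$, and $l$, whose span — wherever the three are independent — contains the timelike $2$-plane $\mathrm{span}\{\lbar,l\}$ and, the ambient metric being Lorentzian, is therefore a timelike hyperplane with spacelike orthogonal complement; where the three are dependent, $\nabla^az=0$. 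Thus $\mathfrak{A} - z^2 = (y^2+z^2)(\nabla z)^2\geq 0$, so $z^2\leq\mathfrak{A}$ and in particular $\mathfrak{A}\geq 0$.

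The main obstacle is the bookkeeping in the Hessian step. Unlike the vacuum computation of \cite{Mars99}, here both (\ref{EqDerASDErnst}) and the Hessian of $C_1P$ carry a full complement of Ricci (electromagnetic stress-energy) contributions, and one must carry these along faithfully and check that — together with the part of $t$ transverse to $\mathrm{span}\{l,\lbar\}$, which is not a function of $y$ and $z$ alone — they cancel from $\nabla_a\mathfrak{A}$. Keeping the source terms correctly accounted for throughout is the real work; the null-tetrad formalism of Appendix \ref{SectTetrad}, with its explicit tetrad components (\ref{SubeqAlignment}) of the Weyl, Maxwell, and Ricci scalars, is the natural framework in which to organize it.
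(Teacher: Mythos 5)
You carry out correctly the parts of the lemma that are routine: the passage from the $(\nabla z)^2$ formula to the $(\nabla y)^2$ formula via $(\nabla C_1P)^2=-t^2$, the expression for $t^2$ and the hypothesis $|C_2|^2-C_4=1$ is exactly how the paper concludes as well, and your geometric argument that $\nabla^a z$ lies in the spacelike orthogonal complement of the timelike plane $\mathrm{span}\{\lbar,l\}$, giving $(\nabla z)^2\geq 0$ and hence $z^2\leq\mathfrak{A}$, is a correct (and equivalent) substitute for the paper's observation that $A=2\eta\bar\eta\,|C_1P|^4\geq 0$. But the heart of the lemma --- that $\mathfrak{A}=(y^2+z^2)(\nabla z)^2+z^2$ has vanishing gradient --- is precisely what you do not establish. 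Your proposal is to obtain $\nabla_c\mathcal{H}_{ab}$ in closed form from (\ref{EqDerASDErnst}) with $\mathcal{B}_{ab}=\mathcal{Q}_{abcd}=0$ (note this requires dividing by $\bar{P}^{-1}-\bar{C}_2$, which can vanish on codimension-one sets, so even the closed form needs a density/continuity argument via Lemma \ref{LemmaNonNullTF}), hence the Hessian of $C_1P$, and then to assert that substituting into $\nabla_a\mathfrak{A}$ ``should give zero,'' explicitly deferring the cancellation as ``the real work.'' That cancellation \emph{is} the lemma; with it unverified, what you have is a computational strategy, not a proof.

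It is worth seeing where the paper locates the genuine difficulty, because your sketch does not engage it. After the Maxwell equations are converted (using the triple alignment) into $DP=\theta P$, $\ubar{D}P=\ubar{\bar\theta}P$, $\delta P=\eta P$, $\bar\delta P=\ubar{\bar\eta}P$, the Bianchi identities yield $\xi(3\Psi+\Phi)=0$ and $\vartheta(3\Psi-\Phi)=0$, and one must rule out $3\Psi\pm\Phi=0$ on open sets: this uses the explicit proportionality factors to turn $3\Psi\pm\Phi=0$ into polynomial relations in $y,z$ whose differentials would force $\nabla y$ and $\nabla z$ to be parallel or $P$ to be locally constant, contradicting Corollary \ref{CorPropertiesyz}. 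Only then does one get $\xi=\vartheta=\ubar\xi=\ubar\vartheta=0$ (in the vacuum case this is Goldberg--Sachs; here Kundt--Thompson gives only $\xi\vartheta=0$, so the proportionality factor is essential), after which the null structure equations collapse to (\ref{EqNullStructRel1})--(\ref{EqNullStructRel2}) and give $DA=0$, $\delta A=-\delta(z^2)$ for $A=|C_1P|^2(\nabla z)^2$; since $Dz=\ubar{D}z=0$, the quantity $\mathfrak{A}=A+z^2$ is constant. In your covariant Hessian route this same obstruction does not disappear --- it reappears, disguised, inside the cancellation of the Ricci source terms and of the part of $t^a$ transverse to $\mathrm{span}\{l,\lbar\}$ that you acknowledge but never perform. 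Until that computation (or an argument replacing it) is actually carried out, the constancy of $\mathfrak{A}$, and with it the two displayed norm formulas, remains unproved.
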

\begin{proof}
We will use the tetrad formalism of Klainerman-Ionescu (see Appendix
\ref{SectTetrad}) extensively in the following computation. By the
alignment properties (\ref{SubeqAlignment}) and the symmetry
properties (\ref{EqSymmetryScalarVars}), the Maxwell equations
simplify to 
\begin{align*}
\ubar{D}\Upsilon &= -2\ubar{\bar\theta}\Upsilon & D\Upsilon &=
-2\theta\Upsilon \\
-\delta\Upsilon &= 2\eta\Upsilon & -\bar{\delta}\Upsilon &= 2
\ubar{\bar\eta}\Upsilon
\end{align*}
from which we arrive at 
\begin{equation}\label{EqNullTetradGradP}
DP = \theta P~,\quad \ubar{D}P = \ubar{\bar\theta}P~, \quad \delta P =
\eta P~,\quad \bar\delta P = \ubar{\bar\eta}P
\end{equation}
From the decomposition (\ref{SubeqDecompositionGradyz}) we then have
\begin{subequations}
\begin{align}
\label{EqGradyTetradCoeff}\nabla_a y &= - \theta C_1P \lbar_a -
\ubar{\bar\theta}C_1P l_a\\
\label{EqGradzTetradCoeff}i\nabla_a z &= \eta C_1P \bar{m}_a +
\ubar{\bar\eta}C_1Pm_a
\end{align}
\end{subequations}
Using the fact that $y$ and $z$ are real, taking complex conjugates on
the above equations gives us
\begin{equation}\label{EqRealityPTimesTetradCoeff}
\theta C_1P = \bar\theta\bar{C}_1\bar P~,\quad \ubar\theta\bar{C}_1\bar P =
\ubar{\bar\theta}C_1P~,\quad \eta C_1P = -\ubar\eta\bar{C}_1\bar P
\end{equation}
The Bianchi equations become
\begin{subequations}
\begin{align}
\label{EqBianchiRel1} 0 &= \xi (3\Psi + \Phi) \\
\label{EqBianchiRel2} 0 &= \vartheta (3\Psi - \Phi) \\
-D(\Psi + \frac12\Phi) &= 3\theta\Psi + \bar\theta\Phi \\
\bar\delta(\Psi - \frac12\Phi) &= -3\ubar{\bar\eta}\Psi +
\bar\eta\Phi\\
-\delta\Phi &= 2(\eta+\ubar\eta)\Phi \\
D\Phi &= -2(\bar\theta +\theta)\Phi
\end{align}
\end{subequations}
Because of the triple alignment given by $\mathcal{B}_{ab} = 0$ and
$\mathcal{Q}_{abcd} = 0$, the latter four equations contain
essentially the same information as the Maxwell equations. We examine
the first two in more detail. Consider the equation $3\Psi \pm \Phi =
0$. This implies
\[ 3\bar{C}_1\bar{C}_2\bar{P}^2 - 3 \bar{C}_1\bar{P} \pm C_1 P = 0 \]
or
\begin{align*}
\frac{3 C_1 \bar{C}_2}{C_1\bar{C}_1} (y^2 - z^2)  -(3\mp 1) y  &= 0\\
\frac{6 C_1\bar{C}_2}{C_1\bar{C}_1} yz - (3\pm 1) z &= 0
\end{align*}
Taking derivatives, we have
\begin{align*}
(\frac{6 C_1 \bar{C}_2}{C_1\bar{C}_1}y - 3 \pm 1)\nabla y &= \frac{6 C_1
\bar{C}_2}{C_1\bar{C}_1}z\nabla z \\
(\frac{6 C_1\bar{C}_2}{C_1\bar{C}_1} y - 3 \mp 1)\nabla z &= - \frac{6
C_1\bar{C}_2}{C_1\bar{C}_1} z\nabla y
\end{align*}
By assumption that $C_1\bar{C}_2$ is real, all the coefficients in the
above two equations are real. Suppose the equation $3\Psi\pm\Phi = 0$ is satisfied on an open-set, 
as $\nabla y$ and $\nabla z$ cannot be parallel by Corollary
\ref{CorPropertiesyz}, we must have then 
\[ (\frac{6 C_1 \bar{C}_2}{C_1\bar{C}_1}y - 3 \pm 1)\nabla y =
\frac{6 C_1\bar{C}_2}{C_1\bar{C}_1}z\nabla z= 0 \]
This implies that $y$ and $z$ are locally constant, which contradicts
statement (1) in Corollary \ref{CorPropertiesyz}. Therefore an equation of the form $3\Psi\pm \Phi = 0$
cannot be satisfied on open sets. 

Applying to the Bianchi identities
(\ref{EqBianchiRel1},\ref{EqBianchiRel2}), we see that $\xi = \vartheta =
\ubar\xi = \ubar\vartheta = 0$. The relevant null structure equations,
simplified with the above observation, are
\begin{subequations}
\begin{align}
\label{EqNullStructRel1} (D+\Gamma_{124})\eta &=\theta(\ubar\eta -\eta)\\
\label{EqNullStructRel2} -\delta\theta &= \zeta\theta + \eta(\theta-\bar\theta)
\end{align}
\end{subequations}
Define the quantity $A = C_1\bar{C}_1P\bar{P}(\nabla z)^2$. Equations
(\ref{EqGradzTetradCoeff}) and (\ref{EqRealityPTimesTetradCoeff}) 
implies that $(\nabla z)^2 = 2 \eta\bar{\eta}C_1\bar{C}_1P\bar{P}$, so
\begin{align*}
0 \leq A &= 2\eta\bar\eta C_1^2\bar{C}_1^2P^2\bar{P}^2\\
&= 2C_1^2\bar{C}_1^2\ubar\eta\ubar{\bar\eta} P^2\bar{P}^2 \\
& = - (y^2+z^2) - (C_1\bar{C}_1 - 2 C_1\bar{C}_2 y) -
2\theta\ubar{\theta}C_1^2\bar{C}_1^2 P^2\bar{P}^2
\end{align*}
where in the last line we used Proposition \ref{PropPIdentities},
Corollary \ref{CorPropertiesyz}, Equations
(\ref{EqGradyTetradCoeff}) and (\ref{EqRealityPTimesTetradCoeff}), and
the assumption that $|C_2|^2 - C_4 = 1$.
By using (\ref{EqNullStructRel1},\ref{EqNullStructRel2}) we calculate
\begin{align*}
D(\eta\bar\eta) &= \theta(\ubar\eta-\eta)\bar\eta +
\bar\theta(\ubar{\bar\eta}-\bar\eta)\eta \\
\delta(\theta\ubar\theta) &= -\eta(\theta-\bar\theta)\ubar\theta -
\ubar\eta(\ubar\theta -\ubar{\bar\theta})\theta
\end{align*}
Thus, with judicial applications of (\ref{EqRealityPTimesTetradCoeff})
\begin{align*}
DA &= 2C_1^2\bar{C}_1^2[ \theta(\ubar\eta-\eta)\bar\eta +
\bar\theta(\ubar{\bar\eta}-\bar\eta)\eta ] P^2\bar{P}^2 +
4C_1^2\bar{C}_1^2\eta\bar\eta (\theta + \bar\theta) P^2\bar{P}^2 \\
& = 0\\
\delta A &= -\delta(z^2) +
2C_1^2\bar{C}_1^2P^2\bar{P}^2[\eta(\theta-\bar\theta)\ubar\theta +
\ubar\eta(\ubar\theta -\ubar{\bar\theta})\theta] \\
& \qquad -
4C_1^2\bar{C}_1^2P^2\bar{P}^2(\eta + \ubar{\eta})\theta\ubar{\theta}
\\ &= -\delta(z^2)
\end{align*}
Since $Dz = \ubar{D}z = 0$, we have that the function $A+z^2$ is constant. Define $\mathfrak{A} = A+z^2$. 
The nonnegativity of $A$
guarantees that $z^2 \leq \mathfrak{A}$, and we have
\[ (\nabla z)^2 = \frac{A}{C_1\bar{C}_1P\bar{P}} =
\frac{\mathfrak{A}-z^2}{y^2+z^2} \]
and
\[ (\nabla y)^2 = (C_1\nabla P)^2 + (\nabla z)^2 =
\frac{\mathfrak{A}+y^2 + C_1\bar{C}_1 - 2 C_1\bar{C}_2y}{(y^2+z^2)} \]
as claimed.
\end{proof}

\begin{rmk}
In the proof above we showed that $\xi = \vartheta = \ubar\xi
=\ubar\vartheta = 0$, a conclusion that in the vacuum case
\cite{Mars99} is easily reached by the Goldberg-Sachs theorem. It is
worth noting that in general, the alignment of the principal null
directions of the Maxwell form and the Weyl tensor is not enough to
justify the vanishing of all four of the involved quantities. Indeed,
the Kundt-Thompson theorem \cite{ExactSolutions} only guarantees that
$\xi\vartheta = \ubar\xi\ubar\vartheta = 0$. In our special
case the improvement comes from the fact that we not only have
alignment of the principal null directions, but also knowledge
of the proportionality factor. This allows us to write down the
polynomial expression in $P$ and $\bar{P}$ which we used to eliminate
the case where only one of $\xi$ and $\vartheta$ vanishes. 
\end{rmk}

In the remainder of this section, we assume that $C_1\bar{C}_2$ is
real and $|C_2|^2 - C_4 = 1$ and prove assertions (3) and (4) in
Theorem \ref{ThmMain}. Let us first define two auxillary vector fields. 
On our space-time, let
\begin{equation}\label{EqOriginalDefn}
n^a = (\mathfrak{A}+y^2)t^a + (y^2+z^2)(t_bl^b\lbar^a
+ t_b\lbar^bl^a)
\end{equation}
Define $\mathcal{M}_\mathfrak{A}:= \{ p\in\mathcal{M} |
z^2(p)<\mathfrak{A} \}$. On this open subset we can define 
\begin{equation}
b^a = \frac{\nabla^a z}{(\nabla z)^2} =
\frac{y^2+z^2}{\mathfrak{A} - z^2}\nabla^a z
\end{equation}

We also define the open subsets $\mathcal{M}_l := \{ p\in\mathcal{M} |
(t_al^a)(p) \neq 0 \}$ and $\mathcal{M}_\lbar := \{ p\in\mathcal{M} |
(t_a\lbar^a)(p) \neq 0 \}$. Now, notice that in our calulcations above
using the tetrad formalism, we have only specified the ``direction''
of $l,\lbar$ and their lengths relative to each other. We still have
considerable freedom left to fix the lapse of one of the two vector
fields and still retain the use of our formalism. On
$\mathcal{M}_\lbar$, we can choose the vector field $\lbar$ such that 
$t_a\lbar^a = 1$ (similarly
for $l$ on $\mathcal{M}_l$; the calculations with respect to
$\mathcal{M}_l$ are almost identical to that on $\mathcal{M}_\lbar$,
so without loss of generality, we will perform calculations below with
respect to $\mathcal{M}_{\lbar}$) and the vector field $l$ maintaining
$l_a\lbar^a = -1$. From (\ref{SubeqDecompositionGradyz}) and  Lemma
\ref{LemmaNormGradyz}, we have that on $\mathcal{M}_\lbar$ we can
write
\begin{equation}\label{EqMlbarDecompositionGrady}
\nabla_ay = -l_a +
\frac{\mathfrak{A}+y^2+ |C_1|^2 -2 C_1\bar{C}_2y}{2(y^2+z^2)}\lbar_a = -l_a
+ U\lbar_a
\end{equation}
which implies $l_at^a = U$, where $U$ is defined on the entirety of
$\mathcal{M}$ as 
\begin{equation}
U := l_at^a\lbar_bt^b = \frac12(\nabla y)^2
\end{equation}

We consider first a special case when $t^a$ is hypersurface
orthogonal. 
\begin{prop}\label{PropCharacterHypersurfaceOrthogonal}
The following are equivalent:
\begin{enumerate}
\item $z$ is locally constant on an open subset
$\mathcal{U}\subset\mathcal{M}$
\item $\mathfrak{A}$ vanishes on $\mathcal{M}$
\item $z$ vanishes on $\mathcal{M}$
\end{enumerate}
\end{prop}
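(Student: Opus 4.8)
The plan is to prove the three equivalences in a cycle: $(3)\Rightarrow(2)\Rightarrow(1)\Rightarrow(3)$, with the first two implications being essentially immediate and the content concentrated in $(1)\Rightarrow(3)$.

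First, $(3)\Rightarrow(2)$ is trivial: if $z\equiv 0$ on $\mathcal{M}$ then $(\nabla z)^2\equiv 0$, and from Lemma \ref{LemmaNormGradyz} we have $(\nabla z)^2 = (\mathfrak{A}-z^2)/(y^2+z^2) = \mathfrak{A}/y^2$ on the open dense set where this is defined, forcing $\mathfrak{A}=0$; since $\mathfrak{A}$ is a constant it vanishes everywhere. (One should note $y^2+z^2 = |C_1P|^2$ cannot vanish on an open set by Corollary \ref{CorPropertiesyz}(1), so the division is legitimate on a dense set, and then continuity does the rest.) Next, $(2)\Rightarrow(1)$ is also immediate: if $\mathfrak{A}=0$ then $(\nabla z)^2 = -z^2/(y^2+z^2)$; but $\mathfrak{A}$ is by construction non-negative and equals $A+z^2$ where $A\geq0$, so in fact $\mathfrak{A}=0$ forces $A\equiv 0$ and $z\equiv 0$ — actually this already gives $(3)$ directly, so it is cleaner to prove $(2)\Rightarrow(3)$: from $\mathfrak{A}=A+z^2$ with $A\geq 0$ and $\mathfrak{A}=0$ we get $z^2\leq 0$, hence $z\equiv 0$ on all of $\mathcal{M}$. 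Then $(3)\Rightarrow(1)$ is trivial. So the cycle reduces to establishing $(1)\Rightarrow(2)$ (or $(1)\Rightarrow(3)$).

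The main work is thus: assuming $z$ is locally constant on some open $\mathcal{U}$, show $\mathfrak{A}=0$. On $\mathcal{U}$ we have $\nabla_a z = 0$, hence $(\nabla z)^2 = 0$ there, hence $\mathfrak{A} = z^2$ on $\mathcal{U}$ (using Lemma \ref{LemmaNormGradyz} on the dense subset of $\mathcal{U}$ where $y^2+z^2\neq 0$ and $(\nabla z)^2$ is given by that formula, then continuity). Since $z$ is locally constant and $\mathfrak{A}$ is a global constant, this says $z^2$ equals the constant $\mathfrak{A}$ on $\mathcal{U}$. I would then revisit the structure/Maxwell equations: from (\ref{EqGradzTetradCoeff}), $i\nabla_a z = \eta C_1 P\,\bar{m}_a + \ubar{\bar\eta}C_1 P\, m_a$, so $\nabla_a z = 0$ on $\mathcal{U}$ forces $\eta C_1 P = \ubar{\bar\eta}C_1P = 0$ there, and since $C_1 P = y+iz$ is not locally constant (Corollary \ref{CorPropertiesyz}(1)) and in particular nonzero on a dense subset, we get $\eta = \ubar{\bar\eta} = 0$ on a dense subset of $\mathcal{U}$, hence (by the reality relation (\ref{EqRealityPTimesTetradCoeff}) and smoothness) $\eta=\ubar\eta=0$ on $\mathcal{U}$. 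But $A = C_1\bar{C}_1 P\bar{P}(\nabla z)^2 = 2\eta\bar\eta\, C_1^2\bar{C}_1^2 P^2\bar{P}^2$ from the proof of Lemma \ref{LemmaNormGradyz}, so $A\equiv 0$ on $\mathcal{U}$, whence $\mathfrak{A} = A + z^2 = z^2$ again — consistent but not yet zero. To close it, I use that $\mathfrak{A}$ is a \emph{constant} and $z$ is locally constant on $\mathcal{U}$ only: the point is that $z$ need not be globally constant, yet $z^2 = \mathfrak{A}$ is forced on $\mathcal{U}$; so I must propagate. Here I would invoke that $\delta A = -\delta(z^2)$ and $DA = \ubar D A = 0$ (shown in Lemma \ref{LemmaNormGradyz}), together with $Dz = \ubar Dz = 0$; since $A\equiv 0$ on $\mathcal{U}$ all derivatives of $A$ vanish there, giving $\delta(z^2) = 0$, i.e. $z$ is constant along the $m,\bar m$ directions as well, so $z$ is in fact (locally) constant in all tetrad directions on $\mathcal{U}$ — which we already knew. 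The genuinely new input must be an equation that ties $\mathfrak{A}$ to $0$ rather than to $z^2$; I expect this comes from examining $(\nabla y)^2 = (\mathfrak{A} + y^2 + |C_1|^2 - 2C_1\bar{C}_2 y)/(y^2+z^2)$ on $\mathcal{U}$ combined with a structure equation for $\theta$ analogous to the one used for $\eta$, or from the relation $t^2 = -(C_1\bar C_1 - 2C_1\bar C_2 y)/(y^2+z^2) - |C_2|^2 + C_4$ and the constraint $|C_2|^2 - C_4 = 1$: plugging $z^2 = \mathfrak{A}$ in and using an independent expression for $t^2$ (e.g. via (\ref{EqDecompositiont}) and the norms in (\ref{SubeqDecompositionGradyz})) should yield a polynomial identity in $y$ valid on the open set $\mathcal{U}$, and matching coefficients (legitimate since $y$ is non-constant, by Corollary \ref{CorPropertiesyz}(1), so takes a continuum of values) will force $\mathfrak{A}=0$.

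The hard part will be exactly this last coefficient-matching step — extracting from the system of structure/Bianchi equations (\ref{EqNullStructRel1})–(\ref{EqNullStructRel2}) and the explicit forms of the scalars (\ref{SubeqAlignment}) a polynomial relation in $y$ (with $z$ replaced by the constant $\sqrt{\mathfrak{A}}$) whose vanishing identically in $y$ forces $\mathfrak{A}=0$; one must be careful that the relevant open set on which both $y$ varies and all the tetrad identities hold is nonempty, which is guaranteed by Corollary \ref{CorPropertiesyz} and Lemma \ref{LemmaNonNullTF}. Once $\mathfrak{A}=0$ is obtained on $\mathcal{U}$, since $\mathfrak{A}$ is a global constant it is $0$ on $\mathcal{M}$, giving $(2)$, and then $(2)\Rightarrow(3)\Rightarrow(1)$ as above completes the cycle.
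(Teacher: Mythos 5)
Your easy implications ($(3)\Rightarrow(2)\Rightarrow(3)$ and $(3)\Rightarrow(1)$) are fine, but the heart of the proposition, $(1)\Rightarrow(2)$ (equivalently $(1)\Rightarrow(3)$), is never actually established. From $\nabla z\equiv 0$ on $\mathcal{U}$ you correctly get $z^2=\mathfrak{A}$ and $\eta=\ubar\eta=0$ there, note that this is ``consistent but not yet zero,'' and then only conjecture that some coefficient-matching in $y$ among the algebraic relations (the norms of Lemma \ref{LemmaNormGradyz}, the formula for $t^2$, the decompositions (\ref{SubeqDecompositionGradyz})) will force $\mathfrak{A}=0$. That step is not carried out, and it cannot be carried out by pointwise algebra alone: in the Kerr--Newman metric itself the locus $\{z^2=\mathfrak{A}\}$ is the rotation axis, on which $\nabla z=0$ and all of the algebraic identities you invoke hold with $\mathfrak{A}>0$; hence no polynomial identity in $y$ with $z$ replaced by $\pm\sqrt{\mathfrak{A}}$ can yield $\mathfrak{A}=0$. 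What forbids $\nabla z=0$ on an \emph{open} set with $z\neq 0$ is a second-order differential identity, and that is precisely the ingredient your sketch never uses.

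The missing idea — and the paper's entire argument for $(1)\Rightarrow(2)$ — is the imaginary part of the third identity in Proposition \ref{PropPIdentities}, in the form of Remark \ref{RmkPIdentities}: with $C_1\bar{C}_2$ real and $|C_2|^2-C_4=1$ it reads $\Box_g z=\Im(\Box_g C_1P)=2z/|C_1P|^2$. If $z$ is locally constant on $\mathcal{U}$ then $\Box_g z\equiv 0$ there, so $z\equiv 0$ on $\mathcal{U}$ (not merely $z^2=\mathfrak{A}$); Lemma \ref{LemmaNormGradyz} then gives $\mathfrak{A}=z^2+(y^2+z^2)(\nabla z)^2=0$ on $\mathcal{U}$, and since $\mathfrak{A}$ is a single constant for the manifold it vanishes identically. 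With this wave-equation step substituted for your speculative final paragraph, the rest of your cycle of implications goes through.
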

\begin{proof}
$(2)\implies (3)$ and $(3)\implies (1)$ follows trivially from Lemma
\ref{LemmaNormGradyz}. It thus suffices to show $(1)\implies(2)$.
Suppose $\nabla z = 0 |_\mathcal{U}$. We consider the imaginary part
of the third identity in Proposition \ref{PropPIdentities} \`a la
Remark \ref{RmkPIdentities}, which
shows that $z=0|_\mathcal{U}$. From Lemma \ref{LemmaNormGradyz} we
have $\mathfrak{A} = 0 |_\mathcal{U}$, but $\mathfrak{A}$ is a
universal constant for the manifold, and thus vanishes identically.
\end{proof}
It is simple to check that $z=0$ on $\mathcal{M}$ implies 
$C_1^{-1}\mathcal{H}_{ab}t^a =
\nabla_b\frac{1}{C_1P}$ is real, and so the vanishing of
$\mathcal{B}_{ab}$ implies $\mathcal{F}_{ab}t^a =
2(\frac{C_1\bar{C}_1}{\bar{C}_1\bar{P}} -
C_1\bar{C}_2)C_1^{-1}\mathcal{H}_{ab}t^a$ is purely real, which
by Frobenius' theorem gives that $t^a$ is hypersurface
orthogonal.\footnote{As to the question whether $t^a$ can be
hypersurface orthogonal without $\nabla z = 0$: in the next part we
will consider the case where $\mathfrak{A}\neq 0$ (implying $z$ is
nowhere locally constant), and show that in the subset
$\mathcal{M}_\lbar\cap\mathcal{M}_\mathfrak{A}$ we have local
diffeomorphisms to the Kerr-Newman space-time with non-zero angular
momentum, which implies that $\mathfrak{A} = 0$ is characteristic of
the Reissner-Nordstr\"om metric. Indeed, as we shall see later, the
quantity $\mathfrak{A}$ is actually square of the normalized angular
momentum of the space-time.} 
\begin{prop}\label{PropLocalIsometryRN}
Assume $\mathfrak{A} = 0$. Then, at any point $p\in \mathcal{M}_\lbar$
there exists a neighborhood that can be isometrically embedded into
the Reissner-Nordstr\"om solution.
\end{prop}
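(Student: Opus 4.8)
The plan is to use $\mathfrak{A}=0$ to reduce the configuration to a genuinely static one and then reconstruct the metric directly in Reissner-Nordstr\"om coordinates. First I would record the simplifications. By Proposition~\ref{PropCharacterHypersurfaceOrthogonal}, $\mathfrak{A}=0$ forces $z\equiv 0$, so $y:=C_1P$ is a globally defined, nowhere-vanishing real function (it is nonzero since $\mathcal{H}^2\ne 0$, assumption~(L), forces $C_1$ and $P$ to be finite and nonzero), and $\nabla z=0$. Substituting this into (\ref{EqGradzTetradCoeff}) and (\ref{EqRealityPTimesTetradCoeff}) kills every $\eta$-coefficient, $\eta=\ubar\eta=\bar\eta=\ubar{\bar\eta}=0$, while the Bianchi argument inside the proof of Lemma~\ref{LemmaNormGradyz} already gave $\xi=\vartheta=\ubar\xi=\ubar\vartheta=0$. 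Specializing Lemma~\ref{LemmaNormGradyz} and Proposition~\ref{PropPIdentities} to $z=0$ yields
\[ (\nabla y)^2=\Delta(y),\qquad t^2=-\Delta(y),\qquad \Delta(y):=1-\frac{2C_1\bar{C}_2}{y}+\frac{|C_1|^2}{y^2}, \]
with $C_1\bar{C}_2$ real by hypothesis. Finally (\ref{EqDecompositiont}) reduces to $t_c=-(l_at^a)\lbar_c-(\lbar_at^a)l_c$, so $t$ and $\nabla y$ both lie in the Lorentzian $2$-plane $\Pi:=\mathrm{span}\{l,\lbar\}$, they are orthogonal, and $t^a\nabla_ay=0$, i.e.\ $\mathcal{L}_ty=0$.

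Next I would introduce coordinates on the dense open set where $\Delta(y)\ne 0$ (the horizon locus $\{\Delta=0\}$, which may still meet $\mathcal{M}_\lbar$, is covered afterwards by passing to a regular null chart and extending the isometry by continuity). Since $t^a$ is hypersurface-orthogonal and Killing, as shown immediately before the statement, near $p$ the metric splits as $g=-\Delta(y)\,d\tau^2+h$ with $t=\partial_\tau$, where $h$ is a Riemannian metric on the $3$-dimensional orbit space and $y$ descends to the orbit space with $|dy|_h^2=\Delta(y)$; hence $r:=y$ is a coordinate there and $h=\Delta(r)^{-1}dr^2+\gamma$ for a family of induced $2$-metrics $\gamma$ on the level surfaces $S_r$. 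Independently, $[\nabla y,t]=\mathcal{L}_t\nabla y=0$ shows that $\Pi=\mathrm{span}\{\nabla y,t\}$ is integrable, with leaves the $(\tau,r)$-surfaces. The substantive step is to show that the orthogonal distribution $\Pi^\perp=\mathrm{span}\{m,\bar m\}$ is also integrable and that its leaves, with induced metric, are round $2$-spheres of radius $r$: integrability of $\Pi^\perp$ follows from the vanishing of the $\eta$- and $\xi,\vartheta$-coefficients in the null structure equations of Appendix~\ref{SectTetrad}; the expansion of these leaves along $\nabla r$ is read off from (\ref{EqNullTetradGradP}) together with the structure equations (\ref{EqNullStructRel1}) and (\ref{EqNullStructRel2}), which give umbilic leaves with $\mathcal{L}_{\partial_r}\gamma=\frac{2}{r}\gamma$, hence $\gamma=r^2\sigma$ for an $r$-independent $2$-metric $\sigma$; and the Gauss equation, combined with the explicit scalars $\Psi=\frac{C_1\bar{C}_2}{y^3}-\frac{|C_1|^2}{y^4}$ and $\Phi=\frac{|C_1|^2}{y^4}$ from (\ref{SubeqAlignment}), forces the intrinsic Gauss curvature of $(S_r,\gamma)$ to equal $1/r^2$, so $\sigma$ is the round unit-sphere metric and each leaf is an open piece of a round sphere of radius $r$.

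Assembling the pieces gives $g=-\Delta(r)\,d\tau^2+\Delta(r)^{-1}dr^2+r^2\sigma$ with $\sigma$ round, $\Delta(r)=1-2mr^{-1}+q^2r^{-2}$, $m:=C_1\bar{C}_2\in\mathbb{R}$, $q:=|C_1|$, which is exactly the Reissner-Nordstr\"om metric of mass $m$ and charge $|C_1|$, consistent with Theorem~\ref{ThmMain}(4) specialized to vanishing angular momentum $\sqrt{\mathfrak{A}}M$. A final check using $\Upsilon=\mathcal{H}_{ab}l^a\lbar^b=\frac{C_1}{2y^2}$ identifies $\mathcal{H}_{ab}$ with the corresponding Coulomb Maxwell field, so a neighborhood of $p$ embeds isometrically into Reissner-Nordstr\"om.

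I expect the main obstacle to be the third step: the integration of the Ionescu-Klainerman null structure equations (including the Ricci-curvature terms of the appendix) to certify, all at once, that $\Pi^\perp$ is integrable, that its leaves are umbilic with exactly the expansion above, and that the Gauss equation with the known $\Psi$ and $\Phi$ pins their intrinsic curvature to $1/r^2$. The computations themselves are routine once the tetrad bookkeeping is trusted; the delicate point is checking that every spin coefficient entering these statements is genuinely controlled by the three vanishing families $\eta$, $\xi,\vartheta$, and by $z\equiv 0$.
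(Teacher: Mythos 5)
Your reductions ($z\equiv 0$, reality of $y=C_1P$, vanishing of $\eta,\ubar\eta$ and of $\xi,\vartheta,\ubar\xi,\ubar\vartheta$, and the formulas $(\nabla y)^2=\Delta(y)=-t^2$) and your middle geometric step (integrability of $\mathrm{span}\{m,\bar m\}$, umbilic leaves, Gauss equation forcing round spheres of radius $|y|$) are correct and essentially the same as the paper's. The genuine gap is in the coordinate construction. You build static coordinates $(\tau,r)$ only on $\{\Delta(y)\neq 0\}$, i.e.\ only where $t^2\neq 0$; but Proposition \ref{PropLocalIsometryRN} is asserted at every point of $\mathcal{M}_\lbar$, which is defined by $t_a\lbar^a\neq 0$, not by $t^2\neq 0$. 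Lemma \ref{LemmaNonNullTF} only gives that $t$ is non-null on a dense set, and in Reissner--Nordstr\"om itself the horizon $\{\Delta=0\}$ lies inside $\mathcal{M}_\lbar$, so such points are squarely within the scope of the statement. At such a point your static split $g=-\Delta\,d\tau^2+h$ does not exist, and the parenthetical fix --- ``pass to a regular null chart and extend by continuity'' --- is precisely the nontrivial content you have not supplied: the isometries you construct on the two blocks $\{\Delta>0\}$ and $\{\Delta<0\}$ land in two different static patches of Reissner--Nordstr\"om, and smooth matching across the null hypersurface $\{\Delta=0\}$ cannot be read off from charts that degenerate there. The paper avoids this by building the regular chart from the outset: using $[t,\lbar]=0$, the normalization $\lbar_at^a=1$, the decomposition (\ref{EqMlbarDecompositionGrady}), and a commutator identity to show the lapse functions depend only on $y$, it obtains coordinates $\{u,y,x^2,x^3\}$ with $\partial_u=t$, $\partial_y=\lbar$ and the line element $ds^2=-(1-(2C_1\bar C_2y-|C_1|^2)/y^2)\,du^2+2\,du\,dy+y^2d\omega_{\mathbb{S}^2}$, which is manifestly regular at $\Delta=0$; that construction uses only $\lbar_at^a\neq 0$ and $\nabla y\neq 0$, both valid at horizon points. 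Without carrying out this (or an equivalent) construction, the hardest case of the proposition is unproven.

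Two smaller points. First, where $\Delta<0$ the orbit-space metric $h$ is Lorentzian rather than Riemannian and $y$ is a time function; your final formula still assembles correctly there, but as phrased the quotient construction is only literally valid where $t$ is timelike, which compounds the issue above. Second, the concluding identification of $\mathcal{H}_{ab}$ with the Coulomb field is harmless but not needed: the proposition asserts only a metric embedding, and the mass and charge are already pinned down by $\Delta(y)$ via Lemma \ref{LemmaNormGradyz} and Proposition \ref{PropPIdentities}.
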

This proof closely mirrors that of Proposition 2 in \cite{Mars99}.
\begin{proof}
We use the same tetrad notation as before. Since $z=0$, we have $C_1P =
y$ is real, and hence (\ref{EqRealityPTimesTetradCoeff}) implies that
$\theta,\ubar\theta$ are real. Furthermore, $z=0$ implies via
(\ref{EqGradzTetradCoeff}) that $\eta=0=\ubar\eta$. The commutator
relations then gives
\begin{align*}
[D,\ubar{D}] &= -\ubar\omega D + \omega\ubar{D} \\
[\delta,\bar\delta] &= \Gamma_{121}\bar\delta + \Gamma_{122}\delta
\end{align*}
which implies that $\{l,\lbar\}$ and $\{m,\bar{m}\}$ are integrable.
Thus a sufficiently small neighborhood $\mathcal{U}$ can be
foliated by 2 mutually orthogonal families of surfaces. We calculate
the induced metric on the surface tangent to $\{m,\bar{m}\}$ using the
Gauss equation. 

First we calculate the second fundamental form $\chi(X,Y)$ for $X^a =
X_1m^a + X_2\bar{m}^a$ and $Y^a = Y_1m^a + Y_2\bar{m}^a$. By
definition $\chi(X,Y)$ is the projection of $\nabla_XY$ to the normal
bundle, so in the tetrad frame, evaluating using the connection
coefficients, we have
\begin{align*}
\chi(X,Y)^a &= X_1Y_1(\Gamma_{131}l^a + \Gamma_{141}\lbar^a) +
X_1Y_2(\Gamma_{231}l^a + \Gamma_{241}\lbar^a) \\
& \qquad + X_2Y_1(\Gamma_{132}l^a
+ \Gamma_{142}\lbar^a) + X_2Y_2(\Gamma_{232}l^a +
\Gamma_{242}\lbar^a)\\
& = X_1Y_1(\ubar\vartheta l^a + \vartheta \lbar^a) +
X_1Y_2(\ubar{\bar\theta}l^a + \bar\theta\lbar^a) \\
& \qquad + X_2Y_1(\ubar\theta l^a + \theta\lbar^a) +
X_2Y_2(\ubar{\bar\vartheta}l^a + \bar\vartheta\lbar^a)\\
&= - \frac{\nabla^a y}{C_1P}g(X,Y) = - \frac{\nabla^a y}{y}g(X,Y)
\end{align*}
where the last line used the vanishing of $\vartheta$ derived in the
proof of Lemma \ref{LemmaNormGradyz} and Equation
(\ref{EqGradyTetradCoeff}). We recall the Gauss equation
\[ R_0(X,Y,Z,W) = R(X,Y,Z,W) - g(\chi(X,W),\chi(Y,Z)) +
g(\chi(X,Z),\chi(Y,W)) \]
where $X,Y,Z,W$ are spanned by $m,\bar{m}$. Plugging in the explicit
form of the Riemann curvature tensor, we can compute by taking $X = Z
= m, Y = W = \bar{m}$ the only component of the curvature tensor for
the 2-surface
\begin{align*}
R_0(m,\bar{m},m,\bar{m})  & = -\Psi - \bar{\Psi} - \Phi - \frac{(\nabla
y)^2}{y^2} \\
& = \frac{C_1\bar{C}_1}{y^4} - \frac{2C_1\bar{C}_2}{y^3} -
\frac{(\nabla y)^2}{y^2} = - \frac{1}{y^2}\end{align*}
using Lemma \ref{LemmaNormGradyz} in the last equality. Now, since $\delta y = 0$, we have that the scalar curvature is constant on
the 2-surface, and positive, which means that its induced metric is 
locally the standard metric for $S^2$ with radius $|y|$. Now, since
$\nabla y\neq 0$ on our open set, it is possible to choose a local
co\"ordinate system $\{x^0, y,x^2,x^3\}$ compatible with the
foliation. Looking at (\ref{EqDecompositiont}) we see that $t^a$ is
non-vanishing inside $\mathcal{M}_\lbar$, and is in fact tangent to
the 2-surface formed by $\{l,\lbar\}$, so we can take $t = t_x
\partial_{x^0}$ for some function $t_x$. The fact that $t^a$ is
Killing gives that $\partial_{x^A}t_x = 0$ for $A=2,3$. Recall that
we are working in $\mathcal{M}_\lbar$, and we assumed that $\lbar_at^a
= 1$, then we can write, by (\ref{EqMlbarDecompositionGrady}), $\lbar
= \partial_y + s_x\partial_{x^0}$ for some function $s_x$. The
commutator identity
\[ [D,\delta] = -(\Gamma_{124}+\bar\theta)\delta + \zeta D \]
shows that $\partial_{x^A}s_x = 0$ by considering the decomposition
we have for $\lbar$ in terms of the co\"ordinate vector fields.
Then the Killing relation $[t,\lbar]=0$, together with the above,
implies that we can chose a co\"ordinate system $\{u,y,x^2,x^3\}$ with
$\partial_u = t$ and $\partial_y = \lbar$ that is compatible with the
foliation. Lastly, we want to calculate
$g_{AB} = g(\partial_{x^A},\partial_{x^B})$ in this
co\"ordinate system. To do so, we use the fact that 
\[ -l^a = t^a + U\lbar^a \]
Then the second fundamental form can be written as
\begin{align*}
\chi(X,Y) &= (\nabla_X Y)^\perp\\
&= -(\nabla_X Y)^a(\lbar_al^b + l_a\lbar^b) \\
&= -(\nabla_X Y)^a(\lbar_al^b -
(t_a+U\lbar_a)\lbar^b)
\end{align*}
Now, when $X,Y$ are tangential fields, since $U$ only depends on $y$
(recall that $z=\mathfrak{A}=0$),
we have that $\nabla_XU = 0$. Furthermore, we use $g(Y,l) = g(Y,t) =
0$ to see 
\[ \chi(X,Y)^b = l^b Y^aX^c\nabla_c\lbar_a - \lbar^bY^aX^c\nabla_ct_a
- \lbar^bUY^aX^c\nabla_c\lbar_a \]
So we have, using the fact that the second fundamental form is symmetric
\begin{align*}
2\chi(X,Y)^b &= Y^aX^c(l^b - U\lbar^b)\mathcal{L}_\lbar g_{ac} -
Y^aX^c\lbar^b\mathcal{L}_tg_{ac} \\
& = - Y^aX^c\mathcal{L}_{\lbar}g_{ac} \nabla^b y
\end{align*}
Taking $X$ and $Y$ to be co\"ordinate vector fields, we conclude that
\[ \partial_y g_{AB} = \frac{2}{y}g_{AB} \]
so that $g_{AB} = y^2 g^0_{AB}$ where $g^0_{AB}$ only depends on
$x^2,x^3$. Imposing the condition that $g_{AB}$ be the matrix for
the standard metric on a sphere of radius $|y|$, 
we finally conclude
that the line element can be written as
\[ ds^2 = -(1-\frac{2C_1\bar{C}_2y-|C_1|^2}{y^2})du^2 + 2dudy +
y^2d\omega_{\mathbb{S}^2} \]
and thus the neighborhood can be embedded into Reissner-Nordstr\"om
space-time of mass $C_1\bar{C}_2$ and charge $|C_1|$.
\end{proof}

Notice that a priori there is no guarantee that $C_1\bar{C}_2y > 0$,
this is compatible with the fact that we did not specify, for the
local version of the theorem, the requirement for asymptotic flatness,
and hence are in a case where the mass is not necessarily positive.

Next we consider the general case where $t^a$ is not hypersurface
orthogonal. In view of Proposition
\ref{PropCharacterHypersurfaceOrthogonal}, we can assume that
$\mathfrak{A} > 0$ and $z$ not locally constant on any open set. Then
it is clear that the set $\mathcal{M}_\mathfrak{A}$ is in fact dense
in $\mathcal{M}$: for if there exists an open set on which $z =
\mathfrak{A}$, then Proposition
\ref{PropCharacterHypersurfaceOrthogonal} implies that $\mathfrak{A} =
0$ identically on $\mathcal{M}$. Therefore, the set
$(\mathcal{M}_\lbar\cup\mathcal{M}_l)\cap\mathcal{M}_\mathfrak{A}$ is
non-empty as long as $\mathcal{M}_\lbar\cup\mathcal{M}_l$ is
non-empty; this latter fact can be assured since by assumption (A4)
that $t^a$ is timelike at some point $p\in\mathcal{M}$, whereas $l^a$ and $\lbar^a$ are
non-co\"incidental null vectors, so in a neighborhood of $p$, 
we must have $l^at_a\neq 0 \neq \lbar^at_a$. It is on this
set that we consider the next proposition. 

\begin{prop}\label{PropLocalIsometryKerrNewman}
Assuming $\mathfrak{A} > 0$. Let
$p\in\mathcal{U}\subset\mathcal{M}_{\lbar}\cap\mathcal{M}_\mathfrak{A}$
such that $t^a,n^a,b^a$ and $\lbar^a$ are well-defined on
$\mathcal{U}$, with normalization $\lbar^at_a=1$. Then the four vector
fields form a holonomic basis, and $U$ can be isometrically embedded
into a Kerr-Newman space-time.
\end{prop}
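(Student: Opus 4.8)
The plan is to adapt the vacuum argument of \cite{Mars99} (its Proposition~3): show that $\{t^a,n^a,b^a,\lbar^a\}$ is a linearly independent, pairwise commuting frame on $\mathcal{U}$, introduce the coordinate system it determines, and recognize the resulting line element as that of Kerr--Newman. Once the frame properties are in hand, the identification of the constants is bookkeeping.

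\emph{Inner products and independence.} I would first compute all $g(X,Y)$ for $X,Y\in\{t^a,n^a,b^a,\lbar^a\}$ from the tetrad decomposition \eqref{EqDecompositiont} of $t_c$, the decompositions \eqref{SubeqDecompositionGradyz} of $\nabla y,\nabla z$, the norms of Lemma~\ref{LemmaNormGradyz}, the expression for $t^2$ recorded after Corollary~\ref{CorPropertiesyz} (simplified by $|C_2|^2-C_4=1$), and the normalization $\lbar_at^a=1$. Since $t^a\nabla_az=n^a\nabla_az=\lbar^a\nabla_az=0$ and $b^a\propto\nabla^az$, the field $b^a$ is orthogonal to $t^a,n^a,\lbar^a$, with $g(b,b)=1/(\nabla z)^2=(y^2+z^2)/(\mathfrak{A}-z^2)$; moreover $g(\lbar,\lbar)=0$, $g(t,\lbar)=1$, $g(n,\lbar)=\mathfrak{A}-z^2$, $g(t,b)=g(n,b)=0$, while $g(t,t)=t^2$ and $g(t,n),g(n,n)$ come out rational in $y,z$ with constants $|C_1|^2,\ C_1\bar C_2,\ \mathfrak{A}$. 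A short computation gives the Gram determinant equal to $-(y^2+z^2)^2$, which is nonzero on $\mathcal{U}$, so the four fields are independent there. These same entries are the metric coefficients used below.

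\emph{Vanishing of the Lie brackets.} The three brackets containing $t^a$ are easy: the gauge $\lbar_at^a=1$ together with the identity $\partial_t(t_b\lbar^b)=K_t\,t_b\lbar^b$ forces $K_t=[t,l]_b\lbar^b=0$, so $[t,\lbar]=[t,l]=0$; and every scalar appearing in $n^a$ and $b^a$ (namely $y$, $z$, $t_bl^b$, $t_b\lbar^b$, and the constant $\mathfrak{A}$) is annihilated by $\mathcal{L}_t$, whence $[t,n]=[t,b]=0$. The substantive part is $[n,\lbar]=[n,b]=[\lbar,b]=0$, to be carried out in the Ionescu--Klainerman tetrad of Appendix~\ref{SectTetrad} using \eqref{EqNullTetradGradP}, \eqref{EqGradyTetradCoeff}, \eqref{EqGradzTetradCoeff}, \eqref{EqRealityPTimesTetradCoeff}, \eqref{EqDecompositiont}, the null structure equations, and the vanishing of $\xi,\vartheta,\ubar\xi,\ubar\vartheta$ obtained inside the proof of Lemma~\ref{LemmaNormGradyz}. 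A convenient reduction is that $[\lbar,b]$ annihilates both $y$ and $z$, hence is automatically a combination $\alpha t^a+\beta n^a$, so only $\alpha,\beta$ need to be shown to vanish. I expect $[n,\lbar]$ to be the main obstacle: $n^a$ mixes the $t^a,l^a,\lbar^a$ directions, so its bracket with $\lbar^a$ draws in essentially all the connection coefficients at once and the cancellations are delicate—this is precisely the step that occupies the bulk of the analogous vacuum argument.

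\emph{Coordinates and identification.} With $\{t^a,n^a,\lbar^a,b^a\}$ a holonomic frame, introduce coordinates $(x^0,x^1,x^2,x^3)$ with $\partial_{x^0}=t$, $\partial_{x^1}=n$, $\partial_{x^2}=\lbar$, $\partial_{x^3}=b$. Because $t^a$ and $n^a$ annihilate $y$ and $z$ while $\lbar^a\nabla_ay=1$, $\lbar^a\nabla_az=0$, $b^a\nabla_ay=0$, $b^a\nabla_az=1$, one gets $y=x^2+\mathrm{const}$ and $z=x^3+\mathrm{const}$; thus $y,z$ may be taken as the last two coordinates, and I write $(\tau,\varphi,y,z)$ for the chart, $\partial_\tau=t$, $\partial_\varphi=n$, $\partial_y=\lbar$, $\partial_z=b$. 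All the inner products from the first step are functions of $(y,z)$ alone, so the metric is independent of $\tau$ and $\varphi$—in particular $n^a$ is, a posteriori, a second Killing field—and
\[ ds^2 = t^2\,d\tau^2 + 2\,g(t,n)\,d\tau\,d\varphi + 2\,d\tau\,dy + g(n,n)\,d\varphi^2 + 2(\mathfrak{A}-z^2)\,d\varphi\,dy + \frac{y^2+z^2}{\mathfrak{A}-z^2}\,dz^2 . \]
Comparing with the Kerr--Newman metric written in Eddington--Finkelstein-type coordinates (the charged counterpart of the form used by Mars) identifies the total charge as $|C_1|$, the mass as $C_1\bar C_2$, and the rotation parameter as $\sqrt{\mathfrak{A}}$, so the angular momentum is $\sqrt{\mathfrak{A}}\,C_1\bar C_2$. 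This exhibits the desired isometric embedding of $\mathcal{U}$ and, together with Proposition~\ref{PropLocalIsometryRN}, yields assertion~(4) of Theorem~\ref{ThmMain}.
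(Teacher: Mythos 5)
Your proposal follows essentially the same route as the paper's proof: the same Gram matrix of inner products (with $|\det|=(y^2+z^2)^2$, so your $-(y^2+z^2)^2$ is consistent), the same commutator checks in the Ionescu--Klainerman tetrad, and the same holonomic coordinates, differing only in that you keep $(y,z,\varphi)$ unrescaled instead of passing to $z=a\cos\theta$, $n=-a\partial_\phi$ to match the recorded Kerr-coordinate form of Kerr--Newman. The one part you leave as a sketch --- the tetrad verification of $[\lbar,b]=[n,b]=[\lbar,n]=0$ --- is exactly what the paper carries out in detail, its key simplification being that the normalization forces $\ubar{\theta}\bar{C}_1\bar{P}=1$ and hence, via a null structure equation, $\zeta=\eta$, after which the cancellations you anticipate go through.
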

Before giving the proof, we first record the metric for the
Kerr-Newman solution in Kerr co\"ordinates
\begin{align}
ds^2 &= -\left(1-\frac{2Mr-q^2}{r^2+a^2\cos^2\theta}\right)dV^2 + 2drdV
+ (r^2+a^2\cos^2\theta)d\theta^2\\
\nonumber &\qquad +
\frac{\left[(r^2+a^2)^2-(r^2-2Mr+a^2+q^2)a^2\sin^2\theta\right]\sin^2\theta}{r^2+a^2\cos^2\theta}d\phi^2\\
\nonumber &\qquad - 2a\sin^2\theta d\phi dr -
\frac{2a(2Mr-q^2)}{r^2+a^2\cos^2\theta}\sin^2\theta dVd\phi
\end{align}
Notice that the metric is regular at $r=M\pm\sqrt{M^2-a^2-q^2}$ the
event and Cauchy horizons.
\begin{proof}
We first note that in $\mathcal{M}_\lbar$, we have the normalization
\[ n^a = (y^2+z^2)(l^a + U\lbar^a) + (\mathfrak{A} + y^2)t^a
\]
For the proof, it suffices to establish that the commutators between
$n^a,b^a,\lbar^a,t^a$ vanish and that the vectors are linearly
independent (for holonomy), and to calculate their relative
inner-products to verify that they define a co\"ordinates equivalent
to the Kerr co\"ordinate above. 

First we show that the commutators vanish. The cases $[t,\cdot]$ are
trivial. Since we fixed $\lbar^at_a = 1$, we have that
\[ 0 = t^b\nabla_b(\lbar_at^a) = K_tt_b\lbar^b = K_t\]
so that $K_t = 0$ and thus $[t,\lbar] = [t,l] = 0$. Since $y$ and $z$
are geometric quantities defined from $\mathcal{H}_{ab}$, and $U$ is a 
function only of $y$ and $z$, they are symmetric under the action of
$t^a$, therefore $[t,n] = 0$. Similarly, to evaluate $[t,b]$, it
suffices to consider $[t,\nabla z]$. Using
(\ref{SubeqDecompositionGradyz})
we see that $\nabla z$ is defined by the volume form, the metric, and
the vectors $t^a,\lbar^a,l^a$, all of which symmetric under
$t$-action, and thus $[t,b]=0$. The remaining cases require consideration
of the connection coefficients. In view of the normalization condition
imposed, $\nabla_a y = -l_a + U\lbar_a$, so (\ref{EqGradyTetradCoeff})
implies $\ubar{\theta}\bar{C}_1\bar{P} = 1$, $\theta C_1P = -U$. Recall the null
structure equation
\[ -\delta\ubar{\theta} = -\zeta\ubar\theta + \ubar\eta(\ubar\theta
-\ubar{\bar\theta}) \]
Using 
\[ 0 = \delta(\ubar\theta \bar{C}_1\bar P) = (\delta\ubar\theta)
\bar{C}_1 \bar P +
\ubar\theta \ubar\eta\bar{C}_1\bar P\]
we have
\[ \bar{C}_1\bar{P}(\ubar\theta\ubar\eta + \zeta\ubar\theta - 
\ubar\eta\ubar\theta + \ubar\eta\ubar{\bar\theta}) = 0 \]
Applications of (\ref{EqRealityPTimesTetradCoeff}) allows us to
replace $+\ubar\eta\ubar{\bar\theta}$ by $-\eta\ubar\theta$ in the
brackets, and so, since
$\ubar\theta\bar{C}_1\bar{P} = \ubar{D}C_1P \neq 0$, we must have $\zeta = \eta$,
which considerably simplifies calculations. Next we write
\[ b^a = -i\frac{y^2+z^2}{\mathfrak{A}-z^2}(\eta C_1P\bar{m}^a -
\bar\eta \bar{C}_1\bar{P}m^a) = i\frac{1}{\mathfrak{A}-z^2} \left( \ubar\eta
C_1\bar{C}_1^2P\bar{P}^2\bar{m}^a - c.c \right)\]
by expanding $\nabla^az$ in tetrad coefficients, and where $c.c.$
denotes complex conjugate. Then, since
$\ubar{D}z = 0$, 
\[ -i(\mathfrak{A}-z^2)[\lbar,b] = \ubar{D}(\ubar\eta
C_1\bar{C}_1P\bar{P}^2)\bar{m}^a - c.c + \ubar\eta C_1\bar{C}_1^2P\bar{P}^2[\ubar{D},\bar\delta] -
c.c\]
We consider the commutator relation, simplified appropriately in view
of computations above and in the proof of Lemma \ref{LemmaNormGradyz},
\[ [\ubar{D},\bar\delta] = -(\Gamma_{213}+\ubar\theta)\bar\delta =
(\Gamma_{123} - \frac{1}{\bar{C}_1\bar{P}})\bar\delta \]
together with the structure equation 
\[ (\ubar{D}+\Gamma_{123})\ubar\eta = \ubar\theta(\eta-\ubar\eta) \]
and the relations in (\ref{EqRealityPTimesTetradCoeff}) and
(\ref{EqNullTetradGradP}), we get
\begin{align*}
\ubar{D}(\ubar\eta C_1\bar{C}_1^2P\bar{P}^2)\bar{m}^a &+ \ubar\eta
C_1\bar{C}_1^2P\bar{P}^2[\ubar{D},\bar\delta] \\
& = (\ubar{D}+\Gamma_{123})\ubar\eta
C_1\bar{C}_1^2P\bar{P}^2\bar{m}^a - \ubar\eta |C_1P|^2\bar{m}^a + \ubar\eta
\ubar{D}(C_1\bar{C}_1^2P\bar{P}^2)\bar{m}^a\\
&= \ubar\theta(\eta-\ubar\eta)C_1\bar{C}_1^2P\bar{P}^2\bar{m}^a - \ubar\eta
|C_1P|^2\bar{m}^a + \ubar\eta(\ubar\theta\bar{C}_1^3\bar{P}^3 + 2\ubar\theta
C_1\bar{C}_1^2P\bar{P}^2)\bar{m}^a\\
&= 0
\end{align*}
Hence $[\lbar,b]=0$. In a similar fashion, we write
\[ n^a = |C_1P|^2l^a +
\frac12(\mathfrak{A} + y^2 + |C_1|^2 - 2 C_1\bar{C}_2y)\lbar^a +
(\mathfrak{A}+y^2)t^a \]
From the fact that $b^a\nabla_ay = 0$ and from the known commutator
relations, we have
\[ [n,b] = [ C_1\bar{C}_1P\bar{P}l,b ] +
\frac12(\mathfrak{A} + y^2 + |C_1|^2 - 2 C_1\bar{C}_2y) [\lbar,b] +
(\mathfrak{A}+y^2)[t,b] \]
of which the second and third terms are already known to vanish.
We evaluate $[C_1\bar{C}_1P\bar{P}l,b]$ in the same way we evaluated $[\lbar,b]$,
and a calculation shows that it also vanishes. To evaluate
$[\lbar,n]$, we need to calculate $[\lbar,l]$. To do so we write
\[ t^a = -U\lbar^a - l^a - \bar\eta C_1 P m^a -
\eta\bar{C}_1\bar{P}\bar{m}^a \]
Since $[\lbar,t] =0$, we infer
\begin{align*}
[\lbar,l] &= -[\lbar,U\lbar + \bar\eta C_1 P m + \eta\bar{C}_1\bar{P}\bar{m} ] \\
&= -\ubar{D}U\lbar - [\lbar, \frac{1}{|C_1P|^2}\bar\eta C_1^2\bar{C}_1P^2\bar{P} m ]
- c.c.
\end{align*}
Notice that in the proof above for $[\lbar,b] = 0$ we have
demonstrated that $[\lbar,\bar\eta C_1^2\bar{C}_1P^2\bar{P} m] = 0$, so 
\[ [\lbar,l] = -\ubar{D}U\lbar + \frac{\ubar{D}(|C_1P|^2)}{|C_1P|^2}
(\bar\eta C_1 P m + \eta\bar{C}_1\bar{P}\bar{m}) \]
Direct computation yields
\[ \ubar{D} U =  \frac{y-C_1\bar{C}_2}{y^2+z^2} -
\frac{2yU}{y^2+z^2}\]
and
\[ \ubar{D}(C_1\bar{C}_1P\bar{P}) = 2y \]
(recall that we set $\ubar{D}y = 1$) so we conclude that
\[ [\lbar,l] = - \frac{y-C_1\bar{C}_2}{y^2+z^2}\lbar -
\frac{2y}{y^2+z^2}(l + t) \]
So, using the decomposition for $n^a$ given above
\begin{align*}
[ \lbar, n ] & = [\lbar , (y^2+z^2) l + (y^2+z^2)U\lbar +
(\mathfrak{A}+y^2) t ]\\
&= 2y l + (y-2C_1\bar{C}_2)\lbar + 2y t +
(y^2+z^2)[\lbar,l] \\
&= 0
\end{align*}

Having checked the commutators, we now calculate the scalar products
between various components. A direct computation from the definition
yields
\begin{align*}
b^2 & = \frac{y^2+z^2}{\mathfrak{A}-z^2} & b\cdot n &= 0 & b\cdot
\lbar &= 0 & b\cdot t &= 0\\
\lbar\cdot n &= \mathfrak{A}-z^2 & \lbar^2 &= 0  & \lbar\cdot t&= 1\\
t\cdot n &= \frac{(|C_1|^2 - 2 C_1\bar{C}_2y)(z^2-\mathfrak{A})}{y^2+z^2} &
t^2 & =-1 - \frac{|C_1|^2-2 C_1\bar{C}_2 y}{y^2+z^2}
\end{align*}
and
\[ n^2 =
(\mathfrak{A}-z^2)\left[ \mathfrak{A}+y^2 - \frac{\mathfrak{A}-z^2}{y^2+z^2}\left( |C_1|^2 - 2
C_1\bar{C}_2y\right) \right] \]
A simple computation shows that the determinant of the matrix of inner
products yields
\[ |\det| = (y^2+z^2)^2 \neq 0 \]
and therefore the vector fields are linearly independent. Thus we have
shown that they form a holonomic basis.

To construct the local isometry to Kerr-Newman space-time, we define
co\"ordinates attached to the holonomic vector fields $t,\lbar,b,n$
with the following rescalings. First, since $\mathfrak{A} > 0$, we
can define $a >0$ such that $\mathfrak{A} = a^2$. Then we can
define the co\"ordinates $r,\theta,V,\phi$ by 
\begin{align*}
t & = \partial_V\\
\lbar &= \partial_r & y &= r\\
b &= \frac{1}{a\sin\theta}\partial_\theta & z &= a\cos\theta
\\
n &= - a\partial_\phi
\end{align*}
Notice that we can define $\theta$ from $z$ in a way that makes sense
since $z^2\leq\mathfrak{A}$. Applying the change of co\"ordinates to
the inner-products above we see that in $r,\theta,V,\phi$ the metric
is identical to the one for the Kerr co\"ordinate of Kerr-Newman
space-time. Furthermore, we see that $n$, or $\partial_\phi$, defines
the corresponding axial Killing vector field. 
\end{proof}

To finish this section, we need to show that the results we obtained
in Propositions \ref{PropLocalIsometryRN} and
\ref{PropLocalIsometryKerrNewman} can be extended to the manifold
$\mathcal{M}$, rather than restricted to
$(\mathcal{M}_\lbar\cup\mathcal{M}_l)$ in the former and 
$(\mathcal{M}_\lbar\cup\mathcal{M}_l)\cap\mathcal{M}_\mathfrak{A}$ in
the latter. We shall need the following lemma (Lemma 6 in
\cite{Mars99}; the lemma and its proof can be carried over to our case
essentially without change, we reproduce them here for completeness)

\begin{lemma}\label{LemmaKillingVanishing}
The vector field $n^a$ is a Killing vector field on the entirety of
$\mathcal{M}$. The set $\mathcal{M}\setminus \mathcal{M}_\mathfrak{A}
= \{ n^a = 0 \}$. Furthermore, 
\begin{itemize}
\item If $\mathfrak{A} = 0$, then
$\mathcal{M}\setminus(\mathcal{M}_\lbar\cup\mathcal{M}_l) = \{ t^a = 0
\}$
\item If $0<\mathfrak{A}\leq (C_1\bar{C}_2)^2 - |C_1|^2$, then
$\mathcal{M}\setminus(\mathcal{M}_\lbar\cup\mathcal{M}_l) = \{ \mbox{
either } n^a-y_+t^a = 0 \mbox{ or } n^a-y_-t^a = 0\}$ where 
\[ y_{\pm} = 2(C_1\bar{C}_2)^2- |C_1|^2\pm 2C_1\bar{C}_2\sqrt{(C_1\bar{C}_2)^2 - |C_1|^2 -
\mathfrak{A}}\]
\item If $\mathfrak{A} > (C_1\bar{C}_2)^2 - |C_1|^2$, then
$\mathcal{M}\setminus(\mathcal{M}_\lbar\cup\mathcal{M}_l) = \emptyset$
\end{itemize}
\end{lemma}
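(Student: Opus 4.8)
The plan is to establish the three assertions of Lemma~\ref{LemmaKillingVanishing} in order, leaning on the explicit coordinate form of the metric obtained in Propositions~\ref{PropLocalIsometryRN} and~\ref{PropLocalIsometryKerrNewman} together with the global identities of Proposition~\ref{PropPIdentities} and the constancy of $\mathfrak{A}$ from Lemma~\ref{LemmaNormGradyz}. First I would show that $n^a$ is Killing on all of $\mathcal{M}$: on the dense open set $(\mathcal{M}_\lbar\cup\mathcal{M}_l)\cap\mathcal{M}_\mathfrak{A}$ this is immediate from Proposition~\ref{PropLocalIsometryKerrNewman}, since there $n = -a\partial_\phi$ is (a constant multiple of) the axial Killing field of the Kerr--Newman background; alternatively, when $\mathfrak{A}=0$ one checks directly from~(\ref{EqOriginalDefn}) that $n^a$ is a multiple of $t^a$. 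The Killing equation $\nabla_{(a}n_{b)}=0$ is a smooth pointwise condition, so its vanishing on a dense subset forces it everywhere on $\mathcal{M}$ by continuity. (One should first verify that $n^a$, as defined by~(\ref{EqOriginalDefn}), is a globally smooth vector field; this is clear since $y,z,t^a$ and $t_bl^b\lbar^a + t_b\lbar^bl^a$ are all geometric quantities built from $\mathcal{H}_{ab}$ and the metric, independent of the tetrad normalization.)

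Next I would identify the zero set of $n^a$ with $\mathcal{M}\setminus\mathcal{M}_\mathfrak{A}$. Using the inner-product computations from the proof of Proposition~\ref{PropLocalIsometryKerrNewman}, extended by continuity to all of $\mathcal{M}$, one has $n^2 = (\mathfrak{A}-z^2)\bigl[\mathfrak{A}+y^2 - \tfrac{\mathfrak{A}-z^2}{y^2+z^2}(|C_1|^2 - 2C_1\bar{C}_2 y)\bigr]$ and $b\cdot n = \lbar\cdot n - (\mathfrak{A}-z^2)$-type relations that show every inner product of $n^a$ with the frame carries a factor $(\mathfrak{A}-z^2)$. Hence on $\{z^2=\mathfrak{A}\}$ (the complement of $\mathcal{M}_\mathfrak{A}$) we get $n^a$ orthogonal to a local basis, so $n^a=0$; conversely, on $\mathcal{M}_\mathfrak{A}$ the coordinate description $n=-a\partial_\phi$ (when $\mathfrak{A}>0$) shows $n^a\neq 0$, and when $\mathfrak{A}=0$ we have $\mathcal{M}_\mathfrak{A}=\emptyset$ by Proposition~\ref{PropCharacterHypersurfaceOrthogonal} so there is nothing to check. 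This pins down $\{n^a=0\}=\mathcal{M}\setminus\mathcal{M}_\mathfrak{A}$ exactly.

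Finally I would treat the three cases for $\mathcal{M}\setminus(\mathcal{M}_\lbar\cup\mathcal{M}_l)=\{t_al^a = 0 = t_a\lbar^a\}$. On this set~(\ref{EqMlbarDecompositionGrady})-type reasoning gives $U = l_at^a\lbar_bt^b = 0$, hence $\tfrac12(\nabla y)^2 = 0$, i.e. $\mathfrak{A}+y^2+|C_1|^2-2C_1\bar{C}_2 y = 0$ by Lemma~\ref{LemmaNormGradyz}; this is a quadratic in $y$ with roots $y_\pm = C_1\bar{C}_2 \pm\sqrt{(C_1\bar{C}_2)^2 - |C_1|^2 - \mathfrak{A}}$ (one must reconcile this with the formula stated in the lemma, which appears to be written in terms of $y$ such that $C_1P = y$, so the stated $y_\pm$ should be matched up to normalization — I would recompute carefully). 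When $\mathfrak{A} > (C_1\bar{C}_2)^2 - |C_1|^2$ the discriminant is negative, $(\nabla y)^2$ never vanishes, and the set is empty. When $0<\mathfrak{A}\le (C_1\bar{C}_2)^2-|C_1|^2$, on each root $y = y_\pm$ the formula~(\ref{EqDecompositiont}) for $t_c$ together with $l_at^a=\lbar_at^a = 0$ collapses to $t_c = -\epsilon_{cabd}(\nabla^a z)\lbar^b l^d$, and comparing with the definition of $n^a$ in~(\ref{EqOriginalDefn}) — where on this locus $(y^2+z^2)(t_bl^b\lbar^a + t_b\lbar^b l^a) = 0$ — yields $n^a = (\mathfrak{A}+y_\pm^2)t^a = y_\pm t^a$ after using $\mathfrak{A}+y_\pm^2 = 2C_1\bar{C}_2 y_\pm - |C_1|^2 = y_\pm$ (again modulo the normalization bookkeeping), so $n^a - y_\pm t^a = 0$; conversely $n^a = y_\pm t^a$ forces $U = 0$ on that component. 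When $\mathfrak{A}=0$, $z\equiv 0$ and $U$ depends only on $y$ with $(\nabla y)^2 = (|C_1|^2 - 2C_1\bar{C}_2 y + y^2)/y^2$ from Lemma~\ref{LemmaNormGradyz}; here $U = 0$ forces $t^a = 0$ by~(\ref{EqDecompositiont}). The main obstacle I anticipate is precisely the bookkeeping in this last case: keeping straight whether the relevant variable is $P$, $C_1 P$, or $y$, and correctly matching the quadratic in $y$ and its roots against the $y_\pm$ stated in the lemma, while simultaneously verifying that $n^a - y_\pm t^a$ (not merely $n^a$) is the vanishing quantity — this requires the algebraic identity relating $\mathfrak{A}+y_\pm^2$ to $y_\pm$, which in turn uses the hypothesis $|C_2|^2 - C_4 = 1$ implicitly through Lemma~\ref{LemmaNormGradyz}.
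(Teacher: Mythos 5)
There is a genuine gap at the very first step: you assert that $(\mathcal{M}_\lbar\cup\mathcal{M}_l)\cap\mathcal{M}_\mathfrak{A}$ is \emph{dense} in $\mathcal{M}$ and then propagate the Killing equation for $n^a$ by continuity. But at this stage of the paper only two facts are available: $\mathcal{M}_\mathfrak{A}$ is dense (when $\mathfrak{A}>0$), and $\mathcal{M}_\lbar\cup\mathcal{M}_l$ is merely \emph{non-empty} (because $t^a$ is timelike somewhere). Nothing proved so far rules out an open set $\mathcal{U}$ on which $t_al^a=t_a\lbar^a=0$ (there $t^a$ would simply lie in the $m,\bar m$ plane), and on such a set your continuity argument says nothing: you would only get that $n^a$ is Killing on $\overline{\mathcal{M}_\lbar\cup\mathcal{M}_l}$, which could a priori be a proper subset of $\mathcal{M}$. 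Establishing that this cannot happen is precisely the heart of the paper's proof: assuming $\mathcal{U}\neq\emptyset$, one has $\nabla_a y=0$ on $\mathcal{U}$ by (\ref{SubeqDecompositionGradyz}), the real part of the $\Box_g P$ identity in Proposition \ref{PropPIdentities} then forces $y=C_1\bar{C}_2$ on $\mathcal{U}$, Lemma \ref{LemmaNormGradyz} forces $\mathfrak{A}=(C_1\bar{C}_2)^2-|C_1|^2$, and then the globally defined, constant-coefficient combination $n^a-[2(C_1\bar{C}_2)^2-|C_1|^2]t^a$ is Killing (by continuity on $\overline{\mathcal{M}_\lbar\cup\mathcal{M}_l}$, trivially on $\mathcal{U}$ where it vanishes identically by (\ref{EqOriginalDefn})), vanishes on the open set $\mathcal{U}$, yet is not identically zero since $n^a$ and $t^a$ are independent on the Kerr--Newman coordinate patches --- contradicting the standard fact that a Killing field vanishing on an open set vanishes identically. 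Your proposal contains no substitute for this argument, and without it the first claim of the lemma (and hence the extension used in the Main Theorem) is not established.

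Two smaller points. For the identification $\mathcal{M}\setminus\mathcal{M}_\mathfrak{A}=\{n^a=0\}$, your appeal to the inner products of Proposition \ref{PropLocalIsometryKerrNewman} ``extended by continuity'' is shaky: those were computed under the normalization $t_a\lbar^a=1$ on $\mathcal{M}_\lbar$, and $b^a=\nabla^az/(\nabla z)^2$ is not even defined where $z^2=\mathfrak{A}$; the clean route is the paper's: on $\{z^2=\mathfrak{A}\}$ the function $z^2$ attains its maximum, so $\nabla_az=0$, and then (\ref{EqDecompositiont}) inserted into (\ref{EqOriginalDefn}) gives $n^a=(\mathfrak{A}-z^2)\,[\dots]=0$ directly. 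Finally, your bookkeeping for the second bullet is indeed muddled as you suspected: the roots of $\mathfrak{A}+y^2+|C_1|^2-2C_1\bar{C}_2y=0$ are $y_{\mathrm{root}}=C_1\bar{C}_2\pm\sqrt{(C_1\bar{C}_2)^2-|C_1|^2-\mathfrak{A}}$, while the $y_\pm$ of the lemma are the corresponding values of the coefficient $\mathfrak{A}+y_{\mathrm{root}}^2=2C_1\bar{C}_2\,y_{\mathrm{root}}-|C_1|^2$ appearing in $n^a=(\mathfrak{A}+y^2)t^a$ on the exceptional set; the identity ``$\mathfrak{A}+y_\pm^2=y_\pm$'' you wrote is not what is needed. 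This part is repairable, but the density argument above is not optional.
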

\begin{proof}
First consider the case $\mathfrak{A} = 0$. By Proposition
\ref{PropCharacterHypersurfaceOrthogonal}, we have $z=0$. So the
definition (\ref{EqOriginalDefn}) and (\ref{EqDecompositiont}) show
that $n^a$ vanishes identically. Furthermore, since
$\mathcal{M}_\mathfrak{A} = \emptyset$ in this case, we have that
$n^a$ is a (trivial) Killing vector field on $\mathcal{M}$ vanishing
on $\mathcal{M}\setminus \mathcal{M}_\mathfrak{A}$. It is also clear
from (\ref{EqDecompositiont}) that $t^a = 0 \iff t_al^a = t_a\lbar^a =
0$ in this case, proving the first bullet point. 

Now let $\mathfrak{A}>0$. Then Proposition
\ref{PropLocalIsometryKerrNewman} shows that $n^a$ is Killing on
$(\mathcal{M}_\lbar\cup\mathcal{M}_l)\cap\mathcal{M}_\mathfrak{A}$,
and does not co\"incide with $t^a$. Since $\mathcal{M}_\mathfrak{A}$
is dense in $\mathcal{M}$ (see paragraph immediately before
Proposition \ref{PropLocalIsometryKerrNewman}), we have that $n^a$ is
Killing on $\overline{\mathcal{M}_\lbar\cup\mathcal{M}_l}$ (the
overline denotes set closure). We wish to show that
$\overline{\mathcal{M}_\lbar\cup\mathcal{M}_l} = \mathcal{M}$. Suppose
not, then the open set $\mathcal{U}
=\mathcal{M}\setminus\overline{\mathcal{M}_\lbar\cup\mathcal{M}_l}$ is
non-empty. In $\mathcal{U}$, $t_al^a = t_a\lbar^a = 0$, so by
(\ref{SubeqDecompositionGradyz}), $\nabla^a y = 0$ in $\mathcal{U}$.
Taking the real part of the third identity in Proposition
\ref{PropPIdentities}, we must have $y = C_1\bar{C}_2$ in $\mathcal{U}$, which
by Lemma \ref{LemmaNormGradyz} implies $\mathfrak{A} =
(C_1\bar{C}_2)^2 - |C_1|^2$. Consider the vectorfield defined on all
of $\mathcal{M}$ given by $n^a - (\mathfrak{A} + y^2)t^a = n^a -
[2(C_1\bar{C}_2)^2 - |C_1|^2]t^a$. As it is a
constant coefficient linear combination of non-vanishing independent 
Killing vector fields on
$\overline{\mathcal{M}_\lbar\cup\mathcal{M}_l}$, it is also a
non-vanishing Killing vector field. However, on $\mathcal{U}$, the
vector field vanishes by construction. So we have Killing vector field
on $\mathcal{M}$ that is not identically 0, yet vanishes on an
non-empty open set, which is impossible (see Appendix C.3 in
\cite{Wald}). Therefore $n^a$ is a Killing vector field everywhere on
$\mathcal{M}$. Now, outside of $\mathcal{M}_\mathfrak{A}$, we have
that $z^2 = \mathfrak{A}$ reaches a local maximum, so $\nabla_a z$
must vanish. Therefore from (\ref{EqOriginalDefn}) and
(\ref{EqDecompositiont}) we conclude that $n^a$ vanishes outside
$\mathcal{M}_\mathfrak{A}$ also, proving the second statement in the
lemma. 

For the second a third bullet points, consider the function $U =
\frac12(\nabla y)^2$. By
definition it vanishes outside $\mathcal{M}_\lbar\cup\mathcal{M}_l$.
Using Lemma \ref{LemmaNormGradyz} we see that 
\[ \mathfrak{A} + y^2 + |C_1|^2 - 2C_1\bar{C}_2y = 0 \]
outside $\mathcal{M}_\lbar\cup\mathcal{M}_l$. The two bullet points
are clear in view of the quadratic formula and (\ref{EqOriginalDefn}).
\end{proof}
Now we can complete the main theorem in the same way as \cite{Mars99}.
\begin{proof}[Proof of the Main Theorem]
In view of Propositions \ref{PropLocalIsometryRN} and
\ref{PropLocalIsometryKerrNewman}, we only need to show that the
isometry
thus defined extends to
$\mathcal{M}\setminus(\mathcal{M}_\lbar\cup\mathcal{M}_l)$ in the case
of Reissner-Nordstr\"om and
$\mathcal{M}\setminus[(\mathcal{M}_\lbar\cup\mathcal{M}_l)\cap\mathcal{M}_\mathfrak{A}]$
in the case of Kerr-Newman. Lemma \ref{LemmaKillingVanishing} shows
that those points we are interested in are fixed points of Killing
vector fields, and hence are either isolated points or smooth,
two-dimensional, totally geodesic surfaces. Their complement,
therefore, are connected and dense, with local isometry into the
Kerr-Newman family. Therefore a sufficiently small neighborhood of one
of these fixed-points will have a dense and connected subset isometric
to a patch of Kerr-Newman, whence we can extend to those fixed-points
by continuity.
\end{proof}

\section{Proof of the main global result}\label{SectGlobalRes}

To show Corollary \ref{CorMain}, it suffices to demonstrate that the
global assumption (G) leads to the local assumption (L). 

By asymptotic flatness and the imposed decay rate (the assumption that
the mass and charge at infinity are non-zero), we can assume that
there is a simply connected region $\mathcal{M}_\mathcal{H}$ near
spatial infinity such that $\mathcal{H}^2\neq 0$. It thus suffices to
show that $\mathcal{M}_\mathcal{H} = \mathcal{M}$. Suppose not, then
the former is a proper subset of the latter. Let $p_0\in\mathcal{M}$ be
a point on $\partial\mathcal{M}_\mathcal{H}$. We see that Theorem
\ref{ThmMain} applies to $\mathcal{M}_\mathcal{H}$, with $C_1$ taken
to be $q_E + i q_B$ and $C_3 = M / (q_E - i q_B)$. In particular, the
first equation in Proposition \ref{PropPIdentities} shows that, by
continuity, $t^2 = -1$ at $p_0$. Let $\delta$ be a small neighborhood of
$p_0$ such that $t^a$ is everywhere time-like in $\delta$ with $t^2 <
-\frac14$, then the metric $g$ induces a uniform Riemannian metric on
the bundle of orthogonal subspaces to $t^a$, i.e. $\cup_{p\in\delta}
\{ v\in T_p\mathcal{M} | g(v,t) = 0 \}$. Now, consider a curve
$\gamma: (s_0,1] \to \delta$ such that $\gamma(s) \in
\mathcal{M}_\mathcal{H}$ for $s < 1$, $\gamma(1) = p_0$, and
$\frac{d}{ds}\gamma(s)$ has norm 1 and is orthogonal to $t$. Consider
the function $(q_E + i q_B)P\circ\gamma$. By assumption, $|(q_E + i
q_B)P\circ\gamma |\nearrow
\infty$ as $s\nearrow 1$. Since Lemma \ref{LemmaNormGradyz} guarantees
that $z$ is bounded in $\mathcal{M}_\mathcal{H}$, and hence by
continuity, at $p_0$, we must have that $y$ blows up as we approach
$p_0$
along $\gamma$. However, 
\[ |\frac{d}{ds}(y\circ\gamma)| = |\nabla_{\frac{d}{ds}\gamma}y| \leq
C \sqrt{|\nabla_ay\nabla^ay|} < C' < \infty \]
where the constant $C$ comes from the uniform control on $g$ acting as
a Riemannian metric on the orthogonal subspace to $t^a$ (note that
$t^a\nabla_ay = 0$ since $y$ is a quantity derivable from quantities
that are invariant under the $t$-action), and $C'$ arises because by
Lemma \ref{LemmaNormGradyz}, $\nabla_ay\nabla^ay$ is bounded for all
$|y| > 2M$, which we can guarantee for $s$ sufficiently
close to 1. So we have a contradiction: $y\circ\gamma$ blows up in
finite time while its derivative stays bounded. Therefore
$\mathcal{M}_\mathcal{H} = \mathcal{M}$.

\appendix
\section{Tetrad formalisms}\label{SectTetrad}
The null tetrad formalism of Newman and Penrose is used extensively in
the calculations above, albeit with slightly different notational
conventions. In the following, a dictionary is given between the standard
Newman-Penrose variables (see, e.g. Chapter 7 in
\cite{ExactSolutions}) and the null-structure variables of Ionescu and
Klainerman \cite{IoKl07a} which is used in this paper.

Following Ionescu and Klainerman \cite{IoKl07a}, we consider a
space-time with a natural choice of a \emph{null pair} $\{\lbar,l\}$.
Recall that the complex valued vector field $m$ is said to be
\emph{compatible} with the null pair if
\[ g(l,m) = g(\lbar,m) = g(m,m) = 0~,\quad g(m,\mbar) = 1 \]
where $\mbar$ is the complex conjugate of $m$. Given a null pair, for
any point $p\in\mathcal{M}$, such
a compatible vector field always exist on a sufficiently small
neighborhood of $p$. We say that the vector fields $\{m,\mbar,\lbar,l\}$ form
a \emph{null tetrad} if, in addition, they have positive orientation
$\epsilon_{abcd}m^a\mbar^b\lbar^cl^d = i$ (we can always swap $m$ and
$\mbar$ by the obvious transformation to satisfy this condition). 

The scalar functions corresponding to the connection coefficients
of of the null tetrad are defined, with translation to the
Newman-Penrose formalism, in Table \ref{TableRicciRotation}. The
$\Gamma$-notation is defined by 
\[ \Gamma_{\alpha\beta\gamma} = g(\nabla_{e_\gamma}e_\beta,e_\alpha)
\]
where for $e_1 = m$, $e_2 = \mbar$, $e_3 = \lbar$, and $e_4=l$. It is
clear that $\Gamma_{(\alpha\beta)\gamma} = 0$, i.e. it is
antisymmetric in the first two indices. Two natural\footnote{Buyers
beware: the operations are only natural in so much as those geometric
statements that are agnostic to orientation of the frame vectors.
Indeed, both the under-bar and complex conjugation changes the sign of
the Levi-Civita symbol; while for the complex conjugation it is of
less consequence (since the complex conjugate of $-i$ is $i$, the
sign difference is most naturally absorbed), for the under-bar
operation one needs to take care in application to ascertain that
sign-changes due to, say, the Hodge star operator is not present in
the equation under consideration. In particular, generally
co\"ordinate independent geometric statements (such as the relations
to be developed in this section) will be compatible with consistent
application of the under-bar operations, while statements dependent on a
particular choice of foliation or frame will usually need to be
evaluated on a case-by-case basis.} operations are
then defined: the under-bar (e.g. $\theta\leftrightarrow\ubar\theta$)
corresponds to swapping the indices $3\leftrightarrow 4$ (e.g.
$\Gamma_{142}\leftrightarrow\Gamma_{132}$), and complex
conjugation (e.g. $\theta \leftrightarrow \bar\theta$) corresponds to
swapping the numeric indices $1\leftrightarrow 2$ (e.g. $\Gamma_{142}
\leftrightarrow \Gamma_{241}$). 
\begin{table}
\begin{tabular}{|c|c|c|c|}
\hline
& $\Gamma$-notation & Newman-Penrose & Ionescu-Klainerman\\
\hline
$g(\nabla_\mbar l,m)$ & $\Gamma_{142}$ & $-\rho$ & $\theta$ \\
$g(\nabla_\mbar\lbar,m)$ & $\Gamma_{132}$ & $\bar{\mu}$ & $\ubar\theta$\\
$g(\nabla_m l,m)$ & $\Gamma_{141}$ & $-\sigma$ & $\vartheta$\\
$g(\nabla_m \lbar,m)$ & $\Gamma_{131}$ & $\bar\lambda$ & $\ubar\vartheta$\\
$g(\nabla_l l,m)$ & $\Gamma_{144}$ & $-\kappa$ & $\xi$\\
$g(\nabla_\lbar\lbar,m)$ & $\Gamma_{133}$ & $\bar\nu$ & $\ubar\xi$\\
$g(\nabla_\lbar l,m)$ & $\Gamma_{143}$ & $-\tau$ & $\eta$\\
$g(\nabla_l \lbar,m)$ & $\Gamma_{134}$ & $\bar\pi$ & $\ubar\eta$\\
$g(\nabla_l l,\lbar)$ & $\Gamma_{344}$ & $-2\epsilon + \Gamma_{214}$ &
$\omega$\\
$g(\nabla_\lbar\lbar,l)$ & $\Gamma_{433}$ & $2\gamma + \Gamma_{123}$ &
$\ubar\omega$ \\
$g(\nabla_m l,\lbar)$ & $\Gamma_{341}$ & $-2\beta + \Gamma_{211}$ &
$\zeta = -\ubar\zeta$\\
\hline
\end{tabular}
\caption{\label{TableRicciRotation}Dictionary of Ricci rotation
coefficients vs.~Newman-Penrose spin coefficients
vs.~Ionescu-Klainerman connection coefficients}
\end{table}
We note that
$\theta,\ubar\theta,\vartheta,\ubar\vartheta,\xi,\ubar\xi,\eta,\ubar\eta,\zeta$
are complex-valued, while $\omega$ and $\ubar\omega$ are real-valued;
thus the connection-coefficients defined above, along with
complex-conjugation, defines 20 out of the 24 rotation coefficients:
the only ones not given a ``name'' are
$\Gamma_{121},\Gamma_{122},\Gamma_{123},\Gamma_{124}$, among which the
first two are related by complex-conjugation, and the latter-two by
under-bar.

The directional derivative operators are given by:
\[ D = l^a\nabla_a, \ubar{D} = \lbar^a\nabla_a, \delta = m^a\nabla_a,
\bar\delta = \mbar^a\nabla_a \]
(their respective symbols in Newman-Penrose notation are
$D,\Delta,\delta,\bar\delta$).

The spinor components of the Riemann curvature tensor can be given in
terms of the following: let $W_{abcd}$ be the Weyl curvature tensor,
$S_{ab}$ be the traceless Ricci tensor, and $R$ be the scalar
curvature, we can write
\begin{subequations}
\begin{align}
\Psi_2 &= W(l,m,l,m)\\
\bar\Psi_{-2} = \ubar\Psi_2 &= W(\lbar,m,\lbar,m)\\
\Psi_1 &= W(m,l,\lbar,l)\\
\bar\Psi_{-1} = \ubar\Psi_1 &= W(m,\lbar,l,\lbar)\\
\Psi_0 &= W(\mbar,\lbar,m,l)\\
\Phi_{11} &= S(l,l) \\
\ubar\Phi_{11} &= S(\lbar,\lbar) \\
\Phi_{01} &= S(m,l) \\
\ubar\Phi_{01} &= S(m,\lbar)\\
\Phi_{00} &= S(m,m)\\
\Phi_0 &= \frac12[S(l,\lbar) + S(m,\mbar)] 
\end{align}
\end{subequations}
Notice that the quantities $\Psi_A$, $A\in\{-2,-1,0,1,2\}$ are
automatically anti-self-dual: replacing $W_{abcd} \leftrightarrow
{}^*W_{abcd}$ we have $\Psi_A({}^*W) = (-i)\Psi_A(W)$, this follows
from the orthogonality properties of the null tetrad, as well as the
orientation requirement $\epsilon(m,\mbar,\lbar,l) = i$.  
Using this notation, we can write the \emph{null structure equations},
which are equivalent to the Newman-Penrose equations. We derive them
from the definition of the Riemann curvature tensor:
\[ R_{\alpha\beta\mu\nu} = e_\mu(\Gamma_{\alpha\beta\nu}) -
e_\nu(\Gamma_{\alpha\beta\mu}) +
\Gamma^\rho{}_{\beta\nu}\Gamma_{\alpha\rho\mu} -
\Gamma^\rho{}_{\beta\mu}\Gamma_{\alpha\rho\nu} +
(\Gamma^\rho{}_{\mu\nu} -
\Gamma^\rho{}_{\nu\mu})\Gamma_{\alpha\beta\rho} \]
and that
\[ R_{\alpha\beta\mu\nu} = W_{\alpha\beta\mu\nu} +
\frac12(S_{\alpha\mu}g_{\beta\nu} + S_{\beta\nu}g_{\alpha\mu} -
S_{\alpha\nu}g_{\beta\mu} - S_{\beta\mu}g_{\alpha\nu}) +
\frac{1}{12}R(g_{\alpha\mu}g_{\beta\nu} - g_{\beta\mu}g_{\alpha\nu})
\]
So from $R_{1441} = W_{1441} = -\Psi_2$ we get
\begin{subequations}
\begin{equation}
(D + 2\Gamma_{124})\vartheta - (\delta +\Gamma_{121})\xi = \xi(2\zeta
+ \eta + \ubar\eta) - \vartheta(\omega +\theta +\bar\theta) - \Psi_2
\end{equation}
by taking under-bar of the whole expression, we get for a similar
expression for $R_{1331} = -\ubar\Psi_2$ (in the interest of space, we
omit the obvious changes of variables here). For $R_{1442} =
-\frac12S_{44}$ (and analogously $R_{1332} = -\frac12S_{33}$) we have
\begin{equation}
D\theta - (\bar\delta + \Gamma_{122})\xi = -\theta^2 - \omega\theta
-\vartheta\bar\vartheta + \bar\xi\eta + \xi(2\bar\zeta +
\bar{\ubar\eta}) - \frac12 \Phi_{11}
\end{equation}
From $R_{1443} = -\Psi_1 - \frac12S_{14}$
\begin{equation}
(D+\Gamma_{124})\eta - (\ubar{D}+\Gamma_{123})\xi = -2\ubar\omega\xi +
\theta(\ubar\eta-\eta) + \vartheta(\bar{\ubar\eta} - \bar\eta) -
\Psi_1 - \frac12\Phi_{01}
\end{equation}
From $R_{1431} = \frac12 S_{11}$ we get
\begin{equation}
(\ubar{D} + 2\Gamma_{123})\vartheta - (\delta + \Gamma_{121})\eta =
\eta^2 + \xi\ubar\xi - \theta\ubar\vartheta + \vartheta(\ubar\omega -
\bar{\ubar\theta}) + \frac12 \Phi_{00}
\end{equation}
From $R_{1432} = -\Psi_0 + \frac{1}{12}R$
we have
\begin{equation}
\ubar{D}\theta - (\bar\delta + \Gamma_{122})\eta = \xi\bar{\ubar\xi} +
\eta\bar\eta - \vartheta\bar{\ubar\vartheta} + \theta(\ubar\omega -
\ubar\theta) - \Psi_0 + \frac{R}{12}
\end{equation}
From $R_{1421} = -\Psi_1 + \frac12S_{41}$ we have
\begin{equation}
(\bar\delta + 2\Gamma_{122})\vartheta - \delta\theta = \zeta\theta -
\bar\zeta\vartheta + \eta(\theta-\bar\theta) + \xi(\ubar\theta -
\bar{\ubar\theta}) - \Psi_1 + \frac12\Phi_{01}
\end{equation}
Using $R_{3441} = -\Psi_1 - \frac12S_{41}$ we get
\begin{equation}
(D+\Gamma_{124})\zeta - \delta\omega = \omega(\zeta+\ubar\eta) +
\bar\theta(\ubar\eta - \zeta) + \vartheta(\bar{\ubar\eta} - \bar\zeta)
- \xi(\bar{\ubar\theta}+\ubar\omega) - \bar\xi\ubar\vartheta - \Psi_1
- \frac12\Phi_{01}
\end{equation}
From $R_{3443} = \Psi_0 + \bar{\Psi}_0 -S_{34} + \frac{R}{12}$ we
get
\begin{equation}
D\ubar\omega + \ubar{D}\omega = \bar\xi \ubar\xi + \xi\bar{\ubar\xi}
-\bar\eta\ubar\eta - \eta\bar{\ubar\eta} + \zeta(\bar\eta -
\bar{\ubar\eta}) + \bar\zeta(\eta-\ubar\eta) - (\Psi_0 + \bar\Psi_0) +
\Phi_0 - \frac{R}{12}
\end{equation}
and lastly from $R_{3421} = \Psi_0 - \bar\Psi_0 $ we have
\begin{equation}
(\delta - \Gamma_{121})\bar\zeta - (\bar\delta + \Gamma_{122})\zeta =
(\bar\vartheta\ubar\vartheta - \vartheta\bar{\ubar\vartheta}) +
(\theta\bar{\ubar\theta} - \bar\theta\ubar\theta) +
\ubar\omega(\theta-\bar\theta) - \omega(\ubar\theta -
\bar{\ubar\theta}) - (\Psi_0 - \bar\Psi_0)
\end{equation}
\end{subequations}

In this formalism, we can also write the Maxwell equations: let
\begin{subequations}
\begin{align}
\Upsilon_0 &= \frac12 (H(l,\lbar) + H(\mbar,m)) =
\mathcal{H}_{ab}l^a\lbar^b\\
\Upsilon_1 &= H(l,m)= \mathcal{H}_{ab}l^am^b\\
\bar\Upsilon_{-1} = \ubar\Upsilon_1 &= H(m,\lbar) =
\bar{\mathcal{H}}_{ab}m^a\lbar^b
\end{align}
\end{subequations}
be the spinor components of the Maxwell two-form $H_{ab}$. Maxwell's
equations becomes
\begin{subequations}
\begin{align}
\ubar{D}\Upsilon_0 - (\delta-\Gamma_{121})\Upsilon_{-1} &=
\bar{\ubar\xi}\Upsilon_1 - 2\bar{\ubar\theta}\Upsilon_0 -
(\zeta-\eta)\Upsilon_{-1}\\
(\ubar{D}+\Gamma_{123})\Upsilon_1 - \delta\Upsilon_0 &=
(\ubar\omega-\bar{\ubar\theta})\Upsilon_1 + 2\eta\Upsilon_0 -
\vartheta\Upsilon_{-1}
\end{align}
\end{subequations}
and their under-bar counterparts. 

We also need the Bianchi identities
\[ \nabla_{[e}R_{ab]cd} = 0 \]
Note that this implies
\[ \nabla^eW_{ebcd} = \nabla_{[c}S_{d]b} - \frac{1}{12}g_{b[c}\nabla_{d]}R
=: J_{bcd} \]
which gives
\[ \nabla_{[e}W_{ab]cd} = \frac16\epsilon_{seab}J^{srt}\epsilon_{rtcd}
\]
using the orientation condition $\epsilon(m,\mbar,\lbar,l) = i$ we
calculate
\begin{subequations}
\begin{align}
(\bar\delta + 2\Gamma_{122})\Psi_2 &- (D+\Gamma_{124})\Psi_1 +
\frac12\delta\Phi_{11} - \frac12(D+\Gamma_{124})\Phi_{01} \\
\nonumber & =
-(2\bar\zeta + \bar{\ubar\eta})\Psi_2 + (4\theta+\omega)\Psi_1 +
3\xi\Psi_0 \\ \nonumber & \qquad -
(\bar{\theta}+\frac12\omega)\Phi_{01} -
\vartheta\bar{\Phi}_{01} + (\zeta + \frac12\ubar\eta)\Phi_{11} +
\xi\Phi_0 + \frac12\bar\xi\Phi_{00}\\
(\ubar{D}+2\Gamma_{123})\Psi_2 & - (\delta+\Gamma_{121})\Psi_1 +
\frac12(D+2\Gamma_{124})\Phi_{00} - \frac12(\delta +
\Gamma_{121})\Phi_{01} \\
\nonumber & = (2\ubar\omega - \bar{\ubar\theta})\Psi_2 + (\zeta +
4\eta)\Psi_1 + 3\vartheta\Psi_0 \\
\nonumber & \qquad - \frac12\bar\theta\Phi_{00} - \vartheta\Phi_0 -
\frac12\ubar\vartheta\Phi_{11} +\xi\ubar\Phi_{01} +
(\frac12\zeta+\ubar\eta)\Phi_{01}\\
- (\bar\delta +\Gamma_{122})\Psi_1 & -D\Psi_0 - \frac12 D\Phi_0 +
\frac12(\delta-\Gamma_{121})\bar\Phi_{01} - \frac{1}{24}DR\\
\nonumber & = -\bar{\ubar\vartheta}\Psi_2 +
(2\bar{\ubar\eta}+\bar\zeta)\Psi_1 + 3\theta\Psi_0 +
2\xi\bar{\ubar\Psi}_1 \\
\nonumber & \qquad - \frac12(\zeta+\ubar\eta)\bar\Phi_{01} +
\bar\theta\Phi_0 + \frac12\bar{\ubar\theta}\Phi_{11} +
\frac12\vartheta\bar\Phi_{00} \\
\nonumber & \qquad - \frac12 \bar\xi\ubar\Phi_{01} -
\frac12\bar{\ubar\eta}\Phi_{01} - \frac12 \xi \bar{\ubar\Phi}_{01}\\
(D + \Gamma_{124})\ubar\Psi_1 & + \delta\bar\Psi_0 +
\frac12(\ubar{D}+\Gamma_{123})\Phi_{01} - \frac12\delta\Phi_0 +
\frac{1}{24}\delta R\\
\nonumber & = -2\ubar\vartheta\bar\Psi_1 - 3\ubar\eta\bar\Psi_0 +
(\omega-2\bar\theta)\ubar\Psi_1 + \bar\xi\ubar\Psi_2\\
\nonumber & \qquad + \frac12(\ubar\omega-\bar{\ubar\theta})\Phi_{01}
-\frac12\bar\theta\ubar\Phi_{01} -
\frac12\vartheta\bar{\ubar\Phi}_{01} -
\frac12\ubar\vartheta\bar\Phi_{01} \\
\nonumber & \qquad + \frac12\bar\eta\Phi_{00} +
\eta\Phi_0
\end{align}
In addition, we can also take the trace of the Bianchi identities,
which gives
\[ 0 = \nabla^eW_{ebc}{}^b = J_{bc}{}^b \]
and evaluates to 
\begin{align}
-\delta\Phi_0 &- (\bar\delta+2\Gamma_{122})\Phi_{00} +
(\ubar{D}+\Gamma_{123})\Phi_{01} +
(D+\Gamma_{124})\ubar\Phi_{01} + \frac14 \delta R \\
\nonumber &= (\bar\eta + \bar{\ubar\eta})\Phi_{00} +
2(\eta+\ubar{\eta})\Phi_0 + ( \omega -2\theta -\bar\theta) \ubar\Phi_{01}
+ (\ubar\omega - 2\ubar\theta - \bar{\ubar\theta}) \Phi_{01}\\
\nonumber & \qquad -\vartheta\bar{\ubar\Phi}_{01} -\ubar\vartheta \bar\Phi_{01}
+\xi\ubar\Phi_{11} + \ubar\xi \Phi_{11}\\
D\Phi_0 &+ \ubar{D}\Phi_{11} -(\delta-\Gamma_{121})\bar\Phi_{01} - (\bar\delta+\Gamma_{122})\Phi_{01} + \frac14 DR\\
\nonumber &= -\bar\vartheta\Phi_{00} -2(\bar\theta +
\theta) \Phi_0 + \bar\xi \ubar\Phi_{01} +(\bar\zeta + 2\bar\eta +
\bar{\ubar\eta})\Phi_{01} -\vartheta\bar\Phi_{00}\\
\nonumber & \qquad +\xi \bar{\ubar\Phi}_{01} + (\zeta + 2\eta +
\ubar\eta)\bar\Phi_{01} + (2\ubar\omega -\ubar\theta -
\bar{\ubar\theta}) \Phi_{11}
\end{align}
\end{subequations}
A simple identification using Table \ref{TableRicciRotation} and 
the definitions for various spinor components of the Riemann and
traceless Ricci tensors shows that one can recover all of the Bianchi
identities in Newman-Penrose formalism from the above six equations
through the action of complex-conjugation and under-barring. 

Lastly, to complete the formalism, we record the commutator relations
\begin{subequations}
\begin{align}
[D,\ubar{D}] &=(\ubar\eta - \eta)\bar\delta + (\ubar{\bar\eta} -
\bar\eta)\delta - \ubar\omega D + \omega \ubar{D}\\
[D,\delta] &= - \vartheta\bar\delta - (\Gamma_{124} +
\bar\theta)\delta + (\ubar{\eta}+\zeta)D + \xi\ubar{D}\\
[\delta,\bar{\delta}] &= \Gamma_{121}\bar\delta
+ \Gamma_{122}\delta +(\ubar{\bar\theta}-\ubar\theta)D
+(\bar\theta-\theta)\ubar{D}
\end{align}
\end{subequations}

\bibliographystyle{amsalpha}
\bibliography{KerrNewmanBib.bib}

\end{document}